\newtheorem{theorem}{Theorem}[section]
\newtheorem{lemma}[theorem]{Lemma}
\newtheorem{definition}[theorem]{Definition}
\newcommand{\ad}{a^{\dagger}}
\newcommand{\one}{\mathds{1}}
\newcommand{\tr}{\mbox{tr}}
\newcommand{\spec}{\mbox{spec}}
\newcommand{\bracket}{\rangle \langle}
\newcommand{\Pd}{P^{\dagger}}
\newcommand{\Qd}{Q^{\dagger}}
\newcommand{\Na}{\hat{N}_a}
\newcommand{\pdm}{p^{\dagger}_M}
\newcommand{\ppm}{p_M}
\newcommand{\qdm}{q^{\dagger}_M}
\newcommand{\qm}{q_M}
\newcommand{\Nb}{\hat{N}_b}
\newcommand{\vac}{|0\rangle}
\newcommand{\bn}{\bar{N}}
\newcommand{\sn}{\sigma_{\bar{N}}}
\newcommand{\cA}{\mathcal{A}}
\newcommand{\cS}{\mathcal{S}}
\renewcommand{\mathds}[1]{\mathbbm{#1}}
\newcommand{\ket}[1]{\left| #1\right>}
\newcommand{\ketbra}[2]{\vert #1\rangle \! \langle #2\vert}
\newcommand{\proj}[1]{\ketbra{#1}{#1}}
\begin{document}

\title{Pairing in fermionic systems: A quantum information perspective
}

\author{Christina V. Kraus$^1$, Michael M. Wolf$^{1,2}$, J. Ignacio Cirac$^1$
  and G\'eza Giedke$^1$}
\affiliation{(1) Max-Planck-Institute for Quantum Optics,
Hans-Kopfermann-Str.\ 1, D-85748 Garching, Germany.\\
(2) Niels-Bohr--Institute, Copenhagen University, Blegdamsvej 17, DK-2100
Copenhagen \O, Denmark}

\begin{abstract}
  The notion of ''paired'' fermions is central to important condensed matter
  phenomena such as superconductivity and superfluidity. While the concept is
  widely used and its physical meaning is clear there exists no
  systematic and mathematical theory of pairing which would allow
  to unambiguously characterize and systematically detect paired states. We
  propose a definition of pairing and develop methods for its detection and
  quantification applicable to current experimental setups.  Pairing is shown
  to be a quantum correlation different from entanglement, giving further
  understanding in the structure of highly correlated quantum systems. In
  addition, we will show the resource character of paired states for precision
  metrology, proving that the BCS states allow phase measurements at the
  Heisenberg limit.
\end{abstract}

\date{\today}

\maketitle
\section{Introduction}
The notion of pairing in fermionic systems is at least as old as the seminal
work of Bardeen, Cooper and Schrieffer explaining superconductivity
\cite{PhysRev.108.1175}. The formation of fermionic pairs with opposite spin
and momentum is not only the source for the vanishing resistance in solid
state systems, but it can also explain many other interesting phenomena, like
superfluidity in helium-3 or inside a neutron star.

For instance, with recent progress in the field of ultracold
quantum gases fermionic pairing has gained again a lot of
attention \cite{ReaglmBEC, JochimmBEC,
  PhysRevLett.91.250401,Vortices, SF-direct,partridge:020404,
  zwierlein:120403, regal:040403}. These experiments allow an excellent
control over many parameters inherent to the system, offering a unique testing
ground for existing theories and an exploration of new and exotic phases.
However, the notion of pairing in these systems is less clear and sometimes
even controversial. Recent experiments on the BEC-BCS crossover have caused a
heated debate whether or not the obtained data was in agreement with pairing
\cite{partridge:020404,partridge_27, zwierlein_response,
  patridge_response}. In addition, pairing without superfluidity
\cite{Pairing-no-SF} has been observed in these experiments, raising
fundamental questions on quantum correlations in fermionic many-body systems.

Motivated by these exciting experiments and the central role
pairing plays in many physical phenomena, and by the perceived
lack of accepted criteria to verify the presence of pairing in a
quantum state, we propose a clear and unambiguous definition of
pairing intended to capture its two-particle nature and to allow a
systematic study of the set of paired states and its properties.
We employ methods and tools from quantum information theory to
gain a better understanding of the set of fermionic states that
display pairing. In particular, we develop tools for the
systematic detection and for the quantification of pairing, which
are applicable to current experiments. Our approach is inspired by
concepts and methods from entanglement theory, thus building a
bridge between quantum information science and condensed matter
physics.

Since they contain non-trivial quantum correlations, paired states belong to
the set of entangled many-body states. However, pairing will turn out to be
not equivalent to any known concept of entanglement in systems of
indistinguishable particles \cite{PhysRevA.65.042101,
  ZanardiJP,larsson:042320, wolf:010404, gioev:100503, cramer:220603,
  PhysRevA.67.024301, PhysRevA.64.042310, banuls:022311,
  PhysRevLett.91.097902,PhysRevA.64.022303, ESBL-Indistinguishable,
  dowling:052323,PhysRevB.63.085311, PhysRevA.70.042310,PhysRevLett.92.087904}
but to represent a particular type of quantum correlation of its
own. We will show that these correlations can be exploited for
quantum phase estimation. Hence pairing constitutes a resource in
state estimation using fermions as much as entangled states with
spins.

This article is organized as follows: After the introduction of
the language necessary for the description of fermionic systems in
Sec. \ref{Classification-section}, we will introduce the general
framework of pairing theory in Sec. \ref{Definition-Section}. This
part includes our definition of pairing and methods for its
detection and quantification. In order to fill the theory with
life we will apply it to two different classes of fermionic states
in Secs. \ref{Gaussian-Pairing-section} and
\ref{Number-conserving-section}. We start out with pairing in
fermionic Gaussian states in Sec. \ref{Gaussian-Pairing-section}.
The interest in this family of states is two-fold. First, the
pairing problem can be solved completely in this case, so that
Gaussian states are particularly interesting from a conceptual
point of view. Second, there exists a relation between pure
fermionic Gaussian states and the BCS states of superconductivity
(see Sec. \ref{Connection-N-NN} for the details) which are
examples of paired states par excellence. This enables us to
translate methods developed for the detection and quantification
of pairing for Gaussian states to the BCS-states. The reader
interested in the application of our pairing theory to
experimental application is referred to Sec.
\ref{Number-conserving-section}. There we study pairing for number
conserving states, i.e. states commuting with the number operator.
This class includes the states appearing in the BEC-BCS crossover,
and we will develop tools for the detection of pairing tailored
for these systems. In \ref{Interferometer-section} we will show
that certain classes of paired states constitute a resource for
quantum phase estimation, proving that pairing is a resource
similar to entanglement.

\section{Fermionic states}\label{Classification-section}
In this chapter we review the basic concepts needed for the
understanding of fermionic systems. We start out with some
notation used for the description of fermionic systems in second
quantization in Sec. \ref{Notation-section}. As pairing is a
special sort of correlation, we continue with a review on quantum
correlations and entanglement in systems of indistinguishable
particles in Sec. \ref{Correlation-section}. This general part is
followed by the introduction of fermionic Gaussian states and
number conserving states in Secs. \ref{Gaussian-section} and
\ref{Connection-N-NN}. The latter includes the introduction of
BCS-states and their relation to the Gaussian states. As this part
is only necessary for the application of the pairing theory to
these concrete examples in Secs. \ref{Gaussian-Pairing-section}
and \ref{Number-conserving-section}, it is possible to skip this
part at the beginning, and then refer to it later on.

\subsection{Basic notation}\label{Notation-section}
We consider fermions on an $M$-dimensional single particle Hilbert
space $\mathcal{H}=\mathds{C}^M$. All observables are generated by
the creation and annihilation operators $\ad_j$ and $a_j, j=1,
\ldots, M$, which satisfy the canonical anti-commutation relations
(CAR) $\{a_k, a_l\}=0$ and $\{a_k, \ad_l\}=\delta_{kl}$. We say
$\ad_j$ creates a particle in mode (or single particle state)
$e_j$, where $\{e_j\}\subset\mathds{C}^M$ denotes the canonical
orthonormal basis of $\mathcal{H}$. In general, for any normalized
$f\in\mathcal{H}$, we define $a_f \equiv \sum_kf_ja_j$, the
annihilation operator for mode $f$.

Sometimes a description using the $2M$ hermitian Majorana operators $c_{2j-1}=
\ad_j+a_j$, $ c_{2j}=(-i)(\ad_j-a_j)$, which satisfy $\{c_k,c_l\}=2
\delta_{kl}$, is more convenient.

The Hilbert space of the many-body system, the antisymmetric Fock
space over $M$ modes, $\mathcal{A}_M$, is spanned by the
orthonormal Fock basis defined by
\begin{equation}\label{eq:fockbasis}
|n_1, \ldots, n_M \rangle = \left(\ad_1 \right)^{n_1} \ldots
\left(\ad_M \right)^{n_M} |0 \rangle,
\end{equation}
where the vacuum state $|0 \rangle$ fulfills $a_j |0\rangle =0\; \forall j$.
The $n_j \in \{0,1\}$ are the eigenvalues of the the mode occupation number
operators $n_j = \ad_j a_j$.  The $N$-particle subspace spanned by vectors of
the form (\ref{eq:fockbasis}) satisfying $\sum_in_i=N$ is denoted by
$\mathcal{A}_M^{(N)}$. The set of density operators on the Hilbert space
${\cal H}=\mathcal{A}_M,\mathcal{A}_M^{(N)}$ is
denoted by $\cS({\cal H})$.

Linear transformations of the fermionic operators which preserve
the CAR are called canonical transformations. They are of the form
$c_k\mapsto c_k^{\prime}= \sum_i O_{kl} c_l,$ where $O \in O(2M)$
is an element of the real orthogonal group. These transformations
can be implemented by unitary operations $U_O$ on $\cA_M$ which
are (for $\det O=1$) generated by quadratic Hamiltonians in the
$c_j$ (see, e.g. \cite{LinearOptics}). The subclass of canonical
operations which commute with the total particle number
$N_\mathrm{op}=\sum_in_i$ are called passive transformations. They
take a particularly simple form in the complex representation $a_k
\mapsto a_k^{\prime}=\sum_l U_{kl}a_l$, where $U$ is unitary on
the single-particle Hilbert space $\mathcal{H}$, i.e., they
describe (quasi)free time-evolution of independent particles.
Canonical transformations which do not commute with
$N_\mathrm{op}$ are called active. They mix creation and
annihilation operators.

\subsection{Quantum correlations of fermionic
states}\label{Correlation-section} The notion of ''pairing''  used
in the description of superconducting solids, superfluid liquids,
baryons in nuclei, etc. is always associated with a correlated
fermionic system. The subject of quantum correlations in fermionic
systems is vast \cite{Mahan00}. In recent years, there has been
renewed interest from the perspective of quantum information
theory. There quantum correlations (aka entanglement) of
distinguishable systems (qubits) play a crucial role as a resource
enabling certain state transformations or information processing
tasks. The detailed quantitative analysis of quantum correlations
motivated by this has proven to be valuable also in the
understanding of condensed matter systems (see \cite{amico:517}
for a review).

In contrast to the usual quantum information setting which studies the
entanglement of distinguishable particles, the indistinguishable nature of the
fermions is of utmost importance in the settings of our interest. The existing
concepts for categorizing entanglement in systems of indistinguishable
particles fall into two big classes: Entanglement of modes
\cite{PhysRevA.65.042101,ZanardiJP,larsson:042320,wolf:010404,gioev:100503,cramer:220603,PhysRevA.67.024301,PhysRevA.64.042310,banuls:022311}
and entanglement of particles. Entanglement of particles has been considered
e.g. in \cite{PhysRevLett.91.097902, PhysRevA.64.022303,
  ESBL-Indistinguishable, dowling:052323, PhysRevB.63.085311,
  PhysRevA.70.042310, PhysRevLett.92.087904}, leading to the concept of Slater
rank \cite{PhysRevA.64.022303, ESBL-Indistinguishable}, being the
generalization of the Schmidt rank to indistinguishable particles. We show in
Sec.  \ref{Definition-Section} that our definition of pairing does not
coincide with any of the existing ideas. We refrain from giving an exhaustive
review on the existing concepts referring the interested reader to the
mentioned literature and references therein, and restrict to the following
definition:
\begin{definition}\label{DefSep}
  A pure fermionic state $\rho_p^{(N)}=|\Psi_p^{(N)}\bracket \Psi_p^{(N)}| \in
  \mathcal{S}(\mathcal{A}_M^{(N)})$ is called a \emph{product state}, if there
  exists a passive transformation $a_k\mapsto a_k'$ such that
\begin{equation}\label{product-state}
|\Psi_p^{(N)}\rangle = \prod_{j=1}^{N}a'^\dag_j |0\rangle.
\end{equation}
A state $\rho_s$ is called \emph{separable}, if it can be written as
the convex combination of product states, i.e.
\begin{equation}\label{separable state}
\rho_s = \sum_{p=1}^K \lambda_p \rho_p^{(N_p)},
\end{equation}
where $\sum_{p=1}^K \lambda_p =1$, $\lambda_p \geq 0$ and all
$\rho_p^{(N_p)} \in \mathcal{S}(\mathcal{A}_M^{(N_p)})$ are
product states. All other states are said to have ''Slater number
larger than 1'' and are called \emph{entangled} (in the sense of
\cite{PhysRevA.64.022303,ESBL-Indistinguishable}).

We denote the set of all separable states by $\mathcal{S}_{sep}$ and by
$\mathcal{S}^{(N)}_{sep} \equiv \mathcal{S}_{sep} \cap
\mathcal{S}(\mathcal{A}_M^{(N)})$ the set of all separable states of particle
number $N$.
\end{definition}
Note that the sets $\mathcal{S}_{sep}, \mathcal{S}_{sep}^{(N)}$ of separable
states are convex and invariant under passive transformations. Both properties
will be useful later on.\\
Separable states have only correlations resulting from
their anti-symmetric nature and classical correlations due to mixing. In the
terminology of Refs.~\cite{PhysRevA.64.022303,ESBL-Indistinguishable} they
have Slater number one and describe unentangled particles. These states will
certainly not contain correlations associated with pairing. (Note that they
can be mode-entangled for an appropriate partition of modes.)

Besides basis change, there are other operations, which do not create quantum
correlations and it is useful to see that the set of separable states is
invariant under them.
%
\begin{lemma}\label{le:sep-trace}
Let $\rho\in \mathcal{S}_{sep}$ be a separable state. Then
the state after measuring the particle number $n_h=\ad_h a_h$ in some mode $h$
is separable for both possible outcomes $n_h=0,1$. \\
Furthermore, $\rho_{h}\equiv \tr_{a_h}[\rho]$, the reduced state
obtained by tracing out the mode  $a_h$, is also separable.
\end{lemma}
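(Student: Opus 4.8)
The plan is to reduce both statements to the case of a single pure product state and then to exploit the freedom in choosing an orthonormal basis of the occupied single-particle subspace. Both the post-measurement map (at a fixed outcome) and the partial trace are compatible with the convex structure of $\mathcal{S}_{sep}$: writing a general separable state as $\rho=\sum_p\lambda_p\rho_p^{(N_p)}$ with $\rho_p^{(N_p)}=\proj{\Psi_p^{(N_p)}}$ product states, the unnormalized post-measurement operators obey $P_s\rho P_s=\sum_p\lambda_p\,P_s\rho_p^{(N_p)}P_s$ for the projectors $P_1=n_h$, $P_0=\one-n_h$, and likewise $\tr_{a_h}[\rho]=\sum_p\lambda_p\,\tr_{a_h}[\rho_p^{(N_p)}]$ by linearity of the partial trace. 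After normalization the first is a convex combination of the individual post-measurement states with rescaled weights $\lambda_p\,\tr(P_s\rho_p^{(N_p)})/\tr(P_s\rho)$ (dropping vanishing terms). Hence it suffices to prove both claims for one product state $\ket{\Psi}=\prod_{j=1}^{N}a_{f_j}^{\dagger}\vac$, where $\{f_j\}$ is orthonormal and spans the occupied subspace $V\subset\mathcal{H}$.

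Next I would adapt the basis of $V$ to the mode $h$, i.e.\ to the single-particle vector $e_h$. If $P_V e_h\neq 0$ I set $g_1=P_V e_h/\lVert P_V e_h\rVert$ and complete it to an orthonormal basis $g_1,\dots,g_N$ of $V$; then $g_2,\dots,g_N$ are orthogonal to $e_h$, so their creation operators anticommute cleanly with $a_h$ and $a_h^{\dagger}$. Writing $g_1=\alpha e_h+\beta e_\perp$ with $e_\perp\perp e_h$ and $|\alpha|^2+|\beta|^2=1$, one checks (using $g_j\perp g_1$) that $e_\perp,g_2,\dots,g_N$ are again orthonormal and orthogonal to $e_h$. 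Substituting $a_{g_1}^{\dagger}=\alpha a_h^{\dagger}+\beta a_{e_\perp}^{\dagger}$ gives
\[
\ket{\Psi}=\alpha\, a_h^{\dagger}\,a_{g_2}^{\dagger}\cdots a_{g_N}^{\dagger}\vac
          +\beta\, a_{e_\perp}^{\dagger} a_{g_2}^{\dagger}\cdots a_{g_N}^{\dagger}\vac ,
\]
a sum of two Slater determinants, one with mode $h$ occupied and one with it empty. For the measurement this settles everything at once: $P_1\ket{\Psi}$ and $P_0\ket{\Psi}$ are, up to normalization, exactly the first and second term, each a product state of $N$ orthonormal modes, hence separable. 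The degenerate cases $P_V e_h=0$ (mode empty with certainty) and $e_h\in V$ (occupied with certainty) are immediate.

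For the partial trace I would reuse the same decomposition. Isolating mode $h$ identifies $\cA_M\cong\cA_{M-1}\otimes\mathbb{C}^2$, the first factor being the Fock space of the $M-1$ modes orthogonal to $e_h$; under this identification the two terms read $\alpha\,\ket{\chi}\otimes\ket{1}_h$ and $\beta\,\ket{\phi}\otimes\ket{0}_h$, with $\ket{\chi}=a_{g_2}^{\dagger}\cdots a_{g_N}^{\dagger}\vac$ and $\ket{\phi}=a_{e_\perp}^{\dagger}\ket{\chi}$ both living in $\cA_{M-1}$. Tracing out mode $h$ annihilates the cross terms through $\langle 0|1\rangle_h=0$ and leaves $\rho_h=|\alpha|^2\proj{\chi}+|\beta|^2\proj{\phi}$, a mixture of an $(N-1)$- and an $N$-particle product state, hence separable. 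The one step requiring genuine care --- and the main obstacle --- is precisely this fermionic tensor factorization and the vanishing of the cross terms: the remaining occupied modes must be disentangled from $h$ before the Jordan--Wigner signs can be controlled, which is exactly what the adapted basis $g_1,\dots,g_N$ accomplishes, since all of $g_2,\dots,g_N,e_\perp$ were arranged to anticommute trivially with $a_h$ and $a_h^{\dagger}$.
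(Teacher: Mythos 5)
Your proposal is correct and follows essentially the same route as the paper: reduce to pure product states by convexity, write $\ket{\Psi}$ as the sum of its two $n_h$-eigencomponents, each again a Slater determinant of orthonormal modes, and observe that the partial trace over mode $h$ yields their statistical mixture. The only (cosmetic) difference is how the two-term splitting is produced --- you expand $a_{g_1}^{\dagger}$ with $g_1\propto P_V e_h$ inside the occupied subspace $V$, whereas the paper inserts $a_{f_{N+1}}\ad_{f_{N+1}}$ and expands the annihilation operator $a_{f_{N+1}}$ in an enlarged $(N+1)$-dimensional span --- and you additionally spell out the fermionic sign bookkeeping of the tensor factorization that the paper leaves implicit.
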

\begin{proof}
  As $\mathcal{S}_ {sep}$ is convex, it is sufficient to prove the claim for
  product states $\rho$. Let $|\Psi\rangle = \prod_{j=1}^N \ad_{f_j}\vac $ be
  the vector in Hilbert space corresponding to $\rho$. Our aim is to show that
  $|\Psi\rangle = |\Psi_0\rangle+ |\Psi_1\rangle$, where $|\Psi_l\rangle$ are
  product states and $n_h=l$ eigenstates of the occupation number operator
  $n_h$.  If $h$ is in the span of $\{f_{1\leq k\leq N}\}$ or orthogonal to
  it, the state already is a $n_h$ eigenstate and we are done. Otherwise,
  define $f_{N+1}$ orthogonal to the $f_{k\leq N}$ such that
  $h\in\mathrm{span}\{f_{1\leq k\leq N+1}\}$ and define another orthonormal
  basis $\{g_j\}$ for the span with $g_1=h$ and $g_2\propto f_{N+1}-(h\cdot
  f_{N+1})h$ [here $(h\cdot f_{N+1})$ denotes the inner product on the single
  particle Hilbert space]. Then we can write $|\Psi\rangle =
  a_{f_{N+1}}\ad_{f_{N+1}}\Pi_{j=1}^N\ad_{f_j}\vac =
  (xa_{g_1}+ya_{g_2})\Pi_{j=1}^{N+1}\ad_{g_j}\vac$ for some
  $x,y\in\mathds{C}$.  Hence, $|\Psi\rangle = |\Psi_0\rangle+ |\Psi_1\rangle$
  with $|\Psi_0\rangle = x\Pi_{j=2}^{N+1}\ad_{g_j}\vac$ and $|\Psi_1\rangle =
  -y\ad_{h}\Pi_{j=3}^{N+1}\ad_{g_j}\vac$ which both clearly are product
  states. The reduced state $\tr_h[|\Psi\bracket\Psi|]$ is the statistical
  mixture of $|\Psi_0\rangle$ and $|\Psi_1\rangle$ and therefore clearly
  separable.
\end{proof}

\subsection{Fermionic Gaussian states}\label{Gaussian-section}
Fermionic Gaussian states are represented by density operators that are
exponentials of a quadratic form in the Majorana operators. A general
multi-mode Gaussian state is of the form
\begin{equation}
  \label{eq:Gaussrho}
  \rho= K \exp\left[-\frac{i}{4} c^T G  c\right],
\end{equation}
where $c=(c_1, \ldots, c_{2M})$, $K$ is a normalization constant
and $G$ is a real anti-symmetric $2M \times 2M$ matrix.
Every anti-symmetric matrix can be brought to a block diagonal form
\begin{equation}\label{eq:GaussG}
OGO^T =\bigoplus_{j=1}^M \left(
                        \begin{array}{cc}
                          0 & -\beta_j \\
                          \beta_j & 0 \\
                        \end{array}
                      \right)
\end{equation}
by a special orthogonal matrix $O \in SO(2M)$.

From eq.~(\ref{eq:Gaussrho}) it is clear that Gaussian states have
an interpretation as thermal (Gibbs) states corresponding to a
Hamiltonian $H$ that is a quadratic form in the $c_k$, i.e.,
$H=\frac{i}{4} c^T G c = \frac{i}{4}\sum_{k>l}G_{kl}[c_k,c_l]$ and
the form eq.~(\ref{eq:GaussG}) shows that every Gaussian state has
a normal-mode decomposition in terms of $M$ single-mode ``thermal
states'' of the form $\sim\exp(-\beta \ad a)$. From this one can
see that the state is fully determined by the expectation values
of quadratic operators $a_ia_j$ and $\ad_ia_j$. These are
collected in a convenient form in the real and anti-symmetric
covariance matrix $\Gamma$ which is defined via
\begin{equation}\label{eq:CM}
\Gamma_{kl}=\frac{i}{2}\tr \left(\rho [c_k,c_l] \right).
\end{equation}
It can be brought into block diagonal form by a canonical transformation:
\begin{equation}\label{eq:GammaDiag}
  O \Gamma O^T = \bigoplus_{i=1}^M \left(
    \begin{array}{cc}
      0 & \lambda_j \\
      -\lambda_j & 0 \\
    \end{array}
  \right).
\end{equation}
For every valid density operator, $\lambda_j \in [-1,1]$, and the
eigenvalues of $\Gamma$ are given by $\pm i\lambda_j$. Hence,
every $\Gamma$ corresponding to a physical state has to fulfill
$i\Gamma\leq\one$ or, equivalently, $\Gamma\Gamma^\dag\leq\one$
and to each such $\Gamma$ corresponds a valid Gaussian density
operator where the relation between $G$ and $\Gamma$ is given by
$\lambda_j = \tanh(\beta_j/2)$.  The covariance matrix of the
ground state of $H$ is obtained in the limit $|\beta_j|\to\infty$
i.e., $\lambda_j\to\mathrm{sign}(\beta_j)$. In fact, this shows
that every pure Gaussian state is the ground state to some
quadratic Hamiltonian. The purity of the state can be easily
determined from the covariance matrix as a Gaussian state is pure
if and only if $\Gamma^2 = -\one$
(see, e.g., \cite{Gaussian}).\\
As mentioned, Gaussian states are fully characterized by their covariance
matrix and all higher correlations can be obtained from $\Gamma$ by Wick's
theorem (see, e.g., \cite{Gaussian}) via
\begin{equation}\label{Wick}
i^p \tr[\rho c_{j_1}\ldots c_{j_{2p}}] = \mbox{Pf}(\Gamma_{j_1,
\ldots, j_{2p}}),
\end{equation}
where $1 \leq j_1 < \ldots < j_{2p} \leq 2M$ and $\Gamma_{j_1,
\ldots, j_{2p}}$ is the corresponding $2p \times 2p$ submatrix of
$\Gamma$. $\mbox{Pf}(\Gamma_{j_1, \ldots, j_{2p}})^2= \mbox{det}
(\Gamma_{j_1, \ldots, j_{2p}})$ is called the Pfaffian.

In some cases it is more appropriate to use a different ordering
of the Majorana operators, the so-called q-p-ordering $c=(c_1,
c_3, \ldots, c_{2M-1}; c_2, c_4, \ldots, c_{2M})$, opposed to the
mode-ordering introduced at the beginning. When using the
q-p-ordering, the relation between the real and complex
representation is given by
\begin{equation}\label{trafo-c-a}
c^T = \Omega a^T, \;\;\; \Omega=\left(
                                  \begin{array}{cc}
                                    \mathds{1} & \mathds{1} \\
                                    i \mathds{1} & -i \mathds{1}\\
                                  \end{array}
                                \right),
\end{equation}
where $a = (a_1, \ldots, a_M, \ad_1, \ldots, \ad_M)$. The
transformation matrix $\Omega$ fulfills $\Omega\Omega^{\dagger} =
2\mathds{1}$.\\
In the q-p-ordering the covariance matrix obtains the following
block structure:
\begin{equation}\label{eq:Gammatilde}
\tilde\Gamma = \left(
           \begin{array}{cc}
             \Gamma_q & \Gamma_{qp} \\
             -\Gamma_{qp}^T & \Gamma_p \\
           \end{array}
         \right).
\end{equation}

Finally, for some purposes it is more convenient to use the complex
representation, where the covariance matrix is of the form
\begin{equation}\label{Gammac}
\Gamma_c = \frac{1}{4}\Omega^\dag\tilde\Gamma\bar{\Omega} =
\left(\begin{array}{cc}
Q & R \\
\bar{R} & \bar{Q} \\
\end{array}\right),
\end{equation}
where $Q_{kl} = \langle i/2 [a_k, a_l]\rangle$, $R_{kl}=\langle
i/2 [a_k, \ad_l]\rangle$ and $\bar{Q}$ denotes the complex
conjugate. Note that $R^{\dagger} = -R$ and $Q^T = -Q$ and hence
$\Gamma_c^T=-\Gamma_c$. The condition
$\tilde\Gamma\tilde{\Gamma}^\dag\leq\one$ takes the form
$4\Gamma_c\Gamma_c^\dag\leq\one$.

The description of $\rho$ by its covariance matrix is especially convenient
to describe the effect of canonical transformations, i.e. time evolutions
generated by quadratic Hamiltonians: if $c_k\mapsto \sum_l O_{kl}c_l$ in the
Heisenberg picture then $\Gamma\mapsto O\Gamma O^T$ in the Schr\"odinger
picture. For a passive transformation $a_k \mapsto a_k^{\prime}=
\sum_l U_{kl}a_l$, the q-p-ordered Majorana operators transform as
\begin{equation}
 c^T\mapsto c^{\prime T} = O_p \,c^T, \;\;\;\;  O_p=\left(
\begin{array}{cc}
  X & Y \\
 -Y & X \\
\end{array}
\right),
\end{equation}
where $X = \mbox{Re}(U)$ is the real part of the unitary $U$, and
$Y=\mbox{Im}(U)$ the imaginary part. Note that $O_p$ is both
orthogonal and symplectic. The behaviour of $\Gamma_c$ under a passive
transformation is particularly simple: $Q$ and $R$ transform
according to $Q \mapsto U Q U^T$ and $R \mapsto U R U^{\dagger}$.

Passive transformations can be used to transform \emph{pure}
fermionic states to a simple standard form, the so-called
Bloch-Messiah reduction \cite{BlochMessiah}. The q-p ordered CM
$\tilde\Gamma_{\mathrm{BCS}}$ takes the form (\ref{eq:Gammatilde})
where
\begin{eqnarray}\label{GammaBCS}\label{Gammap}
  \Gamma_q=-\Gamma_p &=& \bigoplus_k \left(
    \begin{array}{cc}
      0 & -2 \mbox{Im}(u_k v_k^*) \\
      2 \mbox{Im}(u_k v_k^*) & 0 \\
    \end{array}
  \right),\\\label{Gammaq}
  \Gamma_{qp} &=& \bigoplus_k \left(
    \begin{array}{cc}
      |u_k|^2-|v_k|^2 & 2 \mbox{Re}(u_k v_k^*) \\
      -2 \mbox{Re}(u_k v_k^*) & |u_k|^2-|v_k|^2 \\
    \end{array}
  \right).
\end{eqnarray}
In Hilbert space, the state in standard form is given by
\begin{equation}\label{BCS-var}
|\Psi_{Gauss}^{(\bn)}\rangle = \prod_k (u_k +v_k
\ad_k\ad_{-k})\vac,
\end{equation}
where $u_k, v_k \in \mathds{C}$, $|u_k|^2+|v_k|^2 = 1$, $\bn =
\sum_k \langle \ad_k a_k\rangle=2\sum_k|v_k|^2$.
This comprises the kind of ''paired'' states appearing in the BCS
theory of superconductivity \cite{PhysRev.108.1175} with $k \equiv
(\vec{k}, \uparrow)$, $-k \equiv (-\vec{k}, \downarrow)$. We will
refer to these states as \emph{Gaussian BCS states}. We would like
to stress the fact that every pure Gaussian state is a Gaussian
BCS state in some basis.

\subsection{Number conserving fermionic states}\label{Connection-N-NN}
For the application to physical systems we are interested in
states for which the particle number is a conserved quantity. We
call $\rho$ a number conserving state if
$[\rho,N_{\mathrm{op}}]=0$ where $N_{\mathrm{op}}$ denotes the
total number operator. Thus, the density operator of a number
conserving state can be written as a mixture of
$N_\mathrm{op}$-eigenstates. In particular, all separable states
as defined in Def.~\ref{DefSep} are number conserving.

The Gaussian BCS wave function \eqref{BCS-var} is not number
conserving (except for the case $\sum_k|u_kv_k|=0$ that either
$u_k$ or $v_k$ vanishes for every mode), but a relation to these
states can be established via the identity
\begin{equation}\label{ncBCS}
|\Psi_{Gauss}^{(\bn)} \rangle = \sum_{N=0}^{2M} \lambda_N
|\Psi_{BCS}^{(N)}\rangle,
\end{equation}
where the number conserving $2N$-particle BCS state is given by
\begin{equation}\label{BCS-N}
|\Psi_{BCS}^{(N)}\rangle = C_N \left(\sum_{k=1}^M \alpha_k
P_k^{\dagger} \right)^N\vac,
\end{equation}
where we have introduced the pair creation operator $\Pd_k = \ad_k
\ad_{-k}$. The coefficients $\alpha_k$ are related to $u_k$ and
$v_k$ via $\alpha_k = v_k/u_k$, $C_N$ is a normalization constant
which is seen to be $$C_N = \left((N!)^2
  \sum_{j_1<\ldots<j_N}|\alpha_{j_1}|^2\ldots|\alpha_{j_N}|^2\right)^{-1/2}$$
by rewriting Eq.~(\ref{BCS-N})  as
\begin{equation}
  \label{eq:BCS-N2}
  C_NN!\sum_{j_1<j_2<\dots<j_N}\alpha_{k_1}\dots\alpha_{k_N}
  \Pd_{k_1}\dots\Pd_{k_N}\vac.
\end{equation}
The coefficients $\lambda_N = \left(\prod_k u_k
\right)/(N! C_N)$ can be interpreted as the probability amplitude
of being in state $|\Psi_{BCS}^{(N)}\rangle$ since $\sum_N
|\lambda_N|^2$=1. We will in general drop the term \emph{number
conserving} and refer to
states of the form \eqref{BCS-N} as \emph{BCS states}. \\
Whenever the distribution of the $\lambda_N$ is sharply peaked
around some average particle number $\bn$, expectation values of
relevant observables for the number conserving BCS states
$|\Psi_{BCS}^{(\bn)}\rangle$ are approximated well by the
expectation values of the Gaussian BCS state. This relation will
turn out very useful later on, as results on Gaussian states can
be translated into results on number conserving BCS states.


\section{Pairing theory}\label{Definition-Section}
In this section we introduce a precise definition of pairing as a
property of quantum states.

\subsection{Motivation and statement of the definition}

The simplest system in which we can find pairing consists of two particles and
four modes\footnote{For three modes, all pure two-particle states are of
  product form.}. The prototypical paired state, for example the spin-singlet
of two electrons with opposing momenta, is of the form
\begin{equation}
  \label{eq:pair}
|\Phi\rangle = \frac{1}{\sqrt{2}}\left(\ad_1\ad_2+\ad_3\ad_4\right)\vac.
\end{equation}
The states describing many Cooper pairs in BCS theory are generalizations of
$\ket{\Phi}$.

The state $\ket{\Phi}$ describes correlations between the two particles that
cannot be reproduced by any uncorrelated state and it can be completely
characterized by one- and two-particle expectations (consisting of no more
than two creation and annihilation operators each). This is a characteristic
of the two-particle property ``pairing'' that we propose to make the central
\emph{defining} property of paired states in the general case of many modes,
many particles and mixed states. Since, moreover, we would call the state
$\ket{\Phi}$ paired no matter what basis the mode operators $a_i$ refer to and
we want it to comprise all BCS states, we are led to the following list of
requirements that a sensible definition of pairing should fulfill:
\begin{enumerate}
\item States that have no internal quantum correlation must be unpaired. These
  are the separable states \eqref{separable state}.
\item Pairing must reveal itself by properties related to one-and two-particle
  expectations only.
\item Pairing should be a basis-independent property.
\item The standard ''paired'' states appearing in the description of solid
  state and condensed matter systems, i.e., the BCS states with wave function
  \eqref{BCS-N} must be captured by our definition.
\end{enumerate}
Further, it would be desirable that there exist examples of paired states that
are a resource for some quantum information
application.\\

Let us define:
\begin{definition}
The set of all operators $\{\mathcal{O}_{\alpha}\}_{\alpha}$ on
$\mathcal{A}_M$ which are the product of at most two creation and
two annihilation operators is called the set of \emph{two-particle
operators}. We denote it by $A_2$.
\end{definition}
These operators capture all one- and two-particle properties of a state $\rho$
and should therefore contain all information about pairing. We will call a
state $\rho$ paired, if it can be distinguished from separable states by
looking at observables in $A_2$ alone. This is formalized in the following
definition:
\begin{definition}\label{Pairing-Def}
  A fermionic state $\rho$ is called \emph{paired} if there exists a set of operators
  $\{\mathcal{O}_{\alpha}\}_{\alpha}\subseteq A_2$ such that the expectation
  values $\{\tr[\rho \mathcal{O}_{\alpha}]\}$ cannot be reproduced
  by any separable state $\rho_s\in\mathcal{S}_{sep}$.\\
States that are not paired are called \emph{unpaired}.
\end{definition}
This definition automatically fulfills our first two requirements
by definition. The third, basis independence, clearly holds, since
the set of separable states is invariant under passive
transformations. We will show that the last requirement is met,
both for Gaussian and number conserving BCS states, i.e. all of
them are paired (see Lemma ~\ref{lemma:allBCSpaired} and Subsec.
\ref{Symp-spec}). Moreover, in Sec.\ref{Interferometer-section} we
can show that there exist paired states
that are a resource for quantum metrology.\\

For states with a fixed particle number, i.e. $\rho \in
\mathcal{S}(\mathcal{A}_M^{(N)})$ it is sufficient to compare with
expectation values on $N$-particle separable states $\rho_s^{(N)}
\in \mathcal{S}_{sep}^{(N)}$, as for all other states the
expectation values of $\langle \sum_i n_i \rangle$ and $\langle
\left(\sum_i n_i\right)^2 \rangle$ differ due to the particle
number constraint. For number conserving states, only number
conserving observables lead to non-vanishing expectation values
and one can thus restrict to linear combinations of $\ad_i
a_j$, $\ad_i \ad_j a_k a_l$.\\
For Gaussian states pairing must reveal itself by properties of
the covariance matrix, as all higher correlations can be obtained
from it via Eq.~\eqref{Wick}. This important fact enables us to
give a complete solution of the pairing problem for fermionic
Gaussian states, which we present in
Sec.~\ref{Gaussian-Pairing-section}.

\subsection{Relation of pairing and entanglement}
Paired states are fermionic states exhibiting non-trivial quantum
correlations. In particular, by definition paired states are
inseparable i.e., entangled in the sense of
\cite{PhysRevA.64.022303,ESBL-Indistinguishable}. This raises
immediately the question: Is pairing equivalent to entanglement?
Below, we provide examples of entangled, but unpaired states that
demonstrate that pairing is not equivalent to entanglement (of
particles) but represents a special type of quantum
correlation\footnote{Note that our basis-independent definition
  clearly has no relation to entanglement of modes, which is basis-dependent.
  The product states of Def. \ref{DefSep} can be mode-entangled for some
  choice of partition of modes, e.g.  $1/\sqrt{2}(\ad_1+\ad_2)|0\rangle$ is
  entangled in modes $\ad_1$ and $\ad_2$.}.
\begin{lemma}
There exist states that are entangled according to the Slater rank
concept, but not paired.
\end{lemma}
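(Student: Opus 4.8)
The plan is to exhibit an explicit pure state on $N=3$ particles in $M=6$ modes that is not a single Slater determinant (hence entangled in the sense of Def.~\ref{DefSep}) but whose complete set of one- and two-particle expectation values coincides with those of a separable state, so that it is unpaired by Def.~\ref{Pairing-Def}. The candidate is
\[
|\chi\rangle = \tfrac{1}{\sqrt 2}\left(\ad_1\ad_2\ad_3 + \ad_4\ad_5\ad_6\right)\vac = \tfrac{1}{\sqrt 2}\left(|A\rangle + |B\rangle\right),
\]
where $|A\rangle=\ad_1\ad_2\ad_3\vac$ and $|B\rangle=\ad_4\ad_5\ad_6\vac$ are Slater determinants supported on the disjoint mode sets $\{1,2,3\}$ and $\{4,5,6\}$. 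I would compare it with the state $\rho_s = \tfrac12|A\rangle\langle A| + \tfrac12|B\rangle\langle B|$, which lies in $\mathcal{S}_{sep}$ as a convex combination of product states.

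First I would verify that $|\chi\rangle$ is entangled. Since a pure state is separable precisely when it is itself a single product state, it suffices to show $|\chi\rangle$ is not a Slater determinant. I would compute the one-particle reduced density matrix $(\rho_1)_{ij}=\langle\chi|\ad_j a_i|\chi\rangle$: the cross terms $\langle A|\ad_j a_i|B\rangle$ vanish because $a_i$ leaves two particles in $\{4,5,6\}$, so the result cannot overlap $\langle A|$. This gives $\rho_1 = \tfrac12(P_{123}+P_{456})$, with $P_{123}$ (resp.\ $P_{456}$) the orthogonal projector onto the span of modes $1,2,3$ (resp.\ $4,5,6$), so that all six nonzero eigenvalues of $\rho_1$ equal $1/2$. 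A single Slater determinant has a one-particle density matrix that is a projector (eigenvalues in $\{0,1\}$), so $|\chi\rangle$ has Slater number larger than one.

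Next I would show $\rho_s$ reproduces all $A_2$-expectations of $|\chi\rangle$, which by Def.~\ref{Pairing-Def} makes $|\chi\rangle$ unpaired. The key observation is that every $\mathcal{O}\in A_2$ contains at most two annihilation operators, so acting on $|B\rangle$ it can remove at most two of the three particles occupying $\{4,5,6\}$, leaving at least one particle there; the resulting vector has no overlap with $\langle A|$, which is supported entirely on $\{1,2,3\}$. Hence every cross term $\langle A|\mathcal{O}|B\rangle$ (and likewise $\langle B|\mathcal{O}|A\rangle$) vanishes, giving $\langle\chi|\mathcal{O}|\chi\rangle = \tfrac12\langle A|\mathcal{O}|A\rangle + \tfrac12\langle B|\mathcal{O}|B\rangle = \tr[\rho_s\mathcal{O}]$ for all $\mathcal{O}\in A_2$. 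Since no subset of $A_2$ can then distinguish $|\chi\rangle$ from the separable state $\rho_s$, the state is unpaired while being entangled, proving the claim.

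I expect the cross-term vanishing to be the step deserving the most care, since it must be checked for the whole of $A_2$, including the particle-number-changing operators such as $a_ia_j$, $\ad_i\ad_j$ and $\ad_i\ad_j a_k$. These are handled uniformly by noting that $|A\rangle$ and $|B\rangle$ both carry definite particle number $3$, so any operator failing to conserve particle number already has vanishing \emph{diagonal} expectation in each; the particle-number-conserving operators $\ad_ia_j$ and $\ad_i\ad_j a_ka_l$ are exactly those controlled by the disjoint-support argument above. Consequently all contributions match term by term, and the construction is complete.
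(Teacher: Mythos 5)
Your proof is correct and follows essentially the same route as the paper's: a superposition of two Slater determinants supported on disjoint mode blocks, compared against their even mixture, with all cross terms $\langle A|\mathcal{O}|B\rangle$ for $\mathcal{O}\in A_2$ vanishing because at most two annihilation operators cannot empty a block. The only (valid) variation is that you use the minimal three-particle blocks $\tfrac{1}{\sqrt 2}(\ad_1\ad_2\ad_3+\ad_4\ad_5\ad_6)\vac$ where the paper uses four-particle blocks $\tfrac12(\ad_1\ad_2\ad_3\ad_4+\ad_5\ad_6\ad_7\ad_8)\vac$, and you spell out the entanglement claim via the one-particle density matrix, which the paper asserts without proof.
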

\begin{proof}
Consider the state $|\Psi_4\rangle =
\frac{1}{2}(\ad_1\ad_2\ad_3\ad_4 + \ad_5 \ad_6
\ad_7\ad_8)|0\rangle$, which is entangled according to the Slater
rank definition. However, one sees immediately that the one-and
two-particle expectations for $|\Psi_4\rangle$ are the same as for
$\rho_s^{(4)} = \frac{1}{2} |\Phi_1\bracket \Phi_1| + \frac{1}{2}
|\Phi_2 \bracket \Phi_2|,$ where $|\Phi_1 \rangle =
\ad_1\ad_2\ad_3\ad_4|0\rangle,$ $|\Phi_2\rangle = \ad_5 \ad_6
\ad_7\ad_8|0\rangle$. Since $\rho_s^{(4)}$ is a product state,
$|\Psi_4\rangle$ is not paired. One can construct further examples
in a similar manner using e.g. other states with higher Slater
rank.
\end{proof}

Since pairing is defined via expectation values of one-and
two-particle operators only, one might wonder whether pairing is related to
entanglement of the two-particle reduced state. To study this relation, we
recall the definition of the \emph{two-particle density operator}
and the closely related \emph{two-particle density
matrix} (see eg. \cite{Coleman}):
\begin{definition}\label{RDM}
Let $\rho$ be the density operator of a fermionic state. Then
$O_{(ij)(kl)}^{(\rho)}=\tr[\rho \ad_i \ad_j a_l a_k]$ is called
the \emph{two-particle reduced density matrix} (RDM). It is usually not
normalized and fulfills $\tr[O^{(\rho)}]= \langle N_{op}^2\rangle -\langle
N_{op}\rangle$.\\
The operator $\rho_2 = O^{(\rho)}/\tr[O^{(\rho)}]$ is called
\emph{reduced two-particle density operator} (RDO).
\end{definition}
Note the crucial difference between the two-particle RDM and the
RDO. While the RDM contains all two-particle correlations of
$\rho$, the RDO corresponds to the two-particle state of any two
particles when the rest of the system is discarded. We would like
to emphasize, that pairing is \emph{not} equivalent to
entanglement of the RDO, and therefore it is a property of the
one- and two-particle expectations:
\begin{lemma}\label{Pairend-not-entangled}
  Let $|\Psi_{BCS}^{(N)}\rangle$ be a number conserving BCS state as defined in
  \eqref{BCS-N} with $\alpha_k = 1\; \forall\; k=1, \ldots, M$. Then its two-particle RDO
  (see Def. \ref{RDM}) $\rho^{(N)}_{BCS,2}$ is always paired. However, $\rho^{(N)}_{BCS,2}$ is
  entangled if and only if $M >3N -2$.
\end{lemma}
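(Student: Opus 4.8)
The plan is to first use the symmetry $\alpha_k=1$ to reduce the state to an explicit combinatorial form, then compute the two-particle RDO in closed form, and finally dispatch the two assertions (pairing and the entanglement threshold) with a two-particle witness together with an explicit separable decomposition. First I would rewrite the pair operators as hard-core bosons: since for $k\neq l$ the operators $\Pd_k=\ad_k\ad_{-k}$ act on disjoint modes, $[P_l,\Pd_m]=0$ for $l\neq m$ while $[P_l,\Pd_l]=\one-n_l-n_{-l}$, so the $\Pd_k$ generate $M$ commuting hard-core boson modes. With all $\alpha_k=1$, $\bigl(\sum_k\Pd_k\bigr)^N\vac$ is, up to normalization, the uniform Dicke superposition $\sum_{|S|=N}\prod_{k\in S}\Pd_k\vac$ over $N$-element subsets $S\subseteq\{1,\dots,M\}$. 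On this state all one- and two-mode correlators reduce to counting: $\langle n_k\rangle=N/M$, $\langle n_kn_l\rangle=\frac{N(N-1)}{M(M-1)}$ for $k\neq l$, and the pair-hopping amplitude $\langle\Pd_kP_l\rangle=\frac{N(M-N)}{M(M-1)}$ for $k\neq l$.

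Collecting these into the two-particle RDM, and using that only complete-pair configurations are populated (so every matrix element that would displace a single unpaired particle vanishes), I expect the crucial coincidence that the flat ``background'' eigenvalue is the same in the paired and in the broken sectors. This yields the isotropic form
\[
\rho_2=\lambda_{\mathrm{cond}}\,\proj{\chi}+\lambda_{\mathrm{small}}\bigl(\one-\proj{\chi}\bigr),
\]
where $\ket{\chi}=\frac{1}{\sqrt M}\sum_k\ad_k\ad_{-k}\vac$ is the condensate pair wavefunction (of maximal Slater rank $M$), $\one$ is the identity on the two-particle space $\mathcal{A}_{2M}^{(2)}$, and $\lambda_{\mathrm{cond}}=\bra{\chi}\rho_2\ket{\chi}=\frac{M+1-N}{M(2N-1)}$, while $\lambda_{\mathrm{small}}$ is fixed by the trace. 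The key structural fact is that $\ket{\chi}$ is the unique eigenvector lying above the flat background.

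For the entanglement threshold I would use the Slater-number witness adapted to $\ket{\chi}$, namely $W=\frac1M\one-\proj{\chi}$. The first step is the overlap bound $\max_{f\perp g}|\langle\chi|\ad_f\ad_g\vac\rangle|^2=1/M$, which holds because the antisymmetric matrix of $\ket{\chi}$ is $\tfrac{1}{\sqrt M}$ times an orthogonal (symplectic) form, so $\langle\chi|\ad_f\ad_g\vac\rangle=\frac{1}{\sqrt M}f^{T}Jg$ with $\|Jg\|=\|g\|$; this guarantees $\tr[W\sigma]\geq0$ for every Slater-rank-one state, hence for every separable state. Evaluating the witness on the RDO gives
\[
\tr[W\rho_2]=\frac1M-\lambda_{\mathrm{cond}}=\frac{3N-2-M}{M(2N-1)},
\]
which is negative precisely when $M>3N-2$, establishing the ``only if'' direction. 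For the converse, $M\leq 3N-2$, I would build an explicit convex decomposition of $\rho_2$ into Slater determinants, exploiting the $Sp(2M)$-invariance of the isotropic state to reduce the problem to averaging one well-chosen determinant $\ad_f\ad_g\vac$ over the symmetry group and balancing the residual weight against $\lambda_{\mathrm{small}}$.

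The pairing statement I would treat through the criteria of Sec.~\ref{Definition-Section}: for a fixed two-particle state the comparison is with $N'{=}2$ separable states, and pairing is certified by producing observables in $A_2$ whose expectations fall outside the separable range. The off-diagonal pair correlations $\langle\Pd_kP_l\rangle$ --- the signature of off-diagonal long-range order --- supply such a family, and I would argue the corresponding pairing witness is violated for all admissible $M\geq N\geq1$, unlike $W$ which only bites above the threshold. I expect the main obstacle to be the converse (separability) half of the threshold: constructing the explicit Slater decomposition for $M\leq3N-2$ and proving $W$ optimal, i.e.\ that no finer witness detects entanglement below $M=3N-2$. Making precise the universal pairing witness, and reconciling it with the fact that mixed-state Slater separability of the RDO only sets in at $M=3N-2$, is the conceptually delicate point this lemma is meant to illustrate.
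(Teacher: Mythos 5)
Your structural computation is correct and takes a genuinely different, more global route than the paper. I verified your numbers: with $\alpha_k=1$ the state is the uniform hard-core Dicke state, the pair-sector matrix has diagonal $N/M$ and off-diagonal $\frac{N(M-N)}{M(M-1)}$, and the coincidence you anticipate really occurs --- the small pair-sector eigenvalue $\frac{N}{M}-\frac{N(M-N)}{M(M-1)}=\frac{N(N-1)}{M(M-1)}$ equals the broken-pair diagonal, so the full RDO is exactly isotropic with $\lambda_{\mathrm{cond}}=\frac{M-N+1}{M(2N-1)}$. Your overlap bound $|\langle\chi|\ad_f\ad_g\vac\rangle|^2\leq 1/M$ is sound (the matrix of $\ket{\chi}$ is $\frac{1}{\sqrt M}$ times an orthogonal form, then Cauchy--Schwarz), so $W=\frac1M\mathds{1}-\proj{\chi}$ is a valid Slater witness and $\tr[W\rho_2]=\frac{3N-2-M}{M(2N-1)}$ gives the ``entangled if $M>3N-2$'' half completely. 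The paper instead restricts the RDO to the four modes $\pm k,\pm l$, computes a $6\times6$ matrix with parameters $a_1=(M-1)/(N-1)$, $a_2=(M-N)/(N-1)$, and invokes the necessary-and-sufficient Slater-number criterion of Schliemann et al.\ for two fermions in a four-dimensional single-particle space: the concurrence-type matrix has spectrum $\gamma^2\{a_+,1,1,-1,-1,-1\}$ with $a_+=(2M-N-1)/(N-1)$, and $a_+\leq 5$ reproduces your threshold $M\leq 3N-2$. That criterion buys both directions of the ``iff'' in one stroke, which you do not get from a single witness.

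This points to your first gap: the converse (separability for $M\leq 3N-2$) is only sketched. It is, however, closable exactly along your twirling line: the average over passive transformations $U$ with $U^TJU=J$ (the compact symplectic group) maps Slater determinants to separable states, the traceless part of the two-particle space is irreducible under this group, so twirling projects any state onto your isotropic family at fixed fidelity $F=\bra{\chi}\rho\ket{\chi}$; twirling $\ad_k\ad_{-k}\vac$ gives $F=1/M$, twirling $\ad_k\ad_l\vac$ ($l\neq -k$) gives $F=0$, and convex combinations cover all $F\leq 1/M$, i.e.\ all $M\leq 3N-2$ since $\lambda_{\mathrm{cond}}\leq 1/M$ is equivalent to that bound. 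Completed this way, your proof would actually be \emph{stronger} than the paper's: the paper establishes separability only of the four-mode restriction, and separability of a restriction does not by itself imply separability of the full RDO (entanglement of a restriction does lift, but the ``only if'' direction does not).

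The second gap is the pairing clause, and here your closing worry is exactly on target: the universal pairing witness you ``would argue'' for cannot exist for $M\leq 3N-2$. For a state of exactly two particles, every density-matrix element is an expectation of an operator in $A_2$, and $N_{\mathrm{op}}$, $N_{\mathrm{op}}^2$ lie in the span of $A_2$, so any separable state reproducing all $A_2$ expectations of $\rho_2$ is forced into the two-particle sector and must equal $\rho_2$; hence the RDO is paired precisely when it has Slater number larger than one, i.e.\ precisely when $M>3N-2$. Note that the paper's own one-line pairing argument fares no better: evaluating its witness on the paper's matrix gives $\tr[H^{(p)}_{1-}\rho^{(N)}_2]=(4-2a_2)/(4+2a_1)$, which is negative only when $a_2>2$, i.e.\ again only for $M>3N-2$. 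So neither your route nor the paper's establishes ``always paired''; the delicate point you flagged at the end is not a technicality to be reconciled but a genuine obstruction, and the correct statement reachable by your methods is that the RDO is paired, and entangled, iff $M>3N-2$.
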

The proof is given in Appendix \ref{Ent-app}.

We would like to stress the point that Lemma
\ref{Pairend-not-entangled} shows the existence of paired states
that are not entangled. Having assured that our definition of
pairing does not coincide with entanglement, we now turn to
methods of detecting and quantifying pairing.

%
%

\subsection{Methods for detecting pairing}
Taking Def. \ref{Pairing-Def}, we aim at finding tools that can be
used for the detection and quantification of pairing. These will
be applied to systems of Gaussian states and number conserving
states in Secs. \ref{Gaussian-Pairing-section} and
\ref{Number-conserving-section} respectively.
In this section, we exploit the convexity of the set of unpaired states to
introduce witness operators and obtain a geometrical picture of the set.
The quantification of pairing via pairing measures will be discussed in Sec.
\ref{measure-subsection}.

Given a fermionic density operator, we are interested in an operational method
to determine whether it is paired or not. As in the case of separability, this
simple-sounding question will turn out to be rather difficult to answer in
general.

Starting from Def. \ref{Pairing-Def}, it is clear that the set of unpaired
states is convex. This suggests the use of the Hahn-Banach separation theorem
as a means to certify that a given density operator is not in the set of
paired states.  In analogy to the entanglement witnesses in quantum
information theory \cite{HHH96} we define
\begin{definition}
A \emph{pairing witness} $W$ is a Hermitian operator that fulfills
$\tr[W\rho_u]\geq 0$ for all unpaired states $\rho_u$, and for
which there exists a paired state $\rho$ such that $\tr[W\rho]<0$.
We then say that $W$ \emph{detects} the paired state $\rho$.
\end{definition}
The witness defines a hyperplane in the space of density operators such that
the convex set of unpaired states lies wholly on that side of the plane
characterized by $\tr[\rho W]>0$.  According to the Hahn-Banach theorem
\cite{Rudin91}, for every unpaired state there exists a witness operator which
detects it.
In principle, a witness operator can be an operator involving an
arbitrary number of creation and annihilation operators. However,
since definition of pairing refers only to expectation values of operators in
$A_2$, it is enough to restrict to operators from that set.
This represents a significant simplification both mathematically (witness
operators from a finite dimensional set) and experimentally, since
operators involving more than two-body correlations are typically very
difficult to measure.\\
The construction of entanglement witnesses detecting all entangled
states is an unsolved problem in entanglement theory, and we will
not be able to give a complete solution to the problem of finding
all pairing witnesses either. However, in Section
\ref{Number-conserving-section} we will construct witnesses for a
large subclass of BCS-states by using the correspondence betweenF
number conserving and Gaussian BCS states.


Whether a state $\rho$ is paired or not can be determined from a finite set of
real numbers, namely the expectation values of a hermitian basis
$\{O_\alpha\}$ of $A_2$.  This allows us to reformulate the pairing problem as
a geometric question on convex sets in finite-dimensional Euclidean space,
describe a complete set of pairing witnesses, and deduce a relation to the
ground state energies of quadratic Hamiltonians.

Consider a set $\{O_{\alpha}, \alpha=1,\dots,K\}\subset A_2$ of
hermitian operators in $A_2$ that are not necessarily a basis.
Denote by $\vec{O}$ the vector with components $O_\alpha$. We
define the set of all expectation values of $\vec{O}$ for
separable states
\begin{equation}
C_{\vec{O}}=\left\{\vec{v} = \tr[\vec{O} \rho_s] : \rho_s
  \in S_{sep}\right\}\subset \mathds{R}^K.
\end{equation}
For a state $\rho$ let $\vec{v}_{\rho}\equiv \tr[\vec{O} \rho]$. By
definition, $\rho$ is paired if $\vec{v}_{\rho} \notin
C_{\vec{O}}$. As the set of separable states is convex, so is
$C_{\vec{O}}$. Hence, we can use a result of convex analysis to check if
$\vec{v}_{\rho} \notin C_{\vec{O}}$ (see e.g. \cite{Rockafellar}):

\begin{lemma}\label{v-in-convex}
Let $C \subset \mathds{R}^N$ be a closed convex set, and let
$\vec{v} \in \mathds{R}^N$. Then
\begin{equation}
\vec{v} \in C \Leftrightarrow \; \forall\, \vec{r} \in \mathds{R}^N:
\vec{v}\cdot \vec{r} \geq E(\vec{r})= \inf_{\vec{w}\in
C}\vec{w}\cdot \vec{r}.
\end{equation}
\end{lemma}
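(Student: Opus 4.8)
The plan is to prove the two implications separately, with essentially all of the content residing in the ``$\Leftarrow$'' direction. The forward implication is immediate: if $\vec{v}\in C$, then for every $\vec{r}\in\mathds{R}^N$ the number $\vec{v}\cdot\vec{r}$ is one of the values over which the infimum defining $E(\vec{r})$ is taken, so $\vec{v}\cdot\vec{r}\geq E(\vec{r})$ by definition of the infimum. Neither convexity nor closedness is needed here.

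For the converse I would argue by contraposition: assuming $\vec{v}\notin C$, I would exhibit a direction $\vec{r}$ with $\vec{v}\cdot\vec{r}<E(\vec{r})$, violating the right-hand side. This is precisely a strict-separation statement for the point $\vec{v}$ and the closed convex set $C$. Concretely, I would use closedness of $C$ to produce a nearest point: intersecting $C$ with a closed ball about $\vec{v}$ that meets $C$ gives a compact set on which the continuous function $\vec{w}\mapsto|\vec{w}-\vec{v}|$ attains its minimum at some $\vec{p}\in C$, and convexity makes $\vec{p}$ unique. Setting $\vec{r}=\vec{p}-\vec{v}\neq 0$, convexity yields the variational inequality $\vec{r}\cdot(\vec{w}-\vec{p})\geq 0$ for all $\vec{w}\in C$: for $t\in[0,1]$ the segment point $\vec{p}+t(\vec{w}-\vec{p})$ lies in $C$, so the squared distance to $\vec{v}$ has nonnegative derivative at $t=0$. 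Hence $\vec{r}\cdot\vec{w}\geq\vec{r}\cdot\vec{p}$ for all $\vec{w}\in C$, so $E(\vec{r})\geq\vec{r}\cdot\vec{p}$, while $\vec{r}\cdot\vec{p}-\vec{r}\cdot\vec{v}=|\vec{r}|^2>0$ gives $\vec{v}\cdot\vec{r}<\vec{r}\cdot\vec{p}\leq E(\vec{r})$, as required. Alternatively one may simply invoke the Hahn--Banach separation theorem in finite dimensions, which supplies the same $\vec{r}$ together with a scalar $\alpha$ satisfying $\vec{v}\cdot\vec{r}<\alpha\leq\vec{w}\cdot\vec{r}$ for all $\vec{w}\in C$.

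The step I expect to be the crux is the construction of the separating direction, and in particular the essential use of closedness of $C$: without it $\vec{v}$ could lie on the boundary of the closure, where only non-strict separation is available and the equivalence breaks down. One mild technical point to handle is that $E(\vec{r})$ may equal $-\infty$ when $C$ is unbounded; this causes no difficulty, since the condition $\vec{v}\cdot\vec{r}\geq E(\vec{r})$ is then vacuous, and the separating direction we construct always has $E(\vec{r})$ finite (bounded below by $\vec{r}\cdot\vec{p}$). In the application to $C_{\vec{O}}$ the relevant set is in fact compact, so $E(\vec{r})$ is finite for every $\vec{r}$ and these subtleties disappear entirely.
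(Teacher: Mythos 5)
Your proposal is correct: the paper itself offers no proof of this lemma, simply citing Rockafellar, and your nearest-point projection argument (with the variational inequality yielding strict separation, or equivalently finite-dimensional Hahn--Banach) is exactly the standard proof behind that citation and the same separation principle the paper invokes elsewhere for witness operators. Your handling of the $E(\vec{r})=-\infty$ case is also sound, since the inequality is then trivially satisfied while your constructed direction always has $E(\vec{r})\geq\vec{r}\cdot\vec{p}>\vec{r}\cdot\vec{v}$ finite.
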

For our purposes, this translates in the following result:
\begin{lemma}\label{lemma:completefamily}
  For a vector of observables $\vec{O} = (O_1\dots,O_K)$ let
  $H(\vec{r})=\vec{r}\cdot \vec{O}$ and
  $E(\vec{r}) = \inf_{\rho \in \mathcal{S}_{sep}}\{\tr[H(\vec{r})\rho]\}$.
  Then $W(\vec r)\equiv H(\vec{r})-E(\vec{r})$ is a pairing witness, whenever
  $E(\vec{r})\not=\mathrm{inf}_{\mathrm{all}\,\rho}\{\tr[\rho H(\vec r)]\}$.

  If $\{O_\alpha\}$ form a basis of $A_2$, then $W(\vec{r})$ is a complete
  set of witnesses in the sense that all paired states are detected by some
  $W(\vec r)$, i.e., $\rho$ is unpaired iff $\tr[W(\vec{r})\rho]\geq0
  \,\forall\,
  \vec{r}$.
\end{lemma}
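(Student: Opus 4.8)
The plan is to split the statement into two parts: that $W(\vec{r})=H(\vec{r})-E(\vec{r})$ is a pairing witness for an arbitrary collection $\{O_\alpha\}$, and the completeness claim for the special case when $\{O_\alpha\}$ is a basis of $A_2$. The convex-analytic Lemma \ref{v-in-convex} will be the engine for the completeness direction, while the witness property follows directly from the definitions of $E(\vec{r})$ and of unpaired states.

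First I would establish the nonnegativity $\tr[W(\vec{r})\rho_u]\ge 0$ for \emph{every} unpaired state $\rho_u$, not merely for separable ones. This is the conceptual heart of the argument. Although $E(\vec{r})$ is defined as an infimum over $\mathcal{S}_{sep}$, its lower bound automatically extends to all unpaired states: by Def.~\ref{Pairing-Def}, if $\rho_u$ is unpaired then the expectation values $\{\tr[\rho_u O_\alpha]\}$ are reproduced by some separable $\rho_s$, whence $\tr[H(\vec{r})\rho_u]=\sum_\alpha r_\alpha\tr[O_\alpha\rho_u]=\tr[H(\vec{r})\rho_s]\ge E(\vec{r})$, i.e.\ $\tr[W(\vec{r})\rho_u]\ge 0$. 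Note $H(\vec{r})$ is Hermitian (the $O_\alpha$ are observables and $\vec{r}$ is real), so $W(\vec{r})$ is Hermitian as required. For the second witness requirement I would invoke the hypothesis $E(\vec{r})\neq E_0(\vec{r})$, where $E_0(\vec{r}):=\inf_{\mathrm{all}\,\rho}\tr[\rho H(\vec{r})]$. Since $\mathcal{S}_{sep}$ is a subset of all states, $E_0(\vec{r})\le E(\vec{r})$, so the hypothesis forces the strict inequality $E_0(\vec{r})<E(\vec{r})$. In the finite-dimensional Fock space $E_0(\vec{r})$ is the least eigenvalue of $H(\vec{r})$, attained by its ground state $\rho_g$, and $\tr[W(\vec{r})\rho_g]=E_0(\vec{r})-E(\vec{r})<0$. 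By the nonnegativity just shown, $\rho_g$ cannot be unpaired, hence it is a paired state detected by $W(\vec{r})$, proving $W(\vec{r})$ is a pairing witness.

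For the completeness claim, when $\{O_\alpha\}$ is a basis of $A_2$, I would prove both directions of ``$\rho$ is unpaired if and only if $\tr[W(\vec{r})\rho]\ge 0\ \forall\,\vec{r}$''. The forward implication is exactly the nonnegativity established above. For the converse, the condition $\tr[W(\vec{r})\rho]\ge 0$ for all $\vec{r}$ reads $\vec{r}\cdot\vec{v}_\rho\ge E(\vec{r})=\inf_{\vec{w}\in C_{\vec{O}}}\vec{r}\cdot\vec{w}$ for all $\vec{r}$, which is precisely the hypothesis of Lemma \ref{v-in-convex} with $C=C_{\vec{O}}$. That lemma needs $C_{\vec{O}}$ closed and convex: convexity is inherited from $\mathcal{S}_{sep}$ through the linear map $\rho_s\mapsto\tr[\vec{O}\rho_s]$, and closedness follows from compactness of $\mathcal{S}_{sep}$ (the product states form a compact set, and in the finite-dimensional Fock space their convex hull is compact), so $C_{\vec{O}}$ is the continuous image of a compact set. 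Lemma \ref{v-in-convex} then gives $\vec{v}_\rho\in C_{\vec{O}}$: some separable state matches all the basis expectation values, and since $\{O_\alpha\}$ spans $A_2$ it matches every observable in $A_2$ by linearity, so $\rho$ is unpaired.

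The step I expect to carry the real content is the first: recognizing that the bound $E(\vec{r})$, though defined only over $\mathcal{S}_{sep}$, automatically certifies the full (generally larger) set of unpaired states, which is what makes $W(\vec{r})$ a genuine witness against pairing rather than merely against separability. The basis hypothesis enters only at the very end of the completeness argument, to upgrade ``matching the finitely many $O_\alpha$'' to ``matching all of $A_2$''; the remaining convex-geometry input is routine once the closedness of $C_{\vec{O}}$ is noted.
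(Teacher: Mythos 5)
Your proposal is correct and follows essentially the same route as the paper's own proof: the witness property from the definition of $E(\vec{r})$, and completeness via Lemma~\ref{v-in-convex} plus the observation that a basis of $A_2$ determines all two-particle expectations. You merely make explicit what the paper calls ``obvious'' --- that the bound extends from $\mathcal{S}_{sep}$ to all unpaired states, that the ground state of $H(\vec{r})$ furnishes the detected paired state, and that $C_{\vec{O}}$ is closed by compactness --- all of which are correct fillings-in rather than a different argument.
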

\begin{proof}
  The witness property of $W(\vec{r})$ is obvious from the definition of
  $E(\vec{r})$. \\
For the second part, ``if'' is clear and ``only if'' is seen as
follows: By Lemma~\ref{v-in-convex}, if $\tr[W(\vec{r})\rho]\geq
0\,\forall\,\vec{r}$ then $\vec{v}_\rho\in C$, i.e. the
expectation values can be reproduced by a separable state. But
since all expectation values of operators $\in A_2$ can be
computed from $\vec{v}_\rho$ this implies all two-particle
expectations of $\rho$ can be thus reproduced, i.e. $\rho$ is
unpaired.
\end{proof}

For an $M$-mode system with annihilation operators $a_i$, a standard choice of
$O_\alpha$ is, e.g., given by the real and imaginary parts of
$\{(\ad_{i}\ad_{j}a_{k}a_l)_{i>j,k>l}, (\ad_i\ad_j)_{i> j}, (\ad_ia_j)_{i\geq
  j}\}$, i.e., the dimension of $A_2$ (as a real vector space) is
$K=M^2(M-1)^2/2+2M^2$.

Thus Lemma~\ref{lemma:completefamily} gives a necessary and
sufficient criterion of pairing and provides a geometrical picture
of the pairing problem. While the proof that a state is unpaired
will in general be difficult as it requires knowledge of all
$E(\vec r)$ and experimentally the measurement of a complete set
of observables, practical sufficient conditions for pairing can be
obtained by restricting to a subset $\mathcal{O}\subset A_2$. We
will show in Sec. \ref{geometry-section} that for a certain choice
of $\{O_\alpha\}\subset A_2$ the set $C_{\vec{O}}$ has a very
simple form and allows a good visualization of the geometry of
paired states and the detection of all BCS states up to passive
transformations.

To provide a way to determite $E(\vec{r})$ used in
Lemma~\ref{lemma:completefamily}, we point out an interesting
connection to the covariance matrices $\Gamma_c$ (cf.
Eq.~\eqref{Gammac}) of Gaussian states: even for number conserving
states, $E(\vec{r})$ is given by a quadratic minimization problem
in terms of $\Gamma_c$.

\begin{lemma}
  Let $E(\vec{r})$ and $H(\vec{r})$ be as in Lemma~\ref{lemma:completefamily}
  and let $\vec{O} = \{\ad_i\ad_ja_ka_l,\ad_ia_j\}$ and group the
  components of $\vec{r}$ in two subsets $(\vec{r})_{ijkl}$ and
  $(\vec{r})_{ij}$ corresponding to the one- and two-particle observables,
  respectively.
Then $E(\vec{r})$ is given by a quadratic minimization problem over complex
covariance matrices Eq.~(\ref{Gammac}), in particular the off-diagonal block
$R$ of $\Gamma_c$. We have
\begin{equation}
  \label{eq:EviaR}
  E(\vec{r}) = \inf_{\stackrel{R=-R^\dagger}{4R^2=-\one}}\left\{
    \vec{\gamma}^TM(\vec{r})\vec{\gamma} +
  w(\vec{r})^T\vec{\gamma} \right\},
\end{equation}
where $(\vec{\gamma})_{kl}=\langle \ad_ka_l\rangle =
-iR_{lk}+\frac{1}{2}\delta_{kl}$ and the $\vec{r}$-dependent quantities are
$[M(\vec{r})]_{(ik)(jl)}=-\vec{r}_{ijkl}+\vec{r}_{ijlk}$ and
$[w(\vec{r})]_{kl}=\vec{r}_{kl}$.\\
The minimization can be extended over \emph{all} (not necessarily pure
separable) CMs without changing the result.
\end{lemma}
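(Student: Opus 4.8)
The plan is to turn the variational problem defining $E(\vec r)$ over the complicated set $\mathcal{S}_{sep}$ into a quadratic program over covariance matrices, using two ingredients: linearity of $\rho\mapsto\tr[H(\vec r)\rho]$, and the fact that on the extreme points of $\mathcal{S}_{sep}$ — the pure product states — one- and two-particle expectations factorize through the single block $R$ by Wick's theorem.

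First I would reduce to Slater determinants. Since $\tr[H(\vec r)\rho]$ is linear in $\rho$ and every separable state is by Def.~\ref{DefSep} a convex combination $\rho=\sum_p\lambda_p\rho_p^{(N_p)}$ of pure product states, we have $\tr[H(\vec r)\rho]=\sum_p\lambda_p\tr[H(\vec r)\rho_p^{(N_p)}]\ge\inf_p\tr[H(\vec r)\rho_p^{(N_p)}]$, so the infimum defining $E(\vec r)$ is attained on pure product states and $E(\vec r)=\inf_{\text{Slater det}}\tr[H(\vec r)\rho]$. Next I evaluate the objective on a Slater determinant $|\Psi\rangle=\prod_j a'^\dagger_j\vac$. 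This is a pure, number-conserving Gaussian state, so its anomalous block $Q$ vanishes and its covariance matrix \eqref{Gammac} is fixed by $R$ alone; the antisymmetry $R^\dagger=-R$ holds, and purity ($4\Gamma_c\Gamma_c^\dagger=\one$ with $Q=0$) collapses to $4R^2=-\one$. Wick's theorem \eqref{Wick} then gives $\langle\ad_i\ad_j a_k a_l\rangle=\gamma_{il}\gamma_{jk}-\gamma_{ik}\gamma_{jl}$ with $\gamma_{kl}=\langle\ad_k a_l\rangle$, while $\langle\ad_i a_j\rangle=\gamma_{ij}$ is already linear. Substituting into $H(\vec r)=\sum\vec r_{ijkl}\ad_i\ad_j a_k a_l+\sum\vec r_{ij}\ad_i a_j$ and collecting terms reproduces the quadratic-plus-linear form with $[M(\vec r)]_{(ik)(jl)}=-\vec r_{ijkl}+\vec r_{ijlk}$ and $[w(\vec r)]_{kl}=\vec r_{kl}$; this is pure index bookkeeping, and the invertible affine change of variables $\gamma\leftrightarrow R$ stated in the lemma turns it into \eqref{eq:EviaR}.

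The subtlest point — and the main obstacle — is the last assertion that one may drop the purity constraint $4R^2=-\one$ in favour of the full convex CM condition $4RR^\dagger\le\one$ without lowering the infimum. This is \emph{not} automatic, because the objective is quadratic rather than affine in $R$, so its minimum over the convex CM set need not lie at an extreme (pure) point. I would resolve it physically rather than by convexity: by the normal-mode decomposition following \eqref{eq:Gaussrho}, every number-conserving Gaussian state is a mixture of Fock states in its own normal-mode basis, hence lies in $\mathcal{S}_{sep}$; and Wick's theorem holds for \emph{all} Gaussian states, so the very same expression $\vec\gamma^T M(\vec r)\vec\gamma+w(\vec r)^T\vec\gamma$ computes $\tr[H(\vec r)\rho_G]$ for the possibly mixed Gaussian state $\rho_G$ with block $R$ (again $Q$ drops out because only number-conserving observables enter).

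Consequently, as $R$ ranges over $\{R=-R^\dagger,\,4RR^\dagger\le\one\}$ the relaxed objective equals $\inf_{\rho_G\text{ n.c.\ Gaussian}}\tr[H(\vec r)\rho_G]$. This is $\ge\inf_{\rho\in\mathcal{S}_{sep}}\tr[H(\vec r)\rho]=E(\vec r)$ since every such $\rho_G$ is separable, while it is $\le\inf_{\text{Slater det}}\tr[H(\vec r)\rho]=E(\vec r)$ since the pure Slater determinants of the first formula are themselves number-conserving Gaussian states. The two infima are thereby squeezed equal, proving both the identity \eqref{eq:EviaR} and the admissibility of extending the minimization to all (mixed) covariance matrices. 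The only care needed beyond this is in checking the reality of the quadratic form and the exact signs in the $\gamma\leftrightarrow R$ dictionary, which I regard as routine once the factorization above is in place.
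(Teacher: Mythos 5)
Your proposal is correct, and its first half coincides with the paper's own proof: linearity of $\rho\mapsto\tr[H(\vec r)\rho]$ reduces the infimum over $\mathcal{S}_{sep}$ to pure product states; these are number-conserving Gaussian states, so Wick's theorem factorizes $\langle \ad_i\ad_ja_ka_l\rangle=\gamma_{il}\gamma_{jk}-\gamma_{ik}\gamma_{jl}$ (your index placement in fact corrects a typo in the paper's displayed expansion), the anomalous term $\langle\ad_i\ad_j\rangle\langle a_ka_l\rangle$ vanishes by number conservation, and purity together with $Q=0$ becomes $4R^2=-\one$; the bookkeeping yielding $M(\vec r)$ and $w(\vec r)$ is identical. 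Where you genuinely depart from the paper is the final extension claim. The paper dismisses it in one sentence --- only the block $R$ enters the objective, and if $\Gamma_c(Q,R)$ is a valid CM then so is $\Gamma_c(0,R)$ --- which justifies discarding $Q$ but is silent on the point you correctly isolate as non-automatic: because the objective is quadratic rather than linear in $R$, relaxing the purity constraint $4R^2=-\one$ to the full convex body $4RR^\dagger\leq\one$ could a priori lower the infimum. Your squeeze supplies the missing step: for any admissible $R$ the relaxed objective equals, by Wick's theorem (which holds for mixed Gaussian states), the expectation $\tr[H(\vec r)\rho_G]$ in the number-conserving Gaussian state with blocks $(0,R)$; diagonalizing the anti-hermitian $R$ by a passive transformation exhibits $\rho_G$ as a product of single-mode thermal states, hence a mixture of Fock product states and therefore an element of $\mathcal{S}_{sep}$, so the relaxed infimum is $\geq E(\vec r)$, while $\leq E(\vec r)$ is trivial since Slater determinants are themselves such Gaussian states. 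This is in substance the same fact that powers Thm.~\ref{symp-spec-thm} (Gaussian states with $Q=0$ are separable), so your route makes explicit a dependence the paper leaves implicit; the cost is a slightly longer argument, the gain is that the ``extend over all CMs'' assertion is actually proved rather than asserted. The residual sign ambiguity in the $\gamma\leftrightarrow R$ dictionary that you flag as routine is indeed at the level of the paper's own conventions and does not affect the argument.
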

\begin{proof}
  The minimum $\mathrm{min}_{\rho\in\mathcal{S}_\mathrm{sep}}\{\langle
  H(\vec{r})\rangle_{\rho}\}$ is attained for \emph{pure} separable states,
  i.e., product states. All pure fermionic product states are Gaussian; then
  by Wick's theorem the expectation values of the $O_{ijkl}=\ad_i\ad_ja_ka_l$
  factorize as $\langle \ad_i\ad_ja_ka_l\rangle_{\rho} =
  \langle\ad_i\ad_j\rangle\langle a_ka_l\rangle -
  \langle\ad_ia_k\rangle\langle\ad_ja_k\rangle +
  \langle\ad_ia_l\rangle\langle\ad_ja_k\rangle$. Since product states are also
  number conserving, the first term vanishes. For the other two we use that
  $\langle \ad_ka_l\rangle = -iR_{lk}+\frac{1}{2}\delta_{kl}$, i.e., they only
  depend on the off-diagonal block $R$. The pure state condition
  $\Gamma^2=-\one$ translates into $4R^2=-\one$ for product states $Q=0$.\\
  We could extend over all CMs $\gamma_c$ since only the block $R$ appears in
  the expression to be minimized over and since if $\Gamma_c(Q,R)$ is a valid
  CM then so is $\Gamma_c(0,R)$.
\end{proof}
This lemma provides a systematic way to construct pairing witnesses.
\subsection{Pairing measures}\label{measure-subsection}
It would be desirable if a theory of pairing not only answers the question
whether a state is paired or not, but also quantifies the amount of pairing
inherent in a state. For this purpose, we introduce the notion of a pairing
measure:
\begin{definition}
Let $\rho$ be an $M$-mode fermionic state. A pairing measure is a
map
     $$\mathcal{M} : \rho \mapsto \mathcal{M}(\rho) \in \mathds{R}_+,$$
which is invariant under passive transformations
and fulfills
$\mathcal{M}(\rho)=0$ for every unpaired state $\rho$.
\end{definition}
In addition, it is often useful to normalize $\mathcal{M}$ such that
$\mathcal{M}(\rho_0)=1$ defines the ``unit of pairing''. The pair state
$\ket{\Phi}$ of Eq.~(\ref{eq:pair}) would be an obvious choice for this unit,
but as we see in Sec.~\ref{Gaussian-measure-section} for Gaussian
states a different unit is more natural, therefore we do not include
normalization in the above definition.

In the geometric picture of the previous section, a candidate for
a pairing measure that immediately comes to mind is the distance
of $\vec{v}_\rho$ from the set $C$. This measure is positive, and
it is invariant under passive transformations, as those correspond
to a basis change in the space of expectation vectors.

The computation of this distance is, in general, very difficult
and there is no evident operational meaning to this quantity. In
the following sections we will introduce a different measure that
can be computed for relevant families of states and allow a
physical interpretation in terms of quantifying a resource for
precision measurements.


\section{Pairing for Gaussian states}\label{Gaussian-Pairing-section}
In this section we study pairing of fermionic Gaussian states. We
start with the construction of pairing witnesses in
Sec.~\ref{Gaussian-witness} which will later be a useful guideline
for the construction of pairing witnesses for number conserving
states. Then we derive a simple necessary and sufficient criterion
for pairing of Gaussian states. In Sec.~\ref{angular-section} we
show how pure fermionic Gaussian states can be connected to an
$SU(2)$ angular momentum representation. This picture will guide
us to the construction of a pairing measure.

\subsection{Pairing witnesses for Gaussian states}\label{Gaussian-witness}
Pairing witnesses for pure Gaussian states emerge naturally from the
property that every such state is the ground state of a quadratic
Hamiltonian (see Sec. \ref{Gaussian-section}). This leads to the
following theorem:
\begin{theorem}\label{Gaussian-witness-thm}
Let $0 < \epsilon < 1$ and let $0 \leq |v_k|^2\leq 1-\epsilon$ and
$\sum_k|v_k|^2>0$. Then the operator
\begin{eqnarray}\label{Hquad} H&=&\sum_{k=1}^M
2(1-\epsilon-|v_k|^2)(n_{k }+ n_{-k})\nonumber\\&-& 2 v_k u_k^*\Pd_k
- 2 v_k^* u_k P_k
\end{eqnarray}
is a pairing witness, detecting $$|\Psi_{Gauss}\rangle = \prod_k
(u_k + v_k P^{\dagger}_k)|0\rangle.$$
\end{theorem}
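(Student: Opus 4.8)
The plan is to verify the two defining properties of a pairing witness directly for $H$, exploiting two structural observations. First, since $H$ is manifestly a linear combination of the one-particle operators $n_{\pm k}$ and the pair operators $\Pd_k,P_k$, we have $H\in A_2$. Consequently $\tr[H\rho]$ depends on $\rho$ only through its one- and two-particle expectations, so by the very definition of ``unpaired'' it suffices to establish $\tr[H\rho_s]\ge 0$ for all \emph{separable} states $\rho_s$: any unpaired state shares these expectation values with some separable state. By convexity of $\mathcal{S}_{sep}$ it is then enough to treat pure product states.

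For the non-negativity I would use that every separable state is number conserving (a mixture of $N_{\mathrm{op}}$-eigenstates), so that the particle-number--changing expectations vanish, $\langle \Pd_k\rangle=\langle P_k\rangle=0$. Hence for any $\rho_s$ only the diagonal part survives,
\begin{equation}
\tr[H\rho_s]=\sum_{k=1}^M 2\bigl(1-\epsilon-|v_k|^2\bigr)\,\langle n_k+n_{-k}\rangle_{\rho_s}.
\end{equation}
The hypothesis $0\le|v_k|^2\le 1-\epsilon$ makes every coefficient non-negative, and $n_k+n_{-k}\ge 0$ as an operator, so each summand is non-negative and $\tr[H\rho_s]\ge 0$. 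This confirms the witness inequality on the whole unpaired set.

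To show that $H$ detects the target, I would evaluate $\langle H\rangle$ on $|\Psi_{Gauss}\rangle=\prod_k(u_k+v_k\Pd_k)\vac$, which factorizes over the mode pairs $(k,-k)$. For each factor one reads off $\langle n_k+n_{-k}\rangle=2|v_k|^2$, $\langle \Pd_k\rangle=u_k v_k^*$ and $\langle P_k\rangle=u_k^* v_k$. A short computation, using $|u_k|^2+|v_k|^2=1$, combines the diagonal and pairing contributions into
\begin{equation}
\langle H\rangle=\sum_k 4|v_k|^2\bigl(1-\epsilon-|u_k|^2-|v_k|^2\bigr)=-4\epsilon\sum_k|v_k|^2,
\end{equation}
which is strictly negative because $\epsilon>0$ and $\sum_k|v_k|^2>0$. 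Thus $\tr[H\,|\Psi_{Gauss}\rangle\langle\Psi_{Gauss}|]<0$ while $\tr[H\rho_u]\ge0$ for all unpaired $\rho_u$, establishing that $H$ is a pairing witness detecting $|\Psi_{Gauss}\rangle$.

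I expect the only genuine subtlety to be the reduction to separable states: one must invoke the $A_2$-membership of $H$ to argue that non-negativity on separable states already implies it on the full unpaired set, and one must justify that number conservation kills the off-diagonal pair terms. Once these are in place the remaining computations are routine. The role of the bound $|v_k|^2\le 1-\epsilon$ is precisely to guarantee non-negativity of the diagonal coefficients, which is the crux of the witness property, while $\epsilon>0$ creates the margin that renders $\langle H\rangle$ strictly negative on the target state.
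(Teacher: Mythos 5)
Your proposal is correct and follows essentially the same route as the paper: both arguments hinge on the fact that separable states are number conserving, so $\langle P_k^{\dagger}\rangle=\langle P_k\rangle=0$ and non-negativity of $\tr[H\rho_s]$ reduces to the sign condition $|v_k|^2\le 1-\epsilon$, and both conclude by the same direct evaluation $\langle \Psi_{Gauss}|H|\Psi_{Gauss}\rangle=-4\epsilon\sum_k|v_k|^2<0$ on the mode-pair-factorized state. The only difference is presentational: the paper motivates the form of $H$ by exhibiting $|\Psi_{Gauss}\rangle$ as the ground state of a quadratic parent Hamiltonian $H_0$ and subtracting the minimal separable energy, which plays no logical role beyond what you verify directly, and you are if anything more explicit about the reduction from all unpaired states to separable ones via $H\in A_2$.
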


\begin{proof}
Every Gaussian state is the ground state of a quadratic
Hamiltonian. In particular, $|\Psi_{Gauss}\rangle$ is seen to be
the ground state of
\begin{eqnarray*}
H_0&=&\sum_{k=1}^M (|u_k|^2-|v_k|^2)(n_{k }+ n_{-k}-1)\nonumber\\&-&
2 v_k u_k^*\Pd_k - 2 v_k^* u_k P_k
\end{eqnarray*}
with the help of \eqref{GammaBCS}-\eqref{Gammaq}, as the
Hamiltonian matrix of $H_0$ and $\Gamma$ can be brought
simultaneously to the standard forms \eqref{eq:GaussG} resp.
\eqref{eq:GammaDiag}. Subtracting the minimal energy for separable
states, $E_\mathrm{min}^\mathrm{sep}= -(1-2\epsilon)\sum_k \langle
n_k + n_{-k}\rangle -(|u_k|^2-|v_k|^2)$ (note that $\langle
P_k\rangle = 0$ for separable states), we arrive at the
Hamiltonian \eqref{Hquad}. For separable states $\rho$, the
expectation values of $\Pd_k$ vanish, so that $\tr[H \rho]\geq 0$.
For the Gaussian BCS state, however, $\langle \Psi_{Gauss}|
H|\Psi_{Gauss}\rangle = -4 \epsilon \sum_k |v_k|^2 <0$.
\end{proof}

\subsection{Complete solution of the pairing problem for fermionic Gaussian states}
\label{Symp-spec} Every Gaussian state is completely characterized
by its covariance matrix, so that the solution of the pairing
problem must be related to it. The pairing problem is completely
solved by the following theorem:

\begin{theorem}\label{symp-spec-thm}
Let $\rho$ be the density operator of a fermionic Gaussian state
with covariance matrix $\Gamma_c$ defined in \eqref{Gammac}. Then
$\rho$ is paired iff $Q \neq 0$.
\end{theorem}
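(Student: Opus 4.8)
The plan is to prove the two implications separately, exploiting that for fermions $[a_k,a_l]=2a_ka_l$, so that $Q_{kl}=i\langle a_ka_l\rangle$, and that every separable state is number conserving and therefore has $Q=0$ (since $a_ka_l$ changes the particle number by $-2$, its expectation vanishes on number-conserving states). This observation immediately identifies the nontrivial direction as the claim that $Q=0$ forces $\rho$ to be unpaired, while $Q\neq0\Rightarrow$ paired should follow from a single distinguishing observable.

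For the direction $Q\neq0\Rightarrow$ paired, I would exhibit one explicit $A_2$ observable. If $Q\neq0$ then $\langle a_ka_l\rangle\neq0$ for some pair $(k,l)$, so at least one of the Hermitian operators $O_1=a_ka_l+\ad_l\ad_k$ and $O_2=i(a_ka_l-\ad_l\ad_k)$ has nonvanishing expectation in $\rho$; indeed $\langle O_1\rangle=2\,\mbox{Re}\langle a_ka_l\rangle$ and $\langle O_2\rangle=-2\,\mbox{Im}\langle a_ka_l\rangle$, and both operators lie in $A_2$. Since every separable state is number conserving, $\langle O_1\rangle=\langle O_2\rangle=0$ on all of $\mathcal{S}_{sep}$. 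Hence the expectation value of this single $A_2$ observable cannot be reproduced by any separable state, and $\rho$ is paired by Definition~\ref{Pairing-Def}.

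For the converse, $Q=0\Rightarrow$ unpaired, the strategy is to show that a Gaussian state with $Q=0$ is in fact \emph{separable}, which makes it trivially unpaired (it reproduces all its own expectations). First I would argue that, for Gaussian states, $Q=0$ is equivalent to $\rho$ being number conserving: the covariance matrix determines the Gaussian state uniquely, and since $R$ alone fixes the one-particle density matrix through $R_{kl}=\frac{i}{2}\delta_{kl}-i\langle\ad_la_k\rangle$, every $Q=0$ covariance matrix is realized by the number-conserving Gaussian state carrying that $R$. A number-conserving Gaussian state is the Gibbs state of a quadratic, number-conserving Hamiltonian $H=\sum_{ij}h_{ij}\ad_ia_j$; diagonalizing the Hermitian matrix $h$ by a passive transformation brings $\rho$ into a product of commuting single-mode thermal states $\propto\exp(-\epsilon_j\ad_ja_j)$ (with $\epsilon_j=\pm\infty$ allowed for pure occupied or empty modes). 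This product is diagonal in the Fock basis, i.e., a convex combination of Fock states $|n_1\ldots n_M\rangle\langle n_1\ldots n_M|$, each of which is a product state in the sense of Definition~\ref{DefSep}. Hence the transformed state is separable, and because $\mathcal{S}_{sep}$ is invariant under passive transformations (see the remark following Definition~\ref{DefSep}), so is $\rho$ itself.

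The main obstacle, and the step deserving the most care, is establishing that the $Q=0$ Gaussian states are exactly the number-conserving ones and then making their separability concrete: one must verify that diagonalizing $h$ genuinely produces a Fock-diagonal state and invoke the convexity and passive-invariance of $\mathcal{S}_{sep}$. The opposite implication is essentially immediate once one records that $\langle a_ka_l\rangle$ vanishes identically on separable states.
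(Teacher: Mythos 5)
Your proposal is correct, and it follows the same skeleton as the paper's proof: both directions rest on the observation that separable states are number conserving (so $Q=0$ on $\mathcal{S}_{sep}$), and the converse is handled in both cases by diagonalizing the anti-hermitian block $R$ with a passive transformation. The differences are worth noting. For $Q\neq 0\Rightarrow$ paired, the paper argues contrapositively (unpaired implies a separable state with the same covariance matrix, hence $Q=0$), whereas you exhibit the explicit hermitian witnesses $a_ka_l+\ad_l\ad_k$ and $i(a_ka_l-\ad_l\ad_k)$ in $A_2$; this is equivalent but slightly more operational. More substantively, for $Q=0\Rightarrow$ unpaired the paper's proof is terse: it only asserts that the diagonalized covariance matrix ``can be realized by a separable state fulfilling $\langle n_i\rangle=r_i$.'' Taken literally this matches only second moments, while Definition~\ref{Pairing-Def} requires reproducing \emph{all} expectations in $A_2$, including four-point functions $\langle\ad_i\ad_ja_ka_l\rangle$, which for the Gaussian $\rho$ are fixed by Wick's theorem. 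Your argument closes this implicit step: you show that the $Q=0$ Gaussian state, after the passive rotation, \emph{is} the Fock-diagonal product of single-mode thermal states, hence itself a convex mixture of Fock states, i.e.\ separable by Definition~\ref{DefSep} (with passive invariance of $\mathcal{S}_{sep}$ pulling this back to $\rho$), so all its $A_2$ expectations are trivially reproduced by a separable state --- indeed by itself. This is exactly the reading that makes the paper's argument airtight (the separable state realizing the covariance matrix is the Gaussian state itself), so your version is, if anything, the more complete one; your routing through the number-conserving Gibbs state of $H=\sum_{ij}h_{ij}\ad_ia_j$ is mildly redundant, since diagonalizing $R$ directly already yields the thermal product form, but it is not wrong.
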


\begin{proof}
First, note that the condition $Q=0$ is independent of the choice
of basis. If $\rho$ is not paired, then there exists a separable
state having the same covariance matrix as $\rho$. This implies
$Q=0$, as separable states are convex combinations of states with
fixed particle number, and thus $\langle i/2 [a_k, a_l]\rangle = 0$.\\
Now, let $\Gamma_c$ be the covariance matrix of a paired Gaussian
state, and assume that $Q=0$. As $R$ is anti-hermitian, there
exists a passive transformation such that $R_{ij}= r_i
\delta_{ij}$, and $Q=0$ is unchanged. But such a covariance matrix
can be realized by a separable state fulfilling $\langle n_i
\rangle = r_i$ in contradiction to the assumption.
\end{proof}

Note that Thm. \ref{symp-spec-thm} implies that a Gaussian state
is unpaired iff it is number conserving.

\subsection{Angular momentum algebra for Gaussian states}\label{angular-section}
\begin{figure}
\begin{center}
\includegraphics[scale=0.7]{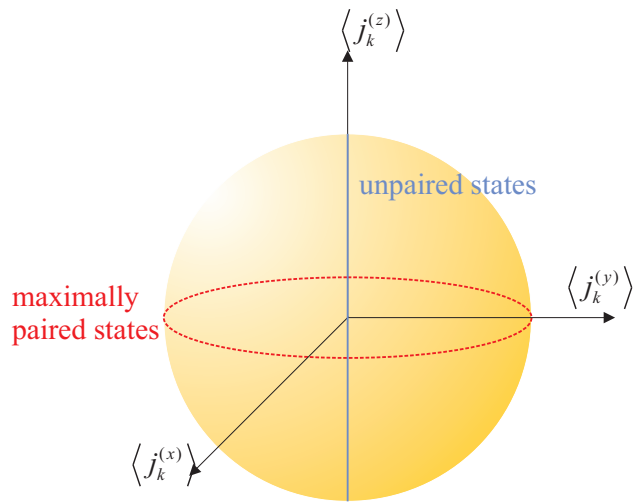}
\caption{Bloch sphere representation of the expectation values of
$j_k^{(x)}$, $j_k^{(y)}$ and $j_k^{(z)}$ for a variational BCS
state. All pure states lie on the surface of the sphere. Unpaired
states lie on the $z$-axis, while the maximally paired states lie
on the equator.\label{Blochsphere}}
\end{center}
\end{figure}
In this section we will show that pairing of Gaussian states can
be understood in terms of an $SU(2)$ angular momentum algebra. The
expectation values of the angular momentum operators can be
visualized using a Bloch sphere, giving us further understanding
of the structure of pairing in Gaussian states. It later leads to
the construction of a pairing measure for these states. Define the
operators \cite{PhysRev.112.1900,PhysRevLett.93.130403}
\begin{eqnarray*}
j_k^{(x)}&=&\frac{1}{2}\left(\Pd_k + P_k \right),\\
j_k^{(y)}&=&\frac{i}{2}\left(\Pd_k - P_k \right),\\
j_k^{(z)}&=&\frac{1}{2}\left(1-n_{k}-n_{-k} \right).
\end{eqnarray*}
They fulfill $\left[j_k^{(a)}, j_k^{(b)}\right]= i
\varepsilon_{abc} j_k^{(c)}$, $a,b,c\in \{x,y,z\}$, forming an
$SU(2)$ angular momentum algebra. For pure Gaussian states in the
standard form \eqref{BCS-var} the expectation values of the
angular momentum operators are given by $\langle j_k^{(x)} \rangle
= \mbox{Re}(u_k\bar{v_k})$, $\langle j_k^{(y)} \rangle =
\mbox{Im}(u_k\bar{v_k})$, and $\langle j_k^{(z)} \rangle
=\frac{1}{2} (1-2 |v_k|^2)$. As  $j^2 = \sum_{i=x,y,z} \langle
j_k^{(i)} \rangle^2= 1/4$ independent of $u_k$ and $v_k$, the
expectation values for every pure Gaussian state lie on the
surface of a sphere with radius $1/2$. As we have shown in Thm.
\ref{symp-spec-thm}, every unpaired state $\rho_u$ fulfills
$\langle j_k^{(x)} \rangle_{\rho_u} = \langle j_k^{(y)}
\rangle_{\rho_u} = 0$, so that these states are located on the
$z$-axis. The states on the equator have $\langle j_k^{(x)}
\rangle^2 +\langle j_k^{(y)}\rangle^2 = 1/4$, i.e. they correspond
to $|u_k|^2=|v_k|^2=1/2$. The situation is depicted in Fig.
\ref{Blochsphere}. Referring to the states on the equator as
maximally paired is suggested by the fact that they have maximal
distance from the set of separable states. This intuitive picture
is further borne out by two observations: first, the states on the
equator display maximal entanglement between the involved modes
\cite{LinearOptics}. Second, they have the
property\footnote{Maximally entangled states of two qubits share
an analogous
  property about entanglement witnesses \cite{bertlmann05}.} that they
achieve the minimal expectation value of any quadratic witness operator (up to
basis change). To see this, recall from Sec.~\ref{Gaussian-section} that any
quadratic Hamiltonian of two modes $k,-k$ is (up to a common factor and
basis change) of the form $\alpha\one+\sin\theta (n_k+n_{-k})+\cos\theta
(P_k^\dagger+P_k)$. It is a witness (i.e., has positive expectation for
all product states), if $\alpha\geq |\mathrm{max}\{0,2\sin\theta\}|$ and does
detect some paired state as long as $\sin\theta>-1$. The minimum eigenvalue is
$\sin\theta-1+\alpha$ and the minimum $\tr(W\rho)=-1$ is attained for $\rho =
\frac{1}{2}(1+P_k^\dag)\proj{0}(1+P_k)$.

The pairing measure which is the topic of the next section will confirm the
characterization as maximally paired.

\subsection{A pairing measure for Gaussian
states}\label{Gaussian-measure-section} The angular momentum
representation of paired states depicted in Fig. \ref{Blochsphere}
suggests the introduction of a pairing measure via a quantity
related to $|\langle j_k^{(x)} \rangle_{\rho_G}|^2 +|\langle
j_k^{(y)} \rangle_{\rho_G}|^2 = |\langle \ad_k
\ad_{-k}\rangle_{\rho_G}|^2$:

\begin{definition}\label{Gaussian-measure}
Let $\rho$ be a fermionic state, and let $Q_{kl} = i/2 \tr(\rho
[a_k, a_l])$. Then we define
\begin{equation}
\mathcal{M}_G(\rho)= 2||Q||^2_2 = 2\sum_{kl}|Q_{kl}|^2,
\end{equation}
\end{definition}

\begin{lemma}
  $\mathcal{M}_G$ as defined in Eq.~(\ref{Gaussian-measure}) is a pairing
  measure fulfilling $\mathcal{M}_G(\rho)\leq M$ for every
  $M$-mode Gaussian state.
\end{lemma}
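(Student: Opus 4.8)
The plan is to verify the three defining requirements of a pairing measure for $\mathcal{M}_G$—that it is $\mathds{R}_+$-valued, invariant under passive transformations, and vanishing on unpaired states—and then to read off the bound from the physicality constraint $4\Gamma_c\Gamma_c^\dagger\leq\one$ on the covariance matrix. Non-negativity is immediate, since $\mathcal{M}_G(\rho)=2\sum_{kl}|Q_{kl}|^2=2\tr(QQ^\dagger)\geq 0$, so only the remaining three points require work.

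For invariance I would use the transformation rule $Q\mapsto UQU^T$ under a passive transformation $a_k\mapsto\sum_l U_{kl}a_l$ recorded in Sec.~\ref{Gaussian-section}. Writing $\mathcal{M}_G(\rho)=2\tr(QQ^\dagger)$ and noting the algebraic identity $U^T\bar U=(U^\dagger U)^T=\one$ (from unitarity of $U$), cyclicity of the trace gives $\tr\!\big((UQU^T)(UQU^T)^\dagger\big)=\tr\!\big(UQ(U^T\bar U)Q^\dagger U^\dagger\big)=\tr(QQ^\dagger)$, so $\mathcal{M}_G$ is unchanged. For the vanishing on unpaired states, observe that the CAR give $Q_{kl}=\tfrac{i}{2}\langle[a_k,a_l]\rangle=i\langle a_ka_l\rangle$, and that $a_ka_l\in A_2$. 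Any unpaired state reproduces all $A_2$-expectations by some separable $\rho_s$; since separable states are number conserving, $\langle a_ka_l\rangle_{\rho_s}=0$, which forces $Q=0$ and hence $\mathcal{M}_G(\rho)=0$. (For Gaussian $\rho$ this also follows directly from Thm.~\ref{symp-spec-thm}, which identifies the unpaired Gaussian states with $Q=0$.)

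For the bound I would extract the upper-left block of the physicality condition $4\Gamma_c\Gamma_c^\dagger\leq\one$. Using the block form \eqref{Gammac}, the $(1,1)$ block of $\Gamma_c\Gamma_c^\dagger$ equals $QQ^\dagger+RR^\dagger$, so restricting the operator inequality to the first $M$ coordinates (a principal-submatrix restriction, legitimate because $v^\dagger(\one-4\Gamma_c\Gamma_c^\dagger)v\geq 0$ for vectors $v$ supported on the first block) yields $4(QQ^\dagger+RR^\dagger)\leq\one_M$. Dropping the positive-semidefinite term $4RR^\dagger$ gives $QQ^\dagger\leq\tfrac14\one_M$, and taking the trace gives $\tr(QQ^\dagger)\leq M/4$. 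Hence $\mathcal{M}_G(\rho)=2\tr(QQ^\dagger)\leq M/2$, which in particular implies the claimed bound $\mathcal{M}_G(\rho)\leq M$.

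The main obstacle is only bookkeeping: correctly identifying the block structure of $\Gamma_c^\dagger$ so that the diagonal block of $\Gamma_c\Gamma_c^\dagger$ comes out as $QQ^\dagger+RR^\dagger$, and confirming that restricting the operator inequality to a principal submatrix is valid. Everything else—non-negativity, the trace identity for invariance, and the reduction $\langle a_ka_l\rangle=0$ for number-conserving states—is routine.
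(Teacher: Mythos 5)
Your proposal is correct, and on the first three properties (non-negativity, invariance via $Q\mapsto UQU^T$ and $U^T\bar U=\one$, vanishing on unpaired states) it matches the paper; the interesting divergence is in how you obtain the bound and in the scope of the vanishing argument. The paper proves the bound by first arguing that $\mathcal{M}_G$ is maximized on \emph{pure} Gaussian states, using the trace of the full matrix inequality to get $\|Q\|_2^2\leq M-\|R\|_2^2$, and then invoking the Bloch--Messiah standard form \eqref{BCS-var} to compute the maximal value explicitly (note the paper's proof works with $\Gamma_c\Gamma_c^\dagger\leq\one$, which is inconsistent with the normalization $4\Gamma_c\Gamma_c^\dagger\leq\one$ stated after Eq.~\eqref{Gammac}; the standard-form computation is what fixes the final constant there). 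You instead restrict the correctly normalized operator inequality $4\Gamma_c\Gamma_c^\dagger\leq\one$ to the upper-left block, correctly identifying $(\Gamma_c\Gamma_c^\dagger)_{11}=QQ^\dagger+RR^\dagger$, and conclude $QQ^\dagger\leq\frac{1}{4}\one_M$, hence $\mathcal{M}_G(\rho)\leq M/2$. This buys you three things: you never need the pure-state reduction or the standard form; your bound is actually \emph{stronger} than the lemma's $\mathcal{M}_G\leq M$ and is sharp when $M$ counts single-particle modes (attained by maximally paired states on $M/2$ pairs, $|u_k|^2=|v_k|^2=1/2$) --- the apparent factor-$2$ discrepancy with the paper's ``$\mathcal{M}_G=4\sum_{k=1}^M|u_k|^2|v_k|^2\leq M$'' is a bookkeeping ambiguity in whether $k$ in \eqref{BCS-var} indexes $M$ pairs (i.e.\ $2M$ modes) or $M$ modes; and since every physical covariance matrix, Gaussian or not, obeys $4\Gamma_c\Gamma_c^\dagger\leq\one$, your bound holds beyond Gaussian states. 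Likewise, for the vanishing property the paper simply cites Thm.~\ref{symp-spec-thm}, which is a statement about Gaussian states, whereas you inline the relevant half of that proof (unpaired $\Rightarrow$ some separable, hence number-conserving, state reproduces the $A_2$-expectations $\langle a_ka_l\rangle$, forcing $Q=0$); this is the more careful route, since the definition of a pairing measure demands $\mathcal{M}(\rho)=0$ on \emph{every} unpaired state, not only Gaussian ones.
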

\begin{proof}
Under a passive transformation $Q \mapsto UQU^T$, and hence
$||Q||_2^2$ is invariant. Further, we know by Thm.
\ref{symp-spec-thm} that $Q=0$ for unpaired states.\\
It remains to show that for an $M$-mode Gaussian state $\rho$ we
have $\mathcal{M}_G(\rho) \leq M$. Let $\Gamma_c$ be the $2M
\times 2M$ covariance matrix of $\rho$ defined in \eqref{Gammac}.
We show first that $\mathcal{M}(\rho)$ is maximized for pure
Gaussian states. To do so, recall that an admissible covariance
matrix for a Gaussian state in the real representation has to
fulfill $i\Gamma  \leq 1$ with equality iff $\Gamma$ is the
covariance matrix of a pure Gaussian state. This translates into
$\Gamma_c \Gamma_c^{\dagger}\leq \mathds{1}$ with equality iff
$\Gamma_c$ belongs to a pure Gaussian state. Using the form of
$\Gamma_c$ given in \eqref{Gammac}, this implies $2 \tr[Q
Q^{\dagger}+RR^{\dagger}]\leq \tr[\mathds{1}]=2M$. Hence,
$||Q||_2^2 \leq M - ||R||_2^2$. It follows that for fixed value of
$||R||_2^2$ the value of $||Q||_2^2$ is maximal for a pure
Gaussian state. Further, the standard form \eqref{BCS-var}implies
that for every value of $||R||$ such a state exists, and that the
maximal value is given by $||Q||_2^2= 2 \sum_{k=1}^M
|u_k|^2|v_k|^2\leq M/2$, as $|u_k|^2+|v_k|^2=1$.
\end{proof}

Hence, for every pure Gaussian state with standard form
\eqref{BCS-var} the value of the pairing measure is given by
$\mathcal{M}_G(\rho) = 4 \sum_{k=1}^M |u_k|^2|v_k|^2$. Since
$|v_k|^2=1-|u_k|^2$ the measure attains its maximum value for
$|u_k|^2=|v_k|^2=1/2$, i.e., for the states already identified as
maximally paired.

$\mathcal{M}_G(\rho)$ will appear again when we study the use of
paired states for metrology applications, linking the pairing measure to the
usefulness of a state for quantum phase estimation and giving support to the
``resource'' character of paired states.


\section{Pairing of number conserving
states}\label{Number-conserving-section} In the last section we
gave a complete solution to the pairing problem for fermionic
Gaussian states. There, Wick's theorem lead to a reduction of the
problem to properties of the covariance matrix. For number
conserving systems, the situation is more complicated, as now also
operators of the form $\ad_i \ad_j a_k a_l$ have to be taken into
account. However, we will derive pairing witnesses capable of
detecting all number conserving BCS states in
Sec.~\ref{geometry-section} using the concept of convex sets. For
certain classes of BCS states we will construct a family of
improved witnesses using the analogy to the Gaussian states.
Witnesses have the drawback that they depend on the choice of
basis. I.e. even if a witness detects $\rho$, it does not detect
all states related to $\rho$ by a passive transformation. We will
show that the eigenvalues of the reduced two-particle density
matrix can be used to obtain a sufficient criterion for pairing in
Sec.~\ref{eigenvalue-section} which is basis independent. We close
the section with the construction of a pairing measure in
Sec.~\ref{measure-section}.

\subsection{Pairing of all BCS states and geometry of paired states}\label{geometry-section}
\begin{figure}
\begin{center}
\includegraphics[scale=0.48]{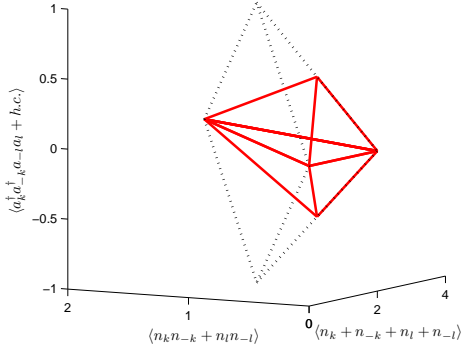}
\caption{(color online) Expectation values of the vector
  Eq.~(\ref{eq:3observables}). For all number conserving states these lie
  within the convex set $C_{\vec{O}_3}^\mathrm{all}$ indicated by the dashed
  back lines. The extreme points of the polytope are given by
  $(0,0,0)$, $(2,0,0)$, $(4,2,0)$ and $(2,1,\pm 1)$. Unpaired states have expectation values in the smaller convex
  set $C_{\vec{O}_3}^\mathrm{unpaired}$ (solid red) which has extreme points
  $(0,0,0)$, $(2,0,0)$, $(4,2,0)$ and $(2,1/2,\pm 1/2)$.
  \label{convexsetfig}}
\end{center}
\end{figure}

In a realistic physical setup it may not be practical to perform
all the measurements needed according to Lemma
\ref{lemma:completefamily} to check the necessary and sufficient
condition for pairing. Having access only to a restricted set of
measurements, necessary criteria for pairing can be derived. In
this section we consider the simplest case of a symmetric
measurement involving four modes, i.e. we are looking at the
following vector of operators:
\begin{equation}
  \label{eq:3observables}
  \vec{O}_3 =\left(
             \begin{array}{c}
               n_k
+n_{-k}+n_{l}+n_{-l} \\
                n_k n_{-k}+n_l n_{-l}\\
               \ad_k\ad_{-k}a_{-l}a_l +h.c. \\
             \end{array}
           \right).
\end{equation}
Remarkably, these expectation values will turn out to be
sufficient to detect all BCS states as paired.\\
We are interested in
$C_{\vec{O}_3}^{\mathrm{unpaired}}=\{\tr(\vec{O}_3\rho) :
\rho\,\,\mathrm{separable}\}$, the set of all expectation values
of $\vec{O}_3$ which correspond to separable states. If for some
$\rho$ the vector $\vec{v}_\rho=\tr(\vec{O}_3\rho)$ is found
outside of $C_{\vec{O}_3}^{\mathrm{unpaired}}$ then it follows
from Lemma~\ref{lemma:completefamily} that $\rho$ is paired.
Membership in $C_{\vec{O}_3}^{\mathrm{unpaired}}$  can be easily
checked by the following Lemma.

\begin{lemma}\label{Convex-Sep}
  A number conserving state $\rho$ has
  expectation values of $\vec{O}_3$ (see Eq.~(\ref{eq:3observables})) compatible with separability if and
  only if $\tr(H^{(p)}_{k\pm}\rho)\geq 0$ for $k=1,2,3$, where
\begin{eqnarray}\label{eq:symwitness} H_{1\pm}^{(p)} &=&
\frac{1}{2}(n_k +n_{-k}+n_{l}+n_{-l})-(n_k n_{-k}+n_l
n_{-l})\nonumber\\
&\pm& (\ad_k\ad_{-k}a_{-l}a_l +h.c.),\\
H_{2\pm}^{(p)} &=& (n_k n_{-k}+n_ln_{-l})\pm (\ad_k\ad_{-k}a_{-l}a_l +h.c.),\label{w2}\\
H_{3\pm}^{(p)} &=& 1-\frac{1}{2}(n_k
+n_{-k}+n_{l}+n_{-l})+\frac{1}{2}(n_k n_{-k}+n_l
n_{-l})\nonumber\\
&\pm& \frac{1}{2}(\ad_k\ad_{-k}a_{-l}a_l +h.c.).\label{w3}
\end{eqnarray}
Hence, the extremal points of the set
$C_{\vec{O}_3}^{\mathrm{unpaired}}$ are given by
$\tr(H^{(p)}_{k\pm}\rho)=0$ for three of the witnesses
~\eqref{eq:symwitness}-\eqref{w3}. The faces of
$C^\mathrm{unpaired}_{\vec{O}_3}$ consist of points for which at
least one of the expectation values $\tr(H^{(p)}_{k\pm}\rho)$ vanishes.\\
$H_{1\pm}^{(p)}$ and $H_{3\pm}^{(p)}$ are also pairing witnesses,
while $H_{2\pm}^{(p)}$ is non-negative on all number conserving
states.
\end{lemma}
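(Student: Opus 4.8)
The plan is to compute, for product states (pure separable states), the joint range of the three expectation values in $\vec{O}_3$, show this range is a polytope with the claimed extreme points, and read off the half-space inequalities as the operators $H^{(p)}_{k\pm}$. Because $\mathcal{S}_{sep}$ is the convex hull of product states and $C_{\vec{O}_3}^{\mathrm{unpaired}}$ is its image under the linear map $\rho\mapsto\tr(\vec{O}_3\rho)$, the set $C_{\vec{O}_3}^{\mathrm{unpaired}}$ is exactly the convex hull of the expectation-vectors of product states. So the whole lemma reduces to: (i) determine the range of $\tr(\vec{O}_3\,\proj{\Psi})$ over product states $\ket{\Psi}=\prod a'^\dagger_j\vac$, and (ii) verify that each $H^{(p)}_{k\pm}$ is a supporting hyperplane of this range, meeting it exactly in a face.

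First I would reduce the relevant single-particle Hilbert space to the four modes $\{k,-k,l,-l\}$, since $\vec{O}_3$ involves only these; occupations of other modes contribute an additive constant to the first component and nothing to the others, and can be absorbed into the extreme point $(0,0,0)$ vs. $(4,2,0)$ (all four modes empty vs. all four full). Within these four modes a product state is determined by a passive (single-particle unitary) transformation and a choice of which of the new modes are occupied. The key structural fact is that for a product state all three observables are quadratic/quartic in the number operators or are $\Pd_k a_{-l}a_l$-type; by Wick's theorem (product states are Gaussian and number conserving) the pair-expectation $\langle\Pd_k\rangle=0$ but the \emph{two-pair} correlator $\langle\ad_k\ad_{-k}a_{-l}a_l\rangle$ need not vanish. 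I would parametrize the occupied modes by their overlaps with the physical modes $k,-k,l,-l$ and directly evaluate the three components; the computation is a finite optimization over $U(4)$ and the occupation pattern. I expect the extreme points $(0,0,0)$, $(2,0,0)$, $(4,2,0)$, $(2,1/2,\pm 1/2)$ to arise from the natural special configurations (empty; one delocalized particle; fully filled; and the symmetric two-particle product state that maximizes $|\langle\ad_k\ad_{-k}a_{-l}a_l\rangle|$ subject to $\langle n_kn_{-k}+n_ln_{-l}\rangle=1/2$).

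Next I would confirm each half-space. Writing $x,y,z$ for the three components, the operators give $\tr(H^{(p)}_{1\pm}\rho)=\tfrac12 x - y \pm z\ge0$, $\tr(H^{(p)}_{2\pm}\rho)=y\pm z\ge0$, and $\tr(H^{(p)}_{3\pm}\rho)=1-\tfrac12 x+\tfrac12 y\pm\tfrac12 z\ge0$. I would check that the six planes $\tr(H^{(p)}_{k\pm}\rho)=0$ cut out exactly the tetrahedron-like polytope with the four stated vertices, that each vertex saturates three of the six inequalities, and that each facet is where one inequality is tight—matching the stated description of faces and extreme points. For the final classification: $H^{(p)}_{2\pm}$ must be shown non-negative on \emph{all} number conserving states (not merely separable ones), which follows because $y\pm z=\langle n_kn_{-k}+n_ln_{-l}\rangle\pm\langle\ad_k\ad_{-k}a_{-l}a_l+h.c.\rangle$ can be written as $\langle(\Pd_k\pm\Pd_l)(P_k\pm P_l)\rangle\ge0$ using $n_kn_{-k}=\Pd_kP_k$; by contrast $H^{(p)}_{1\pm}$ and $H^{(p)}_{3\pm}$ are genuine pairing witnesses because their minimal eigenvalue over all states is strictly below their minimum over separable states, exhibiting a detected paired state (e.g. the equatorial state $\tfrac1{\sqrt2}(\Pd_k+\Pd_l)\vac$ type). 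The main obstacle I anticipate is step (i): rigorously proving that the \emph{product-state} range (not the larger all-states range) is precisely this polytope, since one must optimize the genuinely quartic correlator $\langle\ad_k\ad_{-k}a_{-l}a_l\rangle$ over product states using Wick factorization and show no product state escapes the claimed facets—this is where the asymmetry between $y\pm z$ on separable versus general states produces the smaller vertices $(2,1/2,\pm1/2)$ rather than $(2,1,\pm1)$.
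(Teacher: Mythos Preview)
Your overall architecture---compute the image of product states under $\rho\mapsto\tr(\vec O_3\rho)$, take its convex hull, and read off the facets as the six operators $H^{(p)}_{k\pm}$---is exactly the paper's, and your treatment of $H^{(p)}_{2\pm}$ via the factorization $(\Pd_k\pm\Pd_l)(P_k\pm P_l)\ge0$ and of the witness property via the state $\tfrac1{\sqrt2}(\Pd_k+\Pd_l)\vac$ both coincide with what the paper does. The minor slip that ``other modes contribute an additive constant to the first component'' is harmless (they contribute nothing; the reduction to four modes is justified instead by the fact that partial trace preserves separability, Lemma~\ref{le:sep-trace}).

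Where you diverge is precisely the step you flag as the obstacle: establishing the facet inequalities $\tr(H^{(p)}_{1\pm}\rho)\ge0$ and $\tr(H^{(p)}_{3\pm}\rho)\ge0$ for product states. You propose a direct optimization over $U(4)$ and occupation patterns; the paper sidesteps this entirely with two separate, short arguments. For $H^{(p)}_{1\pm}$ it uses the projector representation of product-state correlators (Lemma~\ref{exp-as-proj}: $\langle\ad_i\ad_ja_ka_l\rangle=(P\otimes P)_{(ij)(lk)}-(P\otimes P)_{(ij)(kl)}$ for a rank-$N$ projector $P$), bounds the pair-hopping term by off-diagonal $|P_{ij}|^2$'s, and then observes that $\tr(H^{(p)}_{1\pm}\rho)\ge\tfrac12\tr[\tilde P-\tilde P^2]\ge0$ where $\tilde P$ is the $4\times4$ principal submatrix of $P$; positivity follows from the eigenvalue interlacing (inclusion) principle, which forces $\mathrm{spec}(\tilde P)\subset[0,1]$. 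For $H^{(p)}_{3\pm}$ the paper again uses Lemma~\ref{le:sep-trace} to reduce to a separable four-mode state $\sum_{n=0}^4\beta_n\proj{n}$ and simply evaluates $\langle H^{(p)}_{3\pm}\rangle_n$ for each $n=0,\dots,4$, finding the values $1,\tfrac12,\tfrac12,0,0$. Both arguments avoid any parametrized optimization over $U(4)$; the interlacing trick in particular is the idea you are missing for $H^{(p)}_{1\pm}$, and without something like it the direct-optimization route, while in principle feasible, is substantially heavier than what the paper actually does.
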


\begin{lemma}\label{Convex-all}
  Every number conserving fermionic state fulfills  $\tr(H^{}_{k\pm}\rho)\geq0$, where
\begin{eqnarray}\label{eq:allwitness}
H_1 &=& \frac{1}{2}(n_k +n_{-k}+n_{l}+n_{-l})-(n_k n_{-k}+n_l
n_{-l})\\
H_{2\pm} &=& (n_k n_{-k}+n_ln_{-l})\pm (\ad_k\ad_{-k}a_{-l}a_l +h.c.)\label{wa2}\\
H_{3\pm} &=& 2-\frac{1}{2}(n_k +n_{-k}+n_{l}+n_{-l}) \pm
(\ad_k\ad_{-k}a_{-l}a_l +h.c.).\nonumber\\\label{wa3}
\end{eqnarray}
  The extremal points of the set $C_{\vec{O}_3}^\mathrm{all}=\{\tr(\vec{O}_3\rho) :
  \rho\in\mathcal{S}(\mathcal{A}_M^{(N)}):M,N\in\mathds{N}\}$ are given by
$\tr(H_{k\pm}\rho)=0$ for three of the witnesses
\eqref{eq:allwitness}-\eqref{wa3}. The faces of
$C^\mathrm{all}_{\vec{O}_3}$ consist of points for which at least
one of the expectation values $\tr(H_{k\pm}\rho)$ vanishes.
\end{lemma}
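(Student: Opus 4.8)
The goal is to characterize the convex set $C_{\vec{O}_3}^{\mathrm{all}}$ of all expectation values $\tr(\vec{O}_3\rho)$ attainable by \emph{arbitrary} number conserving states (as opposed to merely separable ones, which was treated in the previous lemma). The plan is to exhibit explicit supporting hyperplanes—the operators $H_1$, $H_{2\pm}$, $H_{3\pm}$—and to show that each defines a valid inequality on all number conserving states, and that together they carve out exactly a polytope whose extreme points are those listed in the caption of Fig.~\ref{convexsetfig}, namely $(0,0,0)$, $(2,0,0)$, $(4,2,0)$ and $(2,1,\pm1)$.

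First I would establish the non-negativity $\tr(H_{k\pm}\rho)\geq 0$ for every number conserving $\rho$. By convexity it suffices to check this on pure eigenstates of $N_{\mathrm{op}}$, and in fact the four modes $k,-k,l,-l$ decouple from the rest, so I only need to analyze the $2^4=16$-dimensional Fock space of these four modes. For $H_1$ one uses $n_j^2=n_j$ to write $\tfrac{1}{2}(n_k+n_{-k})-n_kn_{-k} = \tfrac{1}{2}(n_k-n_{-k})^2\geq 0$ on each pair, giving non-negativity immediately. For $H_{2\pm}$ and $H_{3\pm}$ the off-diagonal term $\ad_k\ad_{-k}a_{-l}a_l+\text{h.c.}$ couples only the two-dimensional subspace spanned by $\{\ad_k\ad_{-k}\vac,\ad_l\ad_{-l}\vac\}$ (and analogous doubly-occupied sectors), so the operator restricted to the relevant invariant subspaces is a small matrix whose eigenvalues I can compute directly. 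The off-diagonal element has modulus bounded by the diagonal terms through a Cauchy--Schwarz/AM--GM estimate, which yields the claimed positivity.

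Having the inequalities, the second task is to identify the extreme points and verify that the polytope they span coincides with the full attainable set. Each listed extreme point should be realized by an explicit state: $(0,0,0)$ by the vacuum on these modes, $(2,0,0)$ by a singly-occupied configuration like $\ad_k\ad_l\vac$, $(4,2,0)$ by the fully occupied $\ad_k\ad_{-k}\ad_l\ad_{-l}\vac$, and $(2,1,\pm1)$ by the maximally paired superpositions $\tfrac{1}{\sqrt2}(\ad_k\ad_{-k}\pm\ad_l\ad_{-l})\vac$ which make the off-diagonal expectation $\pm1$ while keeping $n_k n_{-k}+n_l n_{-l}=1$ and the total occupation $=2$. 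I would then check that at each such vertex the requisite number (here three) of the inequalities $\tr(H_{k\pm}\rho)=0$ become tight, confirming they are genuine vertices of the region cut out by the half-spaces. The converse inclusion—that no attainable point lies outside this polytope—is exactly the content of the inequalities already proven, since any $\tr(\vec O_3\rho)$ satisfies all of them.

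The main obstacle I anticipate is the bookkeeping needed to confirm that the half-spaces defined by $H_1,H_{2\pm},H_{3\pm}$ intersect in \emph{precisely} the convex hull of the four (really six, counting signs) listed vertices, rather than a larger polytope with additional vertices. This requires checking that the facet structure is correct: that each inequality is actually facet-defining (not redundant) and that no spurious vertices arise from intersecting triples of the bounding hyperplanes. I would handle this by solving the small linear-algebra problem of intersecting the hyperplanes $\tr(H_{k\pm}\rho)=0$ three at a time in the three-dimensional expectation space and discarding any intersection point that violates one of the remaining inequalities; the surviving points must then match the claimed vertex list exactly. Since the structure closely parallels the separable case in Lemma~\ref{Convex-Sep}—differing only in that the paired extreme points move from $(2,1/2,\pm1/2)$ out to $(2,1,\pm1)$—the same diagonalization of the small matrices reused there should carry most of the weight.
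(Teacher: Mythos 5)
Your positivity arguments and your choice of extremal states essentially reproduce the paper's proof: the paper also writes $H_1=\frac{1}{2}(n_k-n_{-k})^2+\frac{1}{2}(n_l-n_{-l})^2$, proves $H_{2\pm}\geq 0$ by factoring it as the product of the two commuting non-negative operators $n_kn_{-k}+n_ln_{-l}$ and $1\pm(\ad_k\ad_{-k}a_{-l}a_l+\mathrm{h.c.})$ (your direct diagonalization on the $16$-dimensional four-mode space is equivalent, since all these operators commute with $n_k+n_{-k}+n_l+n_{-l}$ and can be checked sector by sector; note in passing that the off-diagonal term vanishes identically on the one-, three- and four-particle sectors, not merely outside the pair span), treats $H_{3\pm}$ by exactly the particle-number-sector decomposition you describe, and realizes the extreme points with the same states, including $\frac{1}{\sqrt{2}}(\ad_k\ad_{-k}\pm\ad_l\ad_{-l})\vac$ for $(2,1,\pm 1)$.

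The genuine problem lies in your final step: the verification that the five half-spaces $\tr(H_{k\pm}\rho)\geq 0$ intersect in \emph{precisely} the convex hull of the listed vertices cannot succeed, because it is false. In coordinates $x=\langle n_k+n_{-k}+n_l+n_{-l}\rangle$, $y=\langle n_kn_{-k}+n_ln_{-l}\rangle$, $z=\langle \ad_k\ad_{-k}a_{-l}a_l+\mathrm{h.c.}\rangle$, the half-spaces read $x/2-y\geq 0$, $y\pm z\geq 0$, $2-x/2\pm z\geq 0$, and your triple-intersection enumeration will produce the vertex $(4,0,0)$, where $H_{2\pm}$ and $H_{3\pm}$ are simultaneously tight; this point is unattainable, since $x=4$ forces the reduced state on the four modes to be fully occupied and hence $y=2$. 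Conversely, the attainable extreme point $(2,0,0)$ (realized by $\ad_k\ad_l\vac$) makes only the two witnesses $H_{2\pm}$ vanish, so it is a vertex of $C^{\mathrm{all}}_{\vec{O}_3}$ but \emph{not} of the half-space intersection---it lies in the interior of the edge joining $(0,0,0)$ to $(4,0,0)$. The set $C^{\mathrm{all}}_{\vec{O}_3}$ is indeed the hull of the five listed points, but its two slanted facets through $(2,0,0)$, $(4,2,0)$ and $(2,1,\pm 1)$ are the planes $x-y\pm z=2$, corresponding to the additional operator inequalities $2-(n_k+n_{-k}+n_l+n_{-l})+(n_kn_{-k}+n_ln_{-l})\mp(\ad_k\ad_{-k}a_{-l}a_l+\mathrm{h.c.})\geq 0$, which are provable by the same sector analysis but are not among the $H_{k\pm}$; the planes $\tr(H_{3\pm}\rho)=0$ touch $C^{\mathrm{all}}_{\vec{O}_3}$ only along edges and are rendered redundant once the slanted facets are added. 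To close the gap you must prove these two extra inequalities and redo the vertex enumeration with them included. In fairness, the paper's own proof silently skips this step too---it only establishes the five inequalities and exhibits states for the claimed vertices, and its assertions about the extreme points and faces are inaccurate at exactly the places identified above---but since your proposal explicitly stakes its completeness on the vertex lists matching, the discrepancy is one your argument must confront.
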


The proofs of the two lemmas can be found in
Appendix~\ref{convex-proof}.  We denote by $C^{\mathrm{unpaired}}$
and $C^\mathrm{all}$ the polytopes containing all expectation
vectors $\vec{v}_\rho$ corresponding to unpaired states or all
number conserving states, respectively. They are bounded by 6
resp. 5 planes defined through the witnesses given in Lemmas
\ref{Convex-Sep} and \ref{Convex-all}. The situation is depicted
in Fig.~\ref{convexsetfig}.

The witnesses $H^{(p)}_{1\pm}$ given in Eqs.~(\ref{eq:symwitness})
allow to detect all number conserving BCS states as paired:
\begin{lemma}\label{lemma:allBCSpaired}
  The number conserving BCS state $\ket{\Psi^{(N)}_{BCS}}$ given in
  Eq.~(\ref{BCS-N}) is (except for the trivially unpaired cases
  $\alpha_k=\delta_{kk_0}$ and $N=M$) detected by the witness $H^{(1)}_p$
  by choosing any two modes $(k,l)$.
\end{lemma}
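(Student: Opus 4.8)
The plan is to show that the witness $H^{(1)}_{1+}$ (or $H^{(1)}_{1-}$, picking the sign that makes things negative) of Eq.~\eqref{eq:symwitness}, evaluated on $\ket{\Psi^{(N)}_{BCS}}$ for a suitable pair of modes $(k,l)$, gives a strictly negative expectation value. By Lemma~\ref{Convex-Sep} we already know $H^{(p)}_{1\pm}$ is a pairing witness, so it suffices to exhibit a single negative expectation to conclude the state is paired. First I would compute the relevant one- and two-particle expectation values of $\ket{\Psi^{(N)}_{BCS}}$ in the case $\alpha_k=1\,\forall k$: namely $\langle n_k+n_{-k}+n_l+n_{-l}\rangle$, $\langle n_kn_{-k}+n_ln_{-l}\rangle$, and crucially the off-diagonal term $\langle \ad_k\ad_{-k}a_{-l}a_l\rangle$. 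Because the state Eq.~\eqref{eq:BCS-N2} is a uniform superposition over all ways of placing $N$ pairs into $M$ pair-slots, these expectations reduce to simple combinatorial counting: the probability that slot $k$ is occupied, that both $k$ and $l$ are occupied, and the pair-hopping amplitude from slot $l$ to slot $k$.

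The key quantities are obtained by counting subsets. With $\alpha_k=1$, the normalized amplitude for each $N$-subset of pair-slots is $1/\sqrt{\binom{M}{N}}$. Then $\langle n_kn_{-k}\rangle = \binom{M-1}{N-1}/\binom{M}{N} = N/M$ (probability slot $k$ occupied), so $\langle n_k+n_{-k}\rangle = 2N/M$. The double-occupation term $\langle n_kn_{-k}+n_ln_{-l}\rangle = 2N/M$ as well, since $n_kn_{-k}$ projects onto ``slot $k$ occupied.'' The off-diagonal pair-hopping term $\langle\ad_k\ad_{-k}a_{-l}a_l\rangle$ counts subsets containing $l$ but not $k$ (so the operator moves a pair from $l$ into the empty slot $k$), giving $\binom{M-2}{N-1}/\binom{M}{N}$. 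I would plug these into $H^{(1)}_{1\pm}$: the $\frac12(n_k+\dots)$ term contributes $2N/M$, the $-(n_kn_{-k}+\dots)$ term contributes $-2N/M$, and these cancel, leaving only $\pm 2\langle\ad_k\ad_{-k}a_{-l}a_l\rangle$. Choosing the sign opposite to the (real, positive) hopping amplitude yields a strictly negative value, proving pairing.

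The main obstacle, and the place requiring care, is the boundary/degenerate analysis to justify the stated exclusions. When $N=M$ all slots are always fully occupied, the hopping amplitude $\binom{M-2}{N-1}/\binom{M}{N}$ vanishes (one cannot empty slot $k$), and the witness gives zero — consistent with the state being the trivially unpaired fully-filled product state $\prod_k\Pd_k\vac$. Similarly when $\alpha_k=\delta_{kk_0}$ only a single slot is ever occupied and no off-diagonal coherence exists. I would verify that outside these cases the counting gives a genuinely nonzero hopping term, so that at least one choice of $(k,l)$ works. The general $\alpha_k$ case (beyond the uniform $\alpha_k=1$ treated in Lemma~\ref{Pairend-not-entangled}) requires the analogous weighted subset sums, where $\langle\ad_k\ad_{-k}a_{-l}a_l\rangle = \alpha_k\alpha_l^* \sum_{\text{subsets containing }l,\text{not }k}|\alpha|^2\text{-products}\,/\,\mathcal{N}$ remains nonzero whenever at least two $\alpha$'s are nonzero; one then absorbs the phase of $\alpha_k\alpha_l^*$ by a passive (phase) transformation so the real witness still detects the state. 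Assembling the cancellation and the nonvanishing of the residual hopping term across all nontrivial $(\{\alpha_k\},N,M)$ is the substance of the proof; everything else is routine combinatorial bookkeeping.
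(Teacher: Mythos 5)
Your proposal is correct and follows essentially the same route as the paper's proof: you evaluate $H^{(p)}_{1\pm}$ on the expansion \eqref{eq:BCS-N2}, observe that the number-operator terms cancel (the paper notes the slightly stronger fact that $\ket{\Psi^{(N)}_{BCS}}$ is a zero-eigenstate of $n_k+n_{-k}-2n_kn_{-k}$, but this is the same cancellation), compute the residual pair-hopping term $\langle\Pd_kP_l\rangle$ by the same weighted-subset counting, and fix the sign by a passive transformation. Your combinatorial values for $\alpha_k=1$ and your treatment of the degenerate cases $N=M$ and $\alpha_k=\delta_{kk_0}$ match the paper's argument, so nothing further is needed.
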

\begin{proof}
The first two terms in $H^{(p)}_{1\pm}$ are designed such that
their expectation value vanishes for states such as
$\ket{\Psi_{BCS}^{(N)}}$: Since we either have a pair or no
particles in the modes $(k,-k)$ we are in an
eigenstate with eigenvalue 0 of the operators $n_k+n_{-k}-2n_kn_{-k}$.\\
The expectation value of the third term is found using the
representation Eq.~(\ref{eq:BCS-N2}) as
$|C_N|^2N!^2\mathrm{Re}(\alpha_k\alpha_l^*)\sum_{\stackrel{j_1<\ldots<j_{N-1}}{j_i\not=k,l}}|\alpha_{j_1}|^2\ldots|\alpha_{j_{N-1}}|^2$,
which is nonzero unless $N=M$ or all but one $\alpha_k\not=0$. The
sign can be adjusted by a passive transformation to give $\langle
H_{1+}^{(p)}\rangle_{BCS}^{(N)}<0$.
\end{proof}
This shows that indeed all BCS states are paired, as desired.

The witnesses $H^{(1)}_{p\pm}$, while detecting every BCS state as
paired, are in general far from optimal. As the number conserving
BCS states appear in many physical setting, like in the BEC-BCS
crossover \cite{Leggett}, it is desirable to construction improved
witnesses tailored for this class of states. For BCS states
realized in nature it is often appropriate to assume some symmetry
of the wave function $|\Psi^{(N)}_{BCS}(\alpha_k)\rangle =
\left(\sum_{k=1}^{2M} \alpha_k \Pd_k\right)^N|0\rangle.$ For
example, if  $\Pd_k =
\ad_{\vec{k}\uparrow}\ad_{-\vec{k}\downarrow}$, $\Pd_{k+M} =
\ad_{-\vec{k}\uparrow}\ad_{\vec{k}\downarrow}$ and if we are
dealing with an isotropic setting, $\alpha_k =\alpha_{k+M}$ will
hold. It is further often appropriate to assume that the number of
modes is much bigger than the number of particles, i.e. $M \gg N$.
For this kind of states we will construct pairing witnesses via
the correspondence to the Gaussian picture. We sketch the idea of
this construction leading to Thm. \ref{witness-Nc}, and give the
details in the appendix.\\
We have shown in Sec. \ref{Connection-N-NN} the connection of the
Gaussian wave function and the number conserving wave function via
$|\Psi_{Gauss}\rangle = \sum_{k=1}^N \lambda_N
|\Psi^{(N)}_{BCS}(\alpha_k)\rangle$. Consider a number conserving
observable $O$ and denote by $\langle O\rangle_{Gauss}$ and
$\langle O\rangle_N$ its expectation value for the Gaussian and
$2N$-particle BCS wave function respectively. If the distribution
of $|\lambda_N|^2$ is sharply peaked around some average particle
number $\bn$ with width $\Delta$, then  $\langle O\rangle _{Gauss}
\approx \langle O\rangle_N$ for any integer $N \in [\bn -\Delta,
\bn + \Delta]$. In Thm. \ref{Gaussian-witness-thm} we have
constructed witnesses $H$ for all Gaussian BCS states. As these
witnesses are optimal, they suggest to constitute an improved
witness detecting the corresponding number conserving BCS state.
But $H$ includes terms of the form $\Pd_k$ that do not conserve
the particle number. Hence, this witness cannot be applied
directly to the number conserving case. Using Wick's theorem,
$\langle \Pd_k P_{k+M}\rangle_{Gauss} = \bar{u}_kv_k\langle
\Pd_k\rangle_{Gauss}$ holds under our symmetry assumption. This
suggests that we replace the non-number conserving operator
$\bar{u_k}v_k \Pd_k$ by the number conserving operator $\Pd_k
P_{k+M}$. We define operators
\begin{eqnarray}
H_k &=&2(1-\epsilon
-|v_k|^2)N_k- 4 (\Pd_k P_{k+M} + h.c.),\label{Hk}\\
N_k &=& n_k + n_{-k}+n_{k+M} + n_{-(k+M)},
\end{eqnarray}
where $0\leq |v_k|^2\leq 1-\epsilon\;\forall k$ for $\epsilon
>0$. Further, we introduce the notation $\alpha_k =
v_k/\sqrt{1-|v_k|^2}$, $\bn= \sum_{k=1}^M|v_k|^2$ and we denote by
$N$ the biggest integer fulfilling $\bn -\tilde{N} \geq 0$. Then
the following holds:

\begin{theorem}\label{witness-Nc}
Let $M, \tilde{N} \in \mathds{N}$ and let $1 \ll N < 2M$. If $1 >
\epsilon \geq 18/\sqrt{\pi \bn}$ the Hamiltonian
$H(\{v_k\})=\sum_{k=1}^M H_k$ is a pairing witness detecting
$$|\Psi_{BCS,sym}^{(N)}\rangle =
C_N \left(\sum_{k=1}^M \alpha_k(\Pd_k + \Pd_
{k+M})\right)^N\vac.$$
\end{theorem}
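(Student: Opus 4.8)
The plan is to show two things: first, that $H(\{v_k\})$ is non-negative on all separable (unpaired) states, so that it qualifies as a witness; and second, that its expectation value on the target number-conserving BCS state $|\Psi_{BCS,sym}^{(N)}\rangle$ is strictly negative, which certifies that the state is paired. The whole strategy rests on the correspondence developed in Sec.~\ref{Connection-N-NN}: the Gaussian BCS state $|\Psi_{Gauss}\rangle=\prod_k(u_k+v_k\Pd_k)\vac$ decomposes as $\sum_N\lambda_N|\Psi_{BCS}^{(N)}\rangle$ with $|\lambda_N|^2$ a probability distribution sharply peaked at $\bn=\sum_k|v_k|^2$. Thus expectation values of number-conserving observables computed for the Gaussian state approximate those for the fixed-$N$ state, and the hypothesis $\epsilon\geq 18/\sqrt{\pi\bn}$ is precisely the quantitative budget that will make this approximation good enough for a strict sign.

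First I would verify the witness property, i.e.\ $\tr[H(\{v_k\})\rho_s]\geq 0$ for all separable $\rho_s$. By convexity of $\cS_{sep}$ it suffices to check product states, and by Lemma~\ref{le:sep-trace} and the basis-independence built into the definition, the relevant expectations factorize. The key point is that the cross term $\langle\Pd_k P_{k+M}\rangle$ is, for number-conserving separable states, controlled by the one-body occupations: a product state has no anomalous averages $\langle\Pd_k\rangle=0$, and one bounds $|\langle\Pd_k P_{k+M}\rangle|$ in terms of $\langle N_k\rangle$ via a Cauchy--Schwarz/operator inequality so that the positive one-body term $2(1-\epsilon-|v_k|^2)N_k$ dominates mode by mode. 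This mirrors exactly the Gaussian computation in Thm.~\ref{Gaussian-witness-thm}, where subtracting $E_\mathrm{min}^\mathrm{sep}$ guaranteed positivity on product states; the replacement $\bar u_kv_k\Pd_k\mapsto\Pd_k P_{k+M}$ should preserve this once the coefficients are chosen as in \eqref{Hk}.

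Second I would compute $\langle\Psi_{BCS,sym}^{(N)}|H(\{v_k\})|\Psi_{BCS,sym}^{(N)}\rangle$ and show it is negative. Here I would invoke Wick's theorem \eqref{Wick} for the Gaussian state to evaluate $\langle\Pd_k P_{k+M}\rangle_{Gauss}=\bar u_kv_k\langle\Pd_k\rangle_{Gauss}+\ldots$ under the isotropy assumption $\alpha_k=\alpha_{k+M}$, obtaining for the Gaussian state a value proportional to $-\epsilon\sum_k|v_k|^2<0$, exactly analogous to the $-4\epsilon\sum_k|v_k|^2$ of Thm.~\ref{Gaussian-witness-thm}. The remaining work is to transfer this negative Gaussian expectation to the fixed-$N$ state: writing $\langle H\rangle_{Gauss}=\sum_N|\lambda_N|^2\langle H\rangle_N$ and using that the distribution is concentrated in a window of width $\sim\Delta\sim\sqrt{\bn}$ around $\bn$, one shows $\langle H\rangle_N\approx\langle H\rangle_{Gauss}$ for $N$ near $\bn$, with an error controlled by the fluctuation $\sigma_{\bar N}$ of the particle number.

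The main obstacle will be this last transfer step, namely making the peaked-distribution heuristic quantitatively rigorous. One must bound the difference $|\langle H\rangle_N-\langle H\rangle_{Gauss}|$ for the number-conserving observables $N_k$ and $\Pd_k P_{k+M}$ by estimating how their fixed-$N$ expectations vary as $N$ ranges over the support of $|\lambda_N|^2$, and then show that the accumulated error stays strictly below the negative margin $\sim\epsilon\bn$ produced in the Gaussian computation. The explicit threshold $\epsilon\geq 18/\sqrt{\pi\bn}$ strongly suggests the estimate goes through a Gaussian (central-limit) approximation to $|\lambda_N|^2$ with width $\sigma_{\bar N}\sim\sqrt{\bn}$, so that the relative error is $O(1/\sqrt{\bn})$ and the constant $18/\sqrt{\pi}$ is chosen to dominate it. I expect the bulk of the technical labor—and hence the detailed computation deferred to the appendix—to lie in controlling these fluctuation estimates; the positivity-on-separable-states half should follow more directly from the mode-by-mode operator bound sketched above.
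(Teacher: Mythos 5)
Your second half---transferring the Gaussian margin to the fixed-$N$ state---matches the paper's Appendix~\ref{Proof-witness} in every essential: the paper proves via Lyapunov's central limit theorem (Thm.~\ref{normal-dist-proof}) that the $|\lambda_N|^2$ are asymptotically normal with $\sigma_{\bn}^2=4\sum_k|u_k|^2|v_k|^2$, computes $\langle H(\{v_k\})\rangle_{Gauss}=-4N\epsilon$, and in Lemma~\ref{bound-on-BCS} bounds $|\langle H\rangle_{Gauss}-\langle H\rangle_N|$ by splitting the $\lambda$-sum into bulk and tail, using a telescoping estimate together with monotonicity of $\langle n_k\rangle_n$ in $n$ (this is where the diluteness $M\gg\bn$ enters), so that the accumulated error is $\mathcal{O}(\sigma_{\bn})=\mathcal{O}(\sqrt{\bn})$ and is beaten by the margin precisely when $\epsilon\gtrsim 1/\sqrt{\bn}$. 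Your reading of where the constant $18/\sqrt{\pi\bn}$ comes from is correct.

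The genuine gap is in your first half: mode-by-mode domination of the pair-hopping term by the one-body term is false, and Cauchy--Schwarz cannot repair it. Unlike in Thm.~\ref{Gaussian-witness-thm}, where the anomalous operators $\Pd_k$ have identically vanishing expectation on separable states, the number-conserving replacement $\Pd_kP_{k+M}$ does not vanish on product states. Concretely, the two-particle product state $|\psi\rangle=\frac{1}{2}(\ad_k+\ad_{k+M})(\ad_{-k}+\ad_{-(k+M)})\vac$ has $\langle N_k\rangle=2$ and $\langle \Pd_kP_{k+M}+h.c.\rangle=\frac{1}{2}$ (with adjustable phase), so $\langle H_k\rangle=4(1-\epsilon-|v_k|^2)-2<0$ whenever $|v_k|^2>\frac{1}{2}-\epsilon$, which the hypothesis $|v_k|^2\leq 1-\epsilon$ permits; thus no per-block inequality $\langle H_k\rangle\geq 0$ holds on separable states. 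Cauchy--Schwarz fares no better: it yields only $|\langle 4(\Pd_kP_{k+M}+h.c.)\rangle|\leq 8\sqrt{\langle n_kn_{-k}\rangle\langle n_{k+M}n_{-(k+M)}\rangle}\leq 2\langle N_k\rangle$, which the coefficient $2(1-\epsilon-|v_k|^2)<2$ can never dominate---and since $H$ is block-diagonal and detection forces some $\langle H_k\rangle_{BCS}<0$, no bound valid for \emph{all} states could suffice; separability must enter nontrivially. What the paper actually does (Lemma~\ref{bound-on-ps}) uses separability twice: first the dedicated bound $|\tr[(\ad_i\ad_ja_ka_l+h.c.)\rho_s]|\leq\frac{1}{2}$ of Lemma~\ref{simplewitness}, obtained by reducing the four-mode restriction of $\rho_s$ to a mixture of separable fixed-number states via Lemma~\ref{le:sep-trace} and optimizing explicitly over two-particle product states; and second a \emph{global} counting argument exploiting the particle-number constraint $\sum_k\langle N_k\rangle=N$, so that only $\mathcal{O}(N)$ particles are available to activate negative pair-hopping contributions across the $M$ blocks and the \emph{sum} $\sum_kH_k$ stays nonnegative even though individual $H_k$ need not. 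Relatedly, the witness property is established only relative to separable states with $\tr[N_{op}\rho]=N$ (as licensed by the discussion following Def.~\ref{Pairing-Def}); your sketch quantifies over all separable states, where the block construction above, replicated over many modes with $|v_k|^2$ near $1-\epsilon$, shows that positivity is genuinely delicate rather than an automatic analogue of the Gaussian computation.
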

The proof is given in Appendix \ref{Proof-witness}.

\subsection{Eigenvalues of the two-particle reduced density
matrix}\label{eigenvalue-section} In this section we derive a
basis independent condition for detecting pairing. The
two-particle reduced density matrix $O$ contains all two-particle
correlations. As a change of basis, $\ad_i \mapsto \sum_k
U_{ik}\ad_k$ leaves the spectrum of $O$ unchanged since
$$O^{(\rho)}_{(ij),(kl)}\rightarrow (U\otimes
U)_{(ij),(mn)}O^{(\rho)}_{(mn),(pq)}(U\otimes
U)^{\dagger}_{(pq),(kl)},$$ we are lead to the following theorem:

\begin{theorem}\label{Eigenvalue-Thm}
Let $\rho$ be an unpaired state, and let $O$ be its two-particle
RDM $O$. Then $\lambda_{max}(O)\leq 2$, where $\lambda_{max}$
denotes the maximal eigenvalue.
\end{theorem}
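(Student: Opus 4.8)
The plan is to recast $\lambda_{max}(O)$ as a variational quantity, reduce to pure product states by convexity and basis invariance, and finish by an explicit computation. First I would note that $O$ is Hermitian in its pair indices, since $(\ad_i\ad_j a_l a_k)^\dagger = \ad_k\ad_l a_j a_i$ gives $\overline{O_{(ij)(kl)}}=O_{(kl)(ij)}$. Writing a two-particle vector as an antisymmetric tensor $v_{kl}=-v_{lk}$ (the symmetric part lies in the kernel of $O$), the Rayleigh quotient reads $\langle v|O|v\rangle = \sum_{ijkl}\bar v_{ij}v_{kl}O_{(ij)(kl)} = \tr[\rho\, C_v^\dagger C_v]$, where $C_v\equiv\sum_{kl}v_{kl}a_la_k$ and $C_v^\dagger=\sum_{ij}\bar v_{ij}\ad_i\ad_j$. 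In particular $O$ is positive semidefinite, and for a pure state $\rho=\proj{\Psi}$ one has $\langle v|O|v\rangle = \|C_v\ket{\Psi}\|^2$. Thus $\lambda_{max}(O)=\max_v \|C_v\ket{\Psi}\|^2/\langle v|v\rangle$ and the theorem reduces to the operator bound $\|C_v\ket{\Psi}\|^2\le 2\langle v|v\rangle$.

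Next I would strip away both the mixing and the basis. Since $\rho$ is unpaired, all its two-particle expectations---hence every entry of $O$---agree with those of some separable $\rho_s=\sum_p\lambda_p\rho_p^{(N_p)}$, so $O^{(\rho)}=O^{(\rho_s)}=\sum_p\lambda_p O^{(\rho_p)}$ by linearity of the RDM in the state. As $\lambda_{max}$ is a convex function of a Hermitian matrix, $\lambda_{max}(O^{(\rho)})\le\sum_p\lambda_p\lambda_{max}(O^{(\rho_p)})$, so it is enough to bound $\lambda_{max}$ for a single pure product state. Because the spectrum of $O$ is invariant under passive transformations (as noted just before the theorem), I may take $\ket{\Psi}=\ad_1\cdots\ad_N\vac$, the first $N$ modes occupied.

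Then I would compute directly. For this $\ket{\Psi}$ the operator $a_la_k$ yields a nonzero result only when both $k,l\le N$, namely $a_la_k\ket{\Psi}=s_{kl}\ket{\Psi_{kl}}$ with $s_{kl}=\pm1$ and $\ket{\Psi_{kl}}$ the normalized $(N-2)$-particle state with modes $k,l$ removed; these are orthonormal across distinct unordered pairs. Since both $v$ and the sign flip under $k\leftrightarrow l$, the $(k,l)$ and $(l,k)$ terms add coherently, giving $C_v\ket{\Psi}=\sum_{k<l\le N}2v_{kl}s_{kl}\ket{\Psi_{kl}}$, whence $\|C_v\ket{\Psi}\|^2=4\sum_{k<l\le N}|v_{kl}|^2\le 4\sum_{k<l}|v_{kl}|^2=2\langle v|v\rangle$. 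This establishes $\lambda_{max}(O)\le 2$; the bound is tight, attained already for $N=2$ with $v$ supported on the single occupied pair.

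The computation itself is routine, so I expect the only delicate point to be the antisymmetrization bookkeeping: the factor of two arising from summing over the ordered pairs $(k,l)$ and $(l,k)$ is exactly what makes the sharp constant $2$ appear rather than $1$, and getting the signs $s_{kl}$ consistent is where an error could creep in. The one conceptual step to state cleanly is the legitimacy of the reduction, which is justified because $O$ is linear in $\rho$ while $\lambda_{max}$ is both convex and invariant under passive transformations.
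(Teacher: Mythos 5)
Your proof is correct and follows essentially the same route as the paper's: you reduce via the same chain (unpaired state has the RDM of some separable state, the RDM is linear in the state, and $\lambda_{max}$ is convex on Hermitian matrices --- the paper invokes Weyl's theorem for exactly this step), then use invariance of the spectrum under passive transformations to evaluate on a standard product state. The only cosmetic difference is the final step: where the paper reads off the spectrum $\{0,2\}$ of the product-state RDM $\delta_{ik}\delta_{jl}-\delta_{il}\delta_{jk}$ directly, you obtain the same bound through the Rayleigh quotient $\langle v|O|v\rangle=\Vert C_v\ket{\Psi}\Vert^2$, which as a side benefit also exhibits positivity of $O$; your antisymmetrization bookkeeping and the resulting sharp constant $2$ are correct.
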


\begin{proof}
If $\rho$ is unpaired, then there exists a separable state $\rho_s
\in \mathcal{S}_{sep}$ having the same two-particle RDM.  Any
separable state is of the form $\rho_s = \sum_{\alpha}
\mu_{\alpha} \rho^{(\alpha)}$, where $\rho^{(\alpha)} =
|\psi^{(\alpha)}\bracket \psi^{(\alpha)} |, \; |\psi^{(\alpha)}
\rangle = \prod_i \ad_{\alpha_i}\vac$ and $\sum_{\alpha}
\mu_{\alpha} = 1$. Here, $\{\ad_{\alpha_i}\}_i$ denotes some basis
of mode operators. The RDM is of the form $O^{(\rho)} =
\sum_{\alpha} \mu_{\alpha} O^{({\alpha})}$, where $O^{({\alpha})}$
is the RDM for the state $\rho^{(\alpha)}$. The RDM is calculated
in the basis $\{\ad_i\}_i$, and the different bases are related by
a unitary transformation $\ad_i = \sum_j
U^{(\alpha)}_{ik}\ad_{\alpha_k}$, so that
$O^{(\alpha)}_{(ij)(kl)}= \tr[\rho^{(\alpha)} \ad_i \ad_j
a_la_k]=(U^{(\alpha)}\otimes
U^{(\alpha)})_{(ij)(mn)}O_{(mn),(pq)}^{(\alpha,0)}(U^{(\alpha)}\otimes
U^{(\alpha)})^{\dagger}_{(pq)(kl)}$, where
$O_{(mn),(pq)}^{(\alpha,0)}=\langle\ad_{\alpha_m}\ad_{\alpha_n}a_{\alpha_q}a_{\alpha_p}\rangle_{\rho^{(\alpha)}}$.
In the basis of the $\{\ad_{\alpha_i}\}_i$ the expectation value
$\langle\ad_{\alpha_m}\ad_{\alpha_n}a_{\alpha_q}a_{\alpha_p}\rangle_{\rho^{(\alpha)}}$
is of the simple form
$\langle\ad_{\alpha_i}\ad_{\alpha_j}a_{\alpha_l}a_{
\alpha_k}\rangle_{\rho^{(\alpha)}} =
\delta_{ik}\delta_{jl}-\delta_{il}\delta_{jk}$. Hence, the
spectrum of the $O^{(\alpha)}$ is given by
$\spec(O^{(\alpha)})=\{0,2\}\; \forall\, \alpha.$ The two-particle
RDM is hermitian as $O^{\dagger}_{(ij)(kl)}= \bar{O}_{(kl)(ij)}=
\overline{\langle \ad_k \ad_l a_j a_i \rangle}= \langle \ad_i
\ad_j a_l a_k \rangle= O_{(ij)(kl)}.$ Then Weyl's theorem
\cite{H&J} implies $\lambda_{max}\left(\sum_{\alpha}
\mu_{\alpha}O^{\alpha}\right)\leq
\sum_{\alpha}\mu_{\alpha}\lambda_{max}(O^{\alpha})\leq
\sum_{\alpha}2\mu_{\alpha} \leq 2$.
\end{proof}

An example of a state detected as paired via criterion is the
BCS-state \eqref{BCS-N} with $N=2, M=3$ and all $\alpha_k$ equal.
The largest eigenvalue of its two-particle RDM is given by
$\lambda_{max}= 8/3$.

\subsection{A pairing measure for number conserving
states}\label{measure-section} In Sec.
\ref{Gaussian-measure-section} we have derived a pairing measure
for Gaussian states. The correspondence with number conserving BCS
states will be a guideline to derive a measure for number
conserving states. However, the measure of
Def.~\ref{Gaussian-measure} involves expectation values of the
form $\langle \ad_k \ad_{-k}\rangle$ that vanish for states with
fixed particle number. Yet, Wick's theorem suggests that a
quantity involving expectation values of the form $\langle \Pd_k
P_l\rangle$ will lead to a pairing measure. This is indeed the
case, which is the content of the following theorem:

\begin{theorem}\label{measure-bound-separable-state}
Let $\rho$ be a number conserving pure fermionic state. Then the
following quantity defines a pairing measure:
\begin{eqnarray}
&&\lefteqn{
\mathcal{M}(\rho)=}\nonumber\\
&&\max\left\{\max_{\{\ad_i\}_i}\sum_{kl=1}^M|\langle
\Pd_kP_l\rangle_{\rho}|  - \frac{1}{2}\sum_k \langle n_k
\rangle_{\rho}
,0\right\},\nonumber\\
\end{eqnarray}
where $\Pd_k = \ad_k \ad_{-k}$ and the maximum is taken over all
possible bases of modes $\{\ad_i\}_i$. For mixed states $\rho$, a
measure can be defined via
\begin{equation}
\mathcal{M}(\rho)=\min \sum_i p_i \mathcal{M}(\rho_i),
\end{equation}
where the minimum is taken over all possible decompositions of $\rho
= \sum_i p_i \rho_i$ into pure states $\rho_i$.
\end{theorem}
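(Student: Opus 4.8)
The plan is to verify directly the three defining properties of a pairing measure: that $\mathcal{M}$ takes values in $\mathds{R}_+$, that it is invariant under passive transformations, and that it vanishes on unpaired states. Nonnegativity is immediate, since for pure states the outer $\max\{\cdot,0\}$ is nonnegative and for mixed states the convex roof of a nonnegative function is nonnegative. For passive invariance I would note that a passive transformation $a_k\mapsto\sum_l U_{kl}a_l$ merely relabels the orthonormal bases (together with their pairing $k\leftrightarrow-k$) over which the inner maximum runs, permuting the optimization domain bijectively and so leaving $\max_{\{\ad_i\}}\sum_{kl}|\langle\Pd_kP_l\rangle|$ unchanged; meanwhile $\tfrac12\sum_k\langle n_k\rangle=\tfrac12\langle N_{\mathrm{op}}\rangle$ is the expectation of the total particle number and hence passive invariant. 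Thus the pure-state functional is passive invariant, and because passive transformations carry pure-state decompositions to pure-state decompositions, so is its convex roof.

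The substance lies entirely in the third property. The key step is to prove that for every pure product (Slater) state and in every basis one has $\sum_{kl}|\langle\Pd_kP_l\rangle|\le\tfrac12\langle N_{\mathrm{op}}\rangle$. Since product states are number conserving and Gaussian, Wick's theorem \eqref{Wick} gives $\langle\Pd_kP_l\rangle=\langle\ad_k\ad_{-k}a_{-l}a_l\rangle=\gamma_{kl}\gamma_{-k,-l}-\gamma_{k,-l}\gamma_{-k,l}$, where $\gamma_{ij}=\langle\ad_ia_j\rangle$ is the one-particle density matrix, a Hermitian projector of rank $N=\langle N_{\mathrm{op}}\rangle$. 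Hence $\langle\Pd_kP_l\rangle$ is the $2\times2$ minor of $\gamma$ on rows $\{k,-k\}$ and columns $\{l,-l\}$; writing $\gamma=VV^\dagger$ with $V^\dagger V=\one$ and invoking the Cauchy--Binet formula exhibits the matrix $B_{kl}=\langle\Pd_kP_l\rangle$ as a manifestly positive-semidefinite Gram matrix of antisymmetrized pair amplitudes, with $\tr B=\sum_k\langle n_kn_{-k}\rangle$.

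The heart of the argument, and what I expect to be the main obstacle, is to bound the entrywise $\ell_1$ norm $\sum_{kl}|B_{kl}|$ by $\tfrac12\tr\gamma$ uniformly over all choices of basis and pairing. This is delicate because the bound is tight: a product state that fully occupies $N/2$ pairs has diagonal $B$ with $\tr B=N/2$, attaining equality. Naive term-by-term estimates such as $|B_{kl}|\le\sqrt{B_{kk}B_{ll}}$ or the triangle inequality applied to the Cauchy--Binet expansion overshoot by a factor growing with $N$, so the coherent cancellations among the pair amplitudes must be retained. I would therefore exploit the idempotency $\gamma^2=\gamma$ directly, the goal being to show that the maximum over bases is attained where $B$ is diagonal, at which point $\tr B=\sum_k\langle n_kn_{-k}\rangle\le\tfrac12\langle N_{\mathrm{op}}\rangle$ follows from $\langle n_kn_{-k}\rangle\le\tfrac12(\langle n_k\rangle+\langle n_{-k}\rangle)$.

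With the pure-product bound established, the remaining extensions are routine. For a separable state $\rho_s=\sum_p\lambda_p\rho_p$, linearity of expectations and the triangle inequality give $\sum_{kl}|\langle\Pd_kP_l\rangle_{\rho_s}|\le\sum_p\lambda_p\sum_{kl}|\langle\Pd_kP_l\rangle_{\rho_p}|\le\tfrac12\langle N_{\mathrm{op}}\rangle_{\rho_s}$ in every basis, so evaluating the convex roof on the product-state decomposition yields $\mathcal{M}(\rho_s)=0$. For a pure unpaired state $\rho$, the bracketed functional in each basis depends only on the one- and two-particle expectations, which by the definition of unpairedness coincide with those of some separable state; the bound therefore transfers, the inner maximum stays $\le\tfrac12\langle N_{\mathrm{op}}\rangle_\rho$, and $\mathcal{M}(\rho)=0$. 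Together these establish that $\mathcal{M}$ vanishes on unpaired states and hence is a pairing measure.
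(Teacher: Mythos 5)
Your reduction to pure product states, the Wick factorization $\langle\Pd_kP_l\rangle=\gamma_{kl}\gamma_{-k,-l}-\gamma_{k,-l}\gamma_{-k,l}$ with $\gamma$ a rank-$N$ Hermitian projector, and the transfer of the bound from product states to separable states and then to unpaired states all match the paper's structure (its Lemma~\ref{exp-as-proj} plays exactly this role). But the proposal never actually proves its central estimate, and the strategy you designate for it does not close the gap: you assert that the maximum of $\sum_{kl}|\langle\Pd_kP_l\rangle|$ over bases is attained where the matrix $B_{kl}=\langle\Pd_kP_l\rangle$ is diagonal, and leave this explicitly as a ``goal''. That claim is the entire content of the bound and is itself nontrivial --- nothing in the positive semidefiniteness of $B$ or the idempotency $\gamma^2=\gamma$ obviously forces the entrywise $\ell_1$ norm to be maximized at diagonal $B$, and you give no argument for it. As it stands, the proof of the key inequality (the paper's Lemma~\ref{M-on-product}) is missing.

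Moreover, the obstacle you diagnose is illusory, and the estimate you dismiss as ``overshooting'' is essentially the one the paper uses successfully. Starting from $\langle\Pd_kP_l\rangle = P_{kl}P_{-k,-l}-P_{k,-l}P_{-k,l}$ with $P=P^2=P^{\dagger}$ and $\tr[P]=N$, apply the triangle inequality to the two terms of this $2\times 2$ minor and then $|ab|\le\frac{1}{2}(|a|^2+|b|^2)$ to each product. Summing over $k,l=1,\dots,M$, the four resulting quadratic sums $\sum_{k,l}|P_{kl}|^2$, $\sum_{k,l}|P_{-k,-l}|^2$, $\sum_{k,l}|P_{k,-l}|^2$, $\sum_{k,l}|P_{-k,l}|^2$ tile the full $2M\times 2M$ index set exactly once, so the bound equals $\frac{1}{2}\sum_{a,b}|P_{ab}|^2=\frac{1}{2}\tr[P^{\dagger}P]=\frac{1}{2}\tr[P]=\frac{1}{2}\langle N_{\mathrm{op}}\rangle$ with no $N$- or $M$-dependent loss: idempotency enters through the exact identity $\tr[P^2]=\tr[P]$, not through any diagonalization of $B$. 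The estimates you tested and rejected ($|B_{kl}|\le\sqrt{B_{kk}B_{ll}}$, or a termwise triangle inequality inside a Cauchy--Binet expansion over all $N$ columns) are indeed too lossy, but that does not rule out every term-by-term bound --- the AM--GM step above converts products of projector entries into squares whose total is fixed exactly, and your own tight example (a product state fully occupying the pairs, $P=\one_{2N}\oplus 0$) saturates every inequality in this chain. With the lemma supplied this way, the remainder of your argument (nonnegativity, passive invariance, the extension to separable and then unpaired states, and the convex roof) is correct and agrees with the paper.
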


\begin{proof}
The positivity of $\mathcal{M}$ and its invariance under passive
transformations follow directly from the definition. It remains to
show that $\mathcal{M}$ is zero for separable states. We will
prove in Lemma \ref{M-on-product} Appendix \ref{App-bound-on-prod}
that any separable state of $2N$ particles fulfills
$\sum_{kl}|\langle \Pd_kP_l\rangle| \leq N$, and that this bound
can always be achieved, which concludes the proof.
\end{proof}

We close the section by calculating the value of the pairing
measure for two easy examples. Let

\begin{eqnarray}
|\Psi_s\rangle &=& \bigotimes_{k=1}^N\frac{1}{\sqrt{2}}\left(\Pd_k
+ \Pd_{-k}\right)\vac,\\
|\Psi_{BCS}^{(N,M)} \rangle &=&C_N
\left(\sum_{k=1}^M\Pd_k\right)^N\vac,
\end{eqnarray}
the tensor product of $N$ spin-singlet states and the BCS state
with equal weights, respectively. These states have a pairing
measure $\mathcal{M}(|\Psi_s\rangle)= N$ resp.
$\mathcal{M}(|\Psi_{BCS}^{(N,M)} \rangle=N(M-N)$. Thus, for the
spin singlet the pairing measure has in addition the property that
it is normalized to 1 and additive, while it is subadditive for
$|\Psi_{BCS}^{(N,M)} \rangle$. Further, this example suggests that
the pairing of $\mathcal{M}(|\Psi_{BCS}^{(N,M)} \rangle=N(M-N)$ is
stronger than for $|\Psi_s\rangle$. We will see indeed in
Subsec.~\ref{Sec:pairinteraction} that states of the form $|\Psi_e
\rangle$ allow interferometry at the Heisenberg limit.


\section{Interferometry}\label{Interferometer-section}
The goal of quantum phase estimation is to determine an unknown
parameter $\varphi$ of a Hamiltonian $H_{\varphi}=\varphi H$ at
highest possible accuracy. The value of $\varphi$ is inferred by
measuring an observable $O$ on a known input state that has
evolved under $H_{\varphi}$. In a region where the expectation
value $\langle O(\varphi)\rangle$ is bijective, $\varphi$ can be
inferred by inverting $\langle O(\varphi)\rangle$. In a realistic
setup, however, $\langle O(\varphi)\rangle$ cannot be determined,
as this would require an infinite number of measurements. Instead,
one uses the mean value of the measurement results, $o$, as an
estimate of $\langle O(\varphi)\rangle$. This will result in an
error $\delta \varphi$ for the parameter to be estimated, as for a
given value of $\varphi$ we have $\langle O(\varphi)\rangle = o
\pm \sqrt{\mathrm{Var}(o)}$. Linearizing around the real value of
$\varphi$, it follows that the uncertainty of $\varphi$ is given
by \cite{PhysRevA.50.67, PhysRevA.54.R4649}
\begin{equation}\label{varest}
\langle (\delta \varphi)^2
\rangle=\frac{\mathrm{Var}(O)}{|\partial\langle O\rangle/\partial
\varphi|^2},
\end{equation}
where $\mathrm{Var}(O)=\langle O^2\rangle - \langle O\rangle^2$,
and we have used the fact that $\mathrm{Var}(O) = \mathrm{Var}(o)
$. Further, it can be shown that the minimal uncertainty of
$\varphi$ is bounded by \cite{PhysRevLett.72.3439,BCM-long}
\begin{equation}\label{estviavar}
\langle (\delta \varphi)^2 \rangle \mathrm{Var}(H) \geq \frac{1}{4
\nu},
\end{equation}
where $\nu$ is the number times the estimation is repeated. Eq.
\eqref{estviavar} derives from the Cram\'{e}r-Rao bound and is
asymptotically achievable in the limit of large $\nu$.\\
For a given measurement scheme, i.e. for a given input state and a
given observable $O$, the uncertainty in $\varphi$ can be reduced
by using $N$ identical input states and average over the $N$
measurement outcomes. As the preparation of a quantum state is
costly, a precision gain which has a strong dependence on $N$ is
highly desirable. If these probe states are independent of each
other, the precision scales like $1/\sqrt{N}$. This is the
so-called standard quantum limit (SQL). Using distinguishable or
bosonic systems, this limit can be beaten by a factor of
$\sqrt{N}$ by using number-squeezed input states
\cite{eckert:013814, PhysRevA.33.4033,
PhysRevLett.71.1355,PhysRevA.56.R1083}, $N$-particle NOON states
or maximally entangled GHZ-states $\frac{1}{\sqrt{2}}(|N,0\rangle
+ |0,N\rangle)$ \cite{ PhysRevA.54.R4649, PhysRevA.61.043811,
PhysRevA.66.023819, PhysRevLett.79.3865}. Achieving this so-called
Heisenberg-limit is
the big goal of quantum metrology.\\
Less is known for fermionic states where number squeezing and
coherent $N$-particle states are prohibited by statistics.
Nevertheless, there exist fermionic $N$-particle state which can
achieve the Heisenberg limit for phase measurements in a
Mach-Zehnder interferometer setup \cite{PhysRevLett.56.1515}.
Taking the existence of such states as a starting point, we show
that paired fermionic states can be used as a resource for phase
estimation beyond the SQL. We will consider two different
settings. The first setting will be the standard
Ramsey-interferometer setup of metrology, where the coupling
Hamiltonian is proportional to the number operator. Here, we will
see that paired states lead to a precision gain of a factor of 2
compared to separable states. The second setup involves a more
complex coupling. Here it will turn out that by using paired
states the Heisenberg limit, i.e. a phase sensitivity $(\delta
\varphi)^2 \sim 1/N^2$, can be achieved.

\subsection{Ramsey interferometry with fermions}
\subsubsection{General setup}
\begin{figure}
\begin{center}
\includegraphics[scale = 0.8]{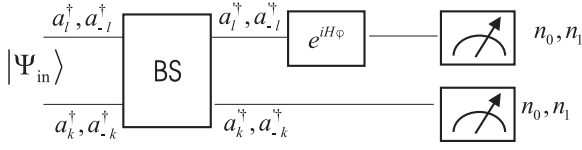}
\caption{Scheme of the Ramsey interferometer setup. The incoming
wave function $|\Psi_{in}\rangle$ enters a beam splitter (BS).
Then particles in the modes $\ad_{\pm l}$ evolve under the
Hamiltonian $\varphi H$. At the end a particle number measurement
is performed on all particles. \label{SI}}
\end{center}
\end{figure}

We consider the standard Ramsey interferometer setup where a state
in the modes $\{\ad_{k_j},\ad_{l_j}\}_{j=-M}^M$ undergoes mode
mixing at a beam splitter,

\begin{eqnarray}\label{Beamsplitter}
\ad_{\pm k_j} &\rightarrow& a^{\prime
\dagger}_{k_j}=\frac{1}{\sqrt{2}}(\ad_{\pm k_j}+\ad_{\pm l_j}),
\\
\ad_{\pm l_j} &\rightarrow& a^{\prime \dagger}_{\pm
l_j}=\frac{1}{\sqrt{2}}(\ad_{\pm k_j}-\ad_{\pm l_j}),
\end{eqnarray}
before evolving under the action of the Hamiltonian
\begin{equation}\label{Hphi}
H_N = \sum_{j=1}^M (n_{l_j}+n_{-l_j}).
\end{equation}
Finally, a particle number measurement is performed on the system,
to compute the parity
\begin{equation}\label{parity}
\mathcal{P} = (-1)^{\sum_{j}n_0^{(j)}+n_1^{(j)}},
\end{equation}
where $n_0^{(j)}=a^{\prime \dagger}_{k_j}a^{\prime}_{k_j}$ and
$n_1^{(j)}=a^{\prime \dagger}_{-k_j}a^{\prime}_{-k_j}$. According to
eq. \eqref{varest}, the phase sensitivity is given by
\begin{equation}\label{deltaphi}
(\delta \varphi)^2 = \frac{1 - \langle \mathcal{P}
\rangle^2}{\left|\frac{\partial}{\partial \varphi}\langle
\mathcal{P} \rangle \right|^2},
\end{equation}
where we have exploited $\mathcal{P}^2 = 1.$ Due to the fermionic
statistics the parity operator can be written in the form
\begin{equation}
\mathcal{P} = \prod_{j=1}^M
\left(1-2(n_0^{(j)}+n_1^{(j)})+4n_0^{(j)}n_1^{(j)}\right).
\end{equation}
In the next section we will derive the best possible precision
obtainable by using unpaired states, and compare this result to
the precision achievable by using paired states. It will turn out
that already at two-particle level paired states have more power
than the unpaired states for our setup.

\subsubsection{Bound on unpaired states for the standard interferometer}
In this section we derive a lower bound on the phase sensitivity
when using an unpaired state of $2N$ particles as input states.

\begin{theorem}
For the Ramsey interferometer described above the phase sensitivity
is bounded by
\begin{equation}\label{phiminprod}
(\delta \varphi)^2\geq \frac{1}{2\nu N},
\end{equation}
when an unpaired state of $2N$ particles is used as input state.
\end{theorem}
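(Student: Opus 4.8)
The plan is to bound the precision through the Cram\'er--Rao inequality \eqref{estviavar}, which rearranges to $(\delta\varphi)^2\ge 1/\bigl(4\nu\,\mathrm{Var}(H)\bigr)$, and then to control the fluctuations of the phase generator on the input state. The generator here is $H_N$ of Eq.~\eqref{Hphi}, i.e.\ the number operator $N_L=\sum_{m\in L}n_m$ for the mode set $L=\{l_j,-l_j\}_j$. Since the beam splitter \eqref{Beamsplitter} is a passive transformation it maps product states to product states (and separable states to separable states) while preserving the particle number; hence it suffices to estimate $\mathrm{Var}(N_L)$ on the post-beam-splitter state, which is again a $2N$-particle product state. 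I would first treat pure product inputs and reduce the whole problem to showing $\mathrm{Var}(N_L)\le N/2$ for a single Slater determinant of $2N$ particles.

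To evaluate this variance I would pass to the one-particle reduced density matrix $\gamma_{ij}=\langle\ad_i a_j\rangle$. For a product state Wick's theorem gives $\langle n_m n_n\rangle=\gamma_{mm}\gamma_{nn}+\gamma_{mn}(\delta_{mn}-\gamma_{nm})$, so that summing over $m,n\in L$ collapses the connected part to
\begin{equation}
\mathrm{Var}(N_L)=\tr\!\bigl[\gamma_L(\one-\gamma_L)\bigr],
\end{equation}
where $\gamma_L=P_L\gamma P_L$ is the compression of $\gamma$ to the modes in $L$. For a pure Slater determinant of $2N$ particles $\gamma$ is an orthogonal projection of rank $2N$, so $\gamma_L$ is Hermitian with eigenvalues $\mu_i\in[0,1]$, of which at most $\mathrm{rank}(\gamma_L)\le\mathrm{rank}(\gamma)=2N$ are nonzero. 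The scalar estimate $\mu_i(1-\mu_i)\le\tfrac14$ applied to these at most $2N$ nonzero eigenvalues then yields $\mathrm{Var}(N_L)=\sum_i\mu_i(1-\mu_i)\le 2N\cdot\tfrac14=N/2$, with equality when $2N$ natural orbitals are shared evenly between $L$ and its complement (e.g.\ feeding one particle into each of $2N$ of the $k$-modes, which the beam splitter halves). Inserting $\mathrm{Var}(H_N)\le N/2$ into \eqref{estviavar} gives $(\delta\varphi)^2\ge 1/(2\nu N)$, the claimed bound.

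The conceptual crux is the rank bound $\mathrm{rank}(\gamma_L)\le 2N$: this is exactly where the particle number $2N$ (rather than the far larger number of modes) enters, sharpening the naive estimate $\mathrm{Var}(N_L)\le\tr\gamma_L\le 2N$ by the decisive factor of four that converts $2N$ into $N/2$. The step requiring the most care is the extension from pure product states to general separable inputs. Classical mixing can inflate the ordinary variance $\mathrm{Var}(N_L)$ well above $N/2$ -- a $50/50$ mixture of ``all particles in $L$'' and ``all in $K$'' reaches $\mathrm{Var}\sim N^2$ -- so one cannot simply reuse \eqref{estviavar} with the raw variance for mixed inputs. Instead I would route the final step through the quantum Fisher information $F_Q$ underlying \eqref{estviavar}, which equals $4\,\mathrm{Var}(H_N)$ on pure states, gives $(\delta\varphi)^2\ge 1/(\nu F_Q)$, and is convex in the state. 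Its maximum over the convex set of separable inputs is therefore attained at the extreme points, the pure product states, yielding $F_Q\le 4\cdot N/2=2N$ and hence $(\delta\varphi)^2\ge 1/(2\nu N)$ for every separable input. I would flag as the remaining subtlety that ``unpaired'' is strictly weaker than ``separable''; the argument above is cleanest for separable inputs, and one should check that the metrologically relevant quantity is indeed fixed by the one- and two-particle data (as it is for $\mathrm{Var}(H_N)$, built from $\langle n_m n_n\rangle$) before asserting the bound for all unpaired states.
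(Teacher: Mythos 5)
Your argument is correct for separable inputs, and its computational core coincides with the paper's own proof: reduce to pure $2N$-particle product states, express the variance of the generator through the one-particle density matrix compressed to the $l$-modes, and invoke the rank bound. The paper writes the rank-$2N$ projector $P$ in block form with $A$ the block on the $l$-modes, obtains $\mathrm{Var}(H_N)=\tr[BB^\dagger]=\tr[A-A^2]$ (your $\tr[\gamma_L(\one-\gamma_L)]$), and then maximizes $\tr[\tilde A\tilde B]$ over a unitary parametrization of $P$ to land on $N/2$; your eigenvalue estimate (at most $2N$ nonzero $\mu_i\in[0,1]$, each contributing $\mu_i(1-\mu_i)\le \tfrac14$) reaches the same bound more cleanly. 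Where you genuinely improve on the paper is the passage from pure products to general inputs. The paper asserts that ``for every unpaired state there exists a product state having the same expectations'' and plugs the variance into Eq.~\eqref{estviavar}; strictly, only a \emph{separable} -- generally mixed -- state with matching $A_2$-expectations exists, and since the variance is concave under mixing it can reach order $N^2$ on separable states, so the variance-only reduction is incomplete at precisely this step. Your detour through the quantum Fisher information ($F_Q=4\,\mathrm{Var}$ on pure states, $(\delta\varphi)^2\ge 1/(\nu F_Q)$, convexity of $F_Q$ so that its maximum over the separable set is attained at pure products) is the correct repair, and it is absent from the paper.

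Your closing caveat is substantive rather than pedantic, and it points at a gap the paper silently elides: unpaired is strictly weaker than separable, and $F_Q$ is \emph{not} a functional of the one- and two-particle data. A pure state of the type the paper itself uses to separate entanglement from pairing -- an equal superposition of ``all $2N$ particles in $k$-modes'' and ``all $2N$ particles in $l$-modes'', analogous to $|\Psi_4\rangle$ -- is unpaired, yet has $\mathrm{Var}(H_N)\sim N^2$ and, being pure, $F_Q=4\,\mathrm{Var}\sim 4N^2$. Neither your argument nor the paper's covers such states, and no measurement-independent Cram\'er--Rao argument can, since the bound one gets from the variance is then only of order $1/(4\nu N^2)$. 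So what is actually proven (by you, rigorously; by the paper, modulo the reduction step you patched) is the bound for \emph{separable} inputs; extending it to all unpaired inputs would require either restricting to the concrete parity read-out -- whose signal and noise are not $A_2$-observables, so the ``same two-particle expectations'' reduction does not apply to them either -- or a further idea. You were right to flag this explicitly instead of asserting the full claim.
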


\begin{proof}
We will use \eqref{estviavar} to derive the bound. Hence, we have
to estimate an upper bound for the variance of the Hamiltonian
$H_N$ defined in \eqref{Hphi}. As $H_N$ as well as $H_N^2$ contain
operators from the set $A_2$ only, it is sufficient to proof the
bound for product states, as for every unpaired state there exists
a product state having the same expectations. In Lemma
\ref{exp-as-proj} of Appendix \ref{usefull} we have shown that for
pure separable states $\langle n_k n_l\rangle =
|P_{kl}|^2-P_{kk}P_{ll}+P_{kk}\delta_{kl}$, where $P\in
\mathds{C}^{4M \times 4M}$ is a projector of rank $2N$. We arrange
the indices as $-l_M, \ldots, l_M, -k_M, \ldots, k_M$ and
partition the projector $P$ such that $P =\left(
                                                               \begin{array}{cc}
                                                                 A & B \\
                                                                 B^{\dagger} & C \\
                                                               \end{array}
                                                             \right),
$ where $A, B, C \in \mathds{C}^{2M \times 2M}$. Then
\begin{equation}
\mathrm{Var}(H_N)= \sum_{i=1}^{2M} A_{ii}-\sum_{i,
j=1}^{2M}|A_{ij}|^2 = \tr[BB^{\dagger}],
\end{equation}
as $H_N$ only involves the modes $-l_M\ldots, l_M$. In the last
step we have used $P^2=P$, implying $A-A^2=BB^{\dagger}$. As
$\mbox{rank}(P) =2N$, there exists some unitary $U$ such that $P =
U \mathrm{Id}_{2N}U^{\dagger}$, where $\mathrm{Id}_{2N}=\left(
                                                        \begin{array}{cc}
                                                          \mathds{1}_{2N} & 0 \\
                                                          0 & 0 \\
                                                        \end{array}
                                                      \right) \in
                                                      \mathds{C}^{4M\times 4M}
$. Partitioning the unitary $U=\left(
                                 \begin{array}{cc}
                                   U_{11} & U_{12} \\
                                   U_{21} & U_{22} \\
                                 \end{array}
                               \right)
$, where $U_{ij} \in \mathds{C}^{2M \times 2M}$, $i, j = 1, 2$, the
projector $P$ is of the form
$$P = \left(
         \begin{array}{cc}
           U_{11}\mathrm{Id}_{2N}U_{11}^{\dagger} & U_{11}\mathrm{Id}_{2N}U_{21}^{\dagger} \\
           U_{21}\mathrm{Id}_{2N}U_{11}^{\dagger} & U_{21}\mathrm{Id}_{2N}U_{21}^{\dagger} \\
         \end{array}
       \right).
$$
Using the the above representation of $P$ and the cyclicity of the
of the trace, we can write
$\mathrm{Var}(H_N)=\tr[\tilde{A}\tilde{B}]$ with hermitian
matrices
$\tilde{A}=\mathrm{Id}_{2N}U_{11}^{\dagger}U_{11}\mathrm{Id}_{2N}$,
$\tilde{B}=
\mathrm{Id}_{2N}U_{21}^{\dagger}U_{21}\mathrm{Id}_{2N}$. The trace
can be interpreted as a scalar product maximized for linearly
dependent $\tilde{A}$ and $\tilde{B}$. Exploiting the unitarity of
$U$, one sees immediately that the variance is maximized for
$\tilde{A} =c/(1+c) \mathrm{Id}_{2N}$, $\tilde{B} =
1/(1+c)\mathrm{Id}_{2N}$ for some constant $c$. Hence,
$\mathrm{Var}(H_N)\leq c/(1+c)^2\tr[\mathds{1}_{2N}]\leq N/2.$
Inserting this into eq. \eqref{estviavar}, we find that $(\delta
\varphi)^2 \geq \frac{1}{2 \nu N}$.
\end{proof}

\subsubsection{Interferometry with two particles}
In this section we will show that already a two-particle paired
state can beat the bound for the phase sensitivity using unpaired
states \eqref{phiminprod}. Hence pairing manifests itself as
useful quantum correlation already at the two-particle level. We
show the following:

\begin{theorem}\label{Inf2}
Using the paired state
\begin{equation}
|\Psi_{in}^{(2)}\rangle = \left(\sum_{j=1}^M \alpha_j
\ad_{k_j}\ad_{-k_j} + \beta_j \ad_{l_j}\ad_{-l_j}\right)|0\rangle,
\end{equation}
with normalization $\sum_j |\alpha_j|^2+|\beta_j|^2 =1$ as input
state for the Ramsey interferometer, the optimal phase sensitivity
is given by
\begin{equation}\label{phipaired2}
(\delta
\varphi)^2_\mathrm{min}=\frac{1}{2\left(1+2\sum_{j=1}^M\mathrm{Re}(\alpha_j\bar{\beta}_j)\right)}\geq
\frac{1}{4}.
\end{equation}
\end{theorem}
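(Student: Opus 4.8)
The plan is to compute the interferometric fringe $\langle\mathcal{P}\rangle$ as an explicit function of $\varphi$ for the two-particle input, substitute it into the sensitivity formula \eqref{deltaphi}, and optimize over the working point $\varphi$. The first simplification I would make is to rewrite the parity operator in a form adapted to a two-particle state: since $(1-2n)=(-1)^n$ for an occupation $n\in\{0,1\}$, the $j$-th factor is $(1-2n_0^{(j)})(1-2n_1^{(j)})$, so $\mathcal{P}=(-1)^{N_k}$ with $N_k=\sum_j(n_0^{(j)}+n_1^{(j)})$ counting particles in the $k$-output ports. Because the input carries exactly two particles, $\mathcal{P}=+1$ whenever both leave through $k$-ports or both through $l$-ports and $\mathcal{P}=-1$ precisely for the ``split'' outcome of one particle on each side; hence $\langle\mathcal{P}\rangle$ is fixed once the split probability is known.

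Next I would propagate the state. The key structural fact is that the beam splitter \eqref{Beamsplitter} and the phase $e^{-i\varphi H_N}$ act block-diagonally in the label $j$ (each $j$ touches only the four modes $k_j,-k_j,l_j,-l_j$), while the input is a coherent superposition over $j$ of a single pair. Reading the parity measurement in the primed (output) modes as a recombining beam splitter before a number measurement in the original $k$-modes, the full sequence is the standard Mach-Zehnder $U_{BS}\to e^{-i\varphi H_N}\to U_{BS}$. I would therefore apply the transformation \eqref{Beamsplitter} to the creation operators, multiply every $l$-mode operator by $e^{-i\varphi}$, and apply the beam-splitter map once more, obtaining for each $j$ the amplitudes on the orthonormal two-particle Fock states $\ad_{k_j}\ad_{-k_j}\vac$, $\ad_{k_j}\ad_{-l_j}\vac$, $\ad_{l_j}\ad_{-k_j}\vac$, $\ad_{l_j}\ad_{-l_j}\vac$. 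Writing $s_j=\alpha_j+\beta_j$ and $d_j=\alpha_j-\beta_j$, I expect the two unsplit amplitudes to be proportional to $s_j\cos\varphi\pm d_j$ and the two split amplitudes proportional to $s_j\sin\varphi$.

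Since $\mathcal{P}$ is diagonal in the Fock basis and these states are orthonormal, $\langle\mathcal{P}\rangle=\sum|c|^2(\pm1)$ collapses to a short sum. Using $|s_j|^2+|d_j|^2=2(|\alpha_j|^2+|\beta_j|^2)$, $|s_j|^2-|d_j|^2=4\,\mathrm{Re}(\alpha_j\bar\beta_j)$ and the normalization $\sum_j(|\alpha_j|^2+|\beta_j|^2)=1$, I anticipate
\[
\langle\mathcal{P}\rangle=\tfrac12(1+2R)\cos2\varphi+\tfrac12(1-2R),\qquad R\equiv\sum_j\mathrm{Re}(\alpha_j\bar\beta_j).
\]
Substituting this together with $\partial_\varphi\langle\mathcal{P}\rangle=-(1+2R)\sin2\varphi$ into \eqref{deltaphi}, the numerator $1-\langle\mathcal{P}\rangle^2$ factorizes; because the constant term and the $\cos2\varphi$-coefficient sum to $1$, a factor $(1-\cos2\varphi)$ cancels against $\sin^2 2\varphi$, leaving $(\delta\varphi)^2$ a monotone function of $\cos2\varphi$ whose infimum is reached as $\varphi\to0$ and equals $\frac{1}{2(1+2R)}$, the claimed value. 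Finally, $R\le\sum_j|\alpha_j||\beta_j|\le\tfrac12\sum_j(|\alpha_j|^2+|\beta_j|^2)=\tfrac12$ by Cauchy-Schwarz yields $(\delta\varphi)^2_{\mathrm{min}}\ge\tfrac14$.

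I anticipate two main obstacles. The first is getting the $\varphi$-dependence right: a single beam splitter leaves $\langle\mathcal{P}\rangle$ manifestly $\varphi$-independent (the phases only multiply amplitudes whose moduli enter $\langle\mathcal{P}\rangle$), so one must correctly identify the parity measurement in the primed modes as supplying the second, recombining beam splitter. The second is that the optimal working point is exactly where both $\mathrm{Var}(\mathcal{P})$ and $\partial_\varphi\langle\mathcal{P}\rangle$ vanish ($\varphi\to0$); rather than naively differentiating a $0/0$ expression, I would extract the finite limiting sensitivity either through the algebraic cancellation above or via the expansion $\langle\mathcal{P}\rangle\approx1-(1+2R)\varphi^2$, which gives $(\delta\varphi)^2\to\frac{1}{2(1+2R)}$ directly.
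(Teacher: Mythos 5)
Your proposal is correct and takes essentially the same route as the paper: propagate the pair through the beam splitter and phase evolution, evaluate the parity fringe, and substitute into Eq.~\eqref{deltaphi}; your fringe $\langle\mathcal{P}\rangle=\tfrac12(1+2R)\cos 2\varphi+\tfrac12(1-2R)$ with $R\equiv\sum_j\mathrm{Re}(\alpha_j\bar{\beta}_j)$ is identical to the paper's $\langle\mathcal{P}\rangle=1-(1+2R)\sin^2\varphi$ via $\cos 2\varphi=1-2\sin^2\varphi$. Your explicit treatment of the $\varphi\to 0$ working point and the Cauchy--Schwarz bound $R\leq\tfrac12$ simply fill in details the paper leaves implicit (it only notes that $1/4$ is attained for $\alpha_k=\beta_k$).
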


\begin{proof}
Take $|\Psi_{in}^{(2)}\rangle$ as the input state. After an
application of the beam splitter transformation \eqref{Beamsplitter}
and an evolution under the Hamiltonian  \eqref{Hphi}, the
measurement outcome of the parity operator is calculated to be
\begin{equation}
\langle \mathcal{P} \rangle = 1-\sin^2\varphi \left(1+2
\sum_{j=1}^M \mbox{Re}(\alpha_j \bar{\beta}_j)\right).
\end{equation}
Using eq. \eqref{deltaphi}, we obtain \eqref{phipaired2}. The
bound of $1/4$ can be obtained for a state where $\alpha_k =
\beta_k \; \forall\; k.$
\end{proof}

Thm. \ref{Inf2} shows that there exist two-particle paired states
exceeding the bound on product states \eqref{phiminprod}.

\subsubsection{Interferometry with $2N$-particle BCS states}
Generalizing the result obtained in the last section, it follows
immediately that states of the form
$|\Psi_{in}^{(2)}\rangle^{\otimes N}$, will lead to a phase
sensitivity $(\delta
\varphi)^2_\mathrm{min}=1/\left[2N(1+2\sum_{j=1}^M\mathrm{Re}(\alpha_j\bar{\beta}_j))\right]$.
In this section we will show that the same result can be achieved
using BCS states.

\begin{theorem}
Let the paired state
\begin{equation}
|\Psi_{in}^{(2N)}\rangle =c^{\prime} \left(\sum_{j=1}^M \alpha_j
\ad_{k_j}\ad_{-k_j} + \beta_j
\ad_{l_j}\ad_{-l_j}\right)^N|0\rangle,
\end{equation}
where we use the normalization condition $\sum_j
|\alpha_j|^2+|\beta_j|^2 =1$ be the input state for the Ramsey
type interferometer defined above. Then the optimal phase
sensitivity is given by
\begin{equation}\label{phiminBCS}
(\delta \varphi)^2 = \frac{1}{2\bar{N} (1+ 2\sum_j
\mathrm{Re}(\alpha_j \bar{\beta}_j))}.
\end{equation}
\end{theorem}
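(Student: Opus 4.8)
The plan is to compute the parity expectation $\langle\mathcal{P}\rangle_\varphi$ for the input state $|\Psi_{in}^{(2N)}\rangle$ and insert it into the sensitivity formula \eqref{deltaphi}, exactly along the lines of the two-particle case (Thm.~\ref{Inf2}); the only new ingredient is the bookkeeping for the $N$-fold product of pair operators. Since the beam splitter \eqref{Beamsplitter} and the phase rotation generated by \eqref{Hphi} are passive and act independently on the four modes $\{k_j,-k_j,l_j,-l_j\}$ of each interferometer block, and since $\mathcal{P}=\prod_j\mathcal{P}^{(j)}$ factorizes over these blocks, I would pass to the Heisenberg picture and write $\langle\mathcal{P}\rangle_\varphi=\langle\Psi_{in}^{(2N)}|\mathcal{O}_\varphi|\Psi_{in}^{(2N)}\rangle$, with the evolved parity $\mathcal{O}_\varphi$ still a product over blocks. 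Writing $A^\dagger=\sum_j(\alpha_j\Pd_{k_j}+\beta_j\Pd_{l_j})$ so that $|\Psi_{in}^{(2N)}\rangle=c'(A^\dagger)^N\vac$, I would expand $(A^\dagger)^N$ using $[\Pd_p,\Pd_q]=0$ and $(\Pd_p)^2=0$, which expresses the required matrix elements through the elementary symmetric polynomials in the weights $\{|\alpha_j|^2,|\beta_j|^2\}$ already appearing in the normalization $C_N$ of \eqref{BCS-N}.

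The key simplification is that only the behaviour of $\langle\mathcal{P}\rangle_\varphi$ near the operating point $\varphi=0$ is needed. Since $\mathcal{P}^2=\one$, its eigenvalues are $\pm1$, and $\langle\mathcal{P}\rangle_{\varphi=0}=1$ is therefore the maximal eigenvalue; hence $\varphi=0$ is a stationary point of $\langle\mathcal{P}\rangle_\varphi$, so $\partial_\varphi\langle\mathcal{P}\rangle|_0=0$ and $\langle\mathcal{P}\rangle_\varphi=1-c\,\varphi^2+O(\varphi^4)$ with $c\geq0$. As in the exactly solvable two-particle case, where $\langle\mathcal{P}\rangle=1-\sin^2\varphi\,(1+2\sum_j\mathrm{Re}(\alpha_j\bar\beta_j))$ attains its minimal $(\delta\varphi)^2$ precisely at $\varphi\to0$, inserting this expansion into \eqref{deltaphi} gives the optimal value $(\delta\varphi)^2_{\mathrm{min}}=1/(2c)$, so it remains only to identify the quadratic coefficient $c=-\tfrac12\partial_\varphi^2\langle\mathcal{P}\rangle|_0$. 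Equivalently, since $U_\varphi$ is generated by $H_N$, this coefficient is controlled by the fluctuations of the generator in the post-beam-splitter state, and transforming $H_N$ back through the beam splitter turns it—up to an additive constant that is fixed on the $2N$-particle sector—into a sum of single-particle hopping operators $\ad_{k_j}a_{l_j}+\mathrm{h.c.}$ between the two arms.

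Computing $c$ then reduces to two families of two-particle expectations in the BCS state. The diagonal ``return'' contributions involve $\langle n_{k_j}\rangle,\langle n_{l_j}\rangle$ (together with the pair--pair term $\langle n_{k_j}n_{l_j}\rangle$); summed over blocks the single-mode occupations assemble into the total pair occupation $\sum_j(\langle n_{k_j}\rangle+\langle n_{l_j}\rangle)=\bar N$, the mean number of pairs. The off-diagonal ``pair-hopping'' contributions are governed by $\langle\Pd_{k_j}P_{l_j}\rangle=\bar\alpha_j\beta_j\,r_j$ and its conjugate, producing the $\sum_j\mathrm{Re}(\alpha_j\bar\beta_j)$ term weighted by a combinatorial factor $r_j$. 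Collecting both, I expect $c=\bar N\,(1+2\sum_j\mathrm{Re}(\alpha_j\bar\beta_j))$, which upon substitution yields exactly \eqref{phiminBCS}, i.e. the same sensitivity as the product state $|\Psi_{in}^{(2)}\rangle^{\otimes N}$.

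The main obstacle is this last step: establishing the clean linear scaling $c\propto\bar N$ with the correct prefactor. For fixed finite $M,N$ the pair--pair correlation terms (two pairs occupying both arms of one block) and the weights $r_j$ do not individually reduce to the stated form; they combine to it only when one uses that for the BCS states of interest $M\gg N$, so that the probability of double-occupying a single block is negligible and $r_j\to\bar N$. This is precisely the regime in which the number-conserving state is well approximated by the corresponding Gaussian state (Sec.~\ref{Connection-N-NN}), so I would either carry out the elementary-symmetric-polynomial estimate explicitly or invoke the Gaussian correspondence to evaluate the correlators. Controlling and bounding these subleading inter-pair terms is where the real work lies.
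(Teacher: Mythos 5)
Your plan is correct and lands on the paper's formula, but it takes a genuinely different route. The paper's own proof is global and very short: it replaces $|\Psi_{in}^{(2N)}\rangle$ outright by the Gaussian state $c\exp[\sum_j(\alpha_j\ad_{k_j}\ad_{-k_j}+\beta_j\ad_{l_j}\ad_{-l_j})]\vac$, computes the \emph{entire} fringe in closed form, $\langle \mathcal{P}\rangle_{Gauss}=\prod_j\bigl(1-|c|^2|\alpha_j+\beta_j|^2\sin^2\varphi\bigr)$, because both the Gaussian state and $\mathcal{P}$ factorize over the four-mode blocks, transfers this to the number-conserving state via the sharply peaked $|\lambda_N|^2$ distribution of Sec.~\ref{Connection-N-NN} (``only number operators are involved''), and only then expands at small $\varphi$ using $\bn=|c|^2$. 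You instead work at the operating point from the start: $\mathcal{P}^2=\one$ together with $\langle\mathcal{P}\rangle_{\varphi=0}=1$ gives $\langle\mathcal{P}\rangle=1-c\varphi^2+O(\varphi^4)$ and hence $(\delta\varphi)^2_{\min}=1/(2c)$, and you reduce $c$ to quartic correlators of the input, confining the Gaussian/dilute approximation to those correlators ($r_j\to\bn$). What each buys: the paper gets the full fringe at all $\varphi$ essentially for free, while your version supplies the optimality of $\varphi\to0$ (which the paper tacitly assumes) and applies the number-conserving$\leftrightarrow$Gaussian approximation only to two-particle quantities like $\langle\Pd_{k_j}P_{l_j}\rangle$ rather than to the expectation of the highly nonlinear parity operator, which is arguably easier to control rigorously. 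Three details to settle when executing: (i) prove rather than assert $\langle\mathcal{P}\rangle_{\varphi=0}=1$; it holds because in the primed basis the beam-splitter image of the input is again of BCS form, so every branch carries an even number of particles in the modes $a^{\prime}_{\pm k_j}$ counted by $\mathcal{P}$; (ii) the curvature is not literally the variance of the generator: with $\mathcal{P}|\Psi'\rangle=|\Psi'\rangle$ for the post-splitter state one finds $c=\langle H_N^2\rangle-\langle H_N\mathcal{P}H_N\rangle$, and since conjugation by $\mathcal{P}$ merely flips the sign of the inter-arm hopping part of $H_N$ in the primed basis, $c$ collapses exactly to the occupation, density--density and pair-hopping terms you list; (iii) the deferred estimate of the subleading inter-pair contributions is real work, but it rests on the same peaked-distribution/dilute hypothesis ($|N-\bn|\ll\bn$, small per-block occupation) on which the paper's own ``$\approx$'' step rests, so your proposal matches, and in part exceeds, the paper's level of rigor rather than falling short of it.
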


\begin{proof}
Like in previous sections we will use the correspondence to the
Gaussian state
$$|\Psi_{in,Gauss}^{(2\bar{N})}\rangle=c \exp\left[\sum_{j=1}^M \alpha_j
\ad_{k_j}\ad_{-k_j} + \beta_j \ad_{l_j}\ad_{-l_j}\right]\vac,$$
where $|N-\bn|\ll \bn$ for the calculation. After the state has
passed through the interferometer, the expectation value of the
parity operator is readily computed to be $\langle \mathcal{P}
\rangle_{Gauss} = \prod_j (1-|c|^2|\alpha_j +\beta_j|^2 \sin^2
\varphi)$. As only number operators are involved, $\langle
\mathcal{P} \rangle_{Gauss} \approx \langle \mathcal{P}
\rangle_{N},$ where $\langle\ldots\rangle_N$ denotes the
expectation value of $\mathcal{P}$ for the state
$|\Psi_{in}^{(2N)}\rangle$. Expanding \eqref{deltaphi} for small
value of $\varphi$ and using $\bn =
|c|^2\sum_k|\alpha_k|^2+|\beta_k|^2 = |c|^2$, one obtains
\eqref{phiminBCS}, which has minimal value $(\delta \varphi)^2 =
1/(4N)$. This result is obtained when $\alpha_j = \beta_j \;
\forall \,j.$
\end{proof}

The above result shows that paired states result in a precision gain
of up to a factor of $2$ compared to the best precision obtainable
for unpaired states \eqref{phiminprod}.\\
The pairing measure derived in Sec. \ref{Gaussian-measure-section}
and \ref{measure-section} quantifies the precision gain obtainable
by the use of paired states. To see this, denote by $|\Psi_{in,
Gauss}^{(2\bar{N})\prime}\rangle$ the state after the beam
splitter transformation. Then the pairing measure
\eqref{Gaussian-measure} for this state evaluates to
$$\mathcal{M}_G(|\Psi_{in,Gauss}^{(2\bar{N})\prime}\rangle)=\frac{N^2}{2}\left(1+2\sum_j
\mbox{Re}(\alpha_j\bar{\beta}_j)\right),$$
so that
\begin{equation}
(\delta \varphi)^2 =
\frac{\bar{N}}{4\mathcal{M}_G(|\Psi_{in,Gauss}^{(2\bar{N})\prime}\rangle}.
\end{equation}
The above relation demonstrates that $\mathcal{M}$ is indeed quantifying a
useful resource present in paired states. Whether this interpretation can be
extended to mixed states will not be explored here.

\subsection{Interferometry involving a pair-interaction
Hamiltonian}\label{Sec:pairinteraction}
\begin{figure}
\begin{center}
\includegraphics[scale = 0.8]{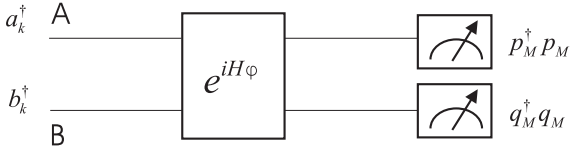}
\caption{Setup which allows interferometry with paired states at the
Heisenberg limit. Particles in modes $\ad_k$ and $b^{\dagger}_k$
evolve under the complex coupling Hamiltonian $H$ (for the detailed
form of $H$ refer to the text). In the end particle numbers are
measured.\label{complex-setup}}
\end{center}
\end{figure}
So far we have seen that paired states lead to a gain of a factor of
2 in precision compared to unpaired states in a Ramsey-type
interferometer. This section will show that paired states are even
more powerful and can lead to a precision gain of a factor of $N$
when measuring the phase of a pair-interaction Hamiltonian.\\
We consider a setup where two fermionic states enter the ports A
and B of an interferometer. The particles entering port A can
occupy the modes $\{\ad_k\}_{k=-M}^M$, while the particles
entering through port B can occupy the modes
$\{b_k^{\dagger}\}_{k=-M}^M$. Then the two states evolve under the
Hamiltonian $H_c$ to be defined below and a particle number
measurement is performed at the end. The situation is depicted in
Fig \ref{complex-setup}. We will compare the power of paired
states over unpaired ones for two different settings. We start by
introducing some basic notation:

\subsubsection{Prerequisites}
We define pair operators $\Pd_k = \ad_k \ad_{-k}$ and
$\Qd_k=b^{\dagger}_k b^{\dagger}_{-k}$ and their equally weighted
superpositions
\begin{equation}
\pdm=\frac{1}{\sqrt{M}}\sum_{k=1}^M\Pd_k,\hspace{0.5cm}
\qdm=\frac{1}{\sqrt{M}}\sum_{k=1}^M\Qd_k.
\end{equation}
The operators $\pdm$ and $\qdm$ fulfill the commutation relations
\begin{eqnarray}\label{commrel}
[\pdm, \ppm]&=& -1
+\frac{1}{M}\Na \label{comma}\\
\mbox{[}\qdm, \qm]&=&-1 +\frac{1}{M}\Nb,\label{commb}
\end{eqnarray}
where $n_k = \ad_k a_k$ so that $N_a=\sum_k (n_k + n_{-k})$, and
$N_b=\sum_k (m_k + m_{-k})$ with $n_k = \ad_k a_k$ and $m_k =
b_k^{\dagger}b_k$ being the number operators for particles in
modes $\ad_k$ and $b_k^{\dagger}$
respectively.\\
We will compare the power of two paired states and two unpaired
states entering through port A and B. The bound for unpaired
states will be derived again via \eqref{estviavar}. As $H_c$ and
$H_c^2$ will be elements of $A_2$, it is like in the last section
sufficient to compare the power of paired states to those of
sparable states. The paired states will be of the form
\begin{eqnarray}\label{paired-in}
|\Psi^{(M)}_N\rangle &=& |N\rangle_a^{(M)}|N\rangle_b^{(M)},\\
|N\rangle_a^{(M)} &=&c_N^{(M)}(\pdm)^N|0\rangle,
\;\;|N\rangle_b^{(M)} =c_N^{(M)}(\qdm)^N|0\rangle,\nonumber\\
\end{eqnarray}
with normalization constant $c_N^{(M)}=(N M!/M^N)^{-\frac{1}{2}}$,
while the separable states are given by
\begin{equation}\label{Phiprod}
|\Phi_{N}\rangle = |\phi^{(2N)}\rangle_a|\phi^{(2N)}\rangle_b,
\end{equation}
where $|\phi^{(2N)}\rangle_{a,b}$ are separable states in the
modes $\ad_k$
and $b^{\dagger}_k$ respectively.\\
After the input state has evolved under the Hamiltonian
$H_{\varphi}$ into the state $|\Psi^{(M)}_N(\varphi)\rangle =
e^{iH_c\varphi} |\Psi^{(M)}_N\rangle$ an observable $O$ is used as
an estimator to determine the parameter $\varphi$ to a precision
given by \eqref{varest}. Instead of working in the Schr\"odinger
picture of state evolution it turns out to be more convenient to
tackle the problem in the Heisenberg picture, where $O$ evolves
according to $O \rightarrow
O^{\prime}=e^{-iH_c\varphi}Oe^{iH_c\varphi}$. We are interested in
the phase sensitivity for small $\varphi$, so that we can expand
$\eqref{varest}$ in powers of $\varphi$, arriving at
\begin{eqnarray}
O(\varphi) &=& O-i\varphi[H_c,O]-\frac{1}{2}\varphi^2(H_c^2O +
OH_c^2 - 2
H_cOH_c)\nonumber\\
&&+\mathcal{O}(\varphi^3).
\end{eqnarray}
If the input state $|\Psi^{(M)}_N\rangle$ is an eigenvector of $O$
with eigenvalue $0$, we obtain the following simple expressions for
$\langle O\rangle$ and $\mathrm{Var}(O)$

\begin{eqnarray}
\left|\frac{\partial}{\partial \varphi}\langle O\rangle \right|^2 &=& 4 \varphi^2 |\langle H_cOH_c \rangle |^2 +\mathcal{O}(\varphi^3),\\
\mathrm{Var}(O)&=&\varphi^2 \langle H_c O^2 H_c\rangle +
\mathcal{O}(\varphi^3),
\end{eqnarray}

so that the phase fluctuation $(\delta \varphi)^2$ simplifies to

\begin{equation}\label{deltaphisimple}
(\delta \varphi)^2 = \frac{\langle H O^2 H\rangle}{4 |\langle HOH
\rangle |^2} + \mathcal{O}(\varphi)
\end{equation}
An observable fulfilling this property is $O = (n^{(-)}_M)^2$,
where $n^{(-)}_M = \frac{1}{2}(\pdm \ppm -\qdm\qm).$\\
The commutation relations for $\pdm$ and $\qdm$ \eqref{commrel}
imply that in the limit of infinitely many modes $M \rightarrow
\infty$ the operators $\pdm$ and $\qdm$ become bosonic. We will thus
start out with a scenario where the input states are in the bosonic
limit and then turn our attention to a setting which is far from the
bosonic limit.

\subsubsection{Bosonic limit}\label{quasi-bosonic}
In this section we will consider the scenario $M \rightarrow
\infty$, i.e. we are in the bosonic limit, where the limit is
taken for the expectation values of the operators. We will
consider a coupling of the form $H_c=\varphi H_{\infty}$ where

\begin{equation}\label{Hinf}
H_{\infty}=\frac{1}{2}(a^{\dagger}_{\infty}b_{\infty}+a_{\infty}b^{\dagger}_{\infty}),
\end{equation}
and measure $(n^{(-)}_{\infty})^2$.\\
We start deriving the best precision for unpaired states  using
\eqref{estviavar}. We will use a finite $M$ for input state,
coupling Hamiltonian and measurement and then take the limit $M
\rightarrow \infty$. To be precise, the calculation will be done
for $H_M = \frac{1}{2}(a^{\dagger}_{M}b_{M}+a_{M}b^{\dagger}_{M})$
and $(n^{(-)}_{M})^2$. Then $\lim_{M \rightarrow \infty}\langle
\phi_{a,b}|H_{M}|\phi_{a,b}\rangle =0$ due to the conservation of
particle number. Hence, $\lim_{M \rightarrow \infty}
\mathrm{Var}(H_{M})=\lim_{M \rightarrow \infty}\langle
H_{M}^2\rangle = \lim_{M \rightarrow \infty}(\langle \pdm
\ppm\rangle + \langle \qdm \qm\rangle)^2 = 0,$ as $\langle \pdm
\ppm \rangle = \frac{1}{M}\sum_{kl}|\langle \Pd_k P_l \rangle|^2
\leq N/M$, where the last inequality results from the bound of the
pairing measure on unpaired states Thm.
\ref{measure-bound-separable-state}. The same holds for $\langle
\qdm \qm \rangle$. Hence, in the limit $M \rightarrow \infty$ the
variance of $H_{\infty}$ vanishes. For the setting of paired
states, however, we can obtain the following result:

\begin{theorem}
For paired input states, the interferometer depicted in Fig
\ref{complex-setup} allows to estimate the coupling parameter
$\varphi$ to a precision
\begin{equation}
(\delta \varphi)_{\inf}^2 = \frac{1}{2N^2}.
\end{equation}
\end{theorem}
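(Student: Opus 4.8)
The plan is to work in the Heisenberg picture using the simplified phase-sensitivity formula \eqref{deltaphisimple}, with $O=(n^{(-)}_M)^2$ and coupling $H_M=\frac{1}{2}(\adm\bm+\am\bdm)$, and then take the bosonic limit $M\to\infty$. First I would verify that the paired input state $|\Psi^{(M)}_N\rangle=|N\rangle_a^{(M)}|N\rangle_b^{(M)}$ is an eigenvector of $O$ with eigenvalue $0$: indeed $\pdm\ppm$ acting on $|N\rangle_a^{(M)}$ and $\qdm\qm$ acting on $|N\rangle_b^{(M)}$ both give the same eigenvalue (the state has exactly $N$ pairs in each port), so $n^{(-)}_M=\frac{1}{2}(\pdm\ppm-\qdm\qm)$ annihilates it and hence so does $O$. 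This justifies the use of \eqref{deltaphisimple}, reducing the problem to computing the two matrix elements $\langle H_M O^2 H_M\rangle$ and $\langle H_M O H_M\rangle$ on $|\Psi^{(M)}_N\rangle$.

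The key step is to evaluate the action of $H_M$ on the input state. The operator $H_M$ transfers a single pair between ports A and B, producing a superposition of states with $(N{+}1,N{-}1)$ and $(N{-}1,N{+}1)$ pairs. On these states $n^{(-)}_M$ is no longer zero but takes values of order $\pm1$, so $O H_M|\Psi\rangle$ and $O^2 H_M|\Psi\rangle$ can be computed explicitly using the commutation relations \eqref{comma}-\eqref{commb}. In the bosonic limit $M\to\infty$ the operators $\ppm,\qm$ become canonical bosonic annihilation operators (the $\frac{1}{M}\Na$ corrections vanish on fixed-$N$ sectors), so $|N\rangle_{a,b}^{(M)}$ go over into number (Fock) states and $H_\infty$ becomes a bosonic beam-splitter-type coupling. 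I would then compute the relevant expectation values for bosonic Fock states $|N\rangle_a|N\rangle_b$: $H_\infty$ shifts to $|N{+}1\rangle_a|N{-}1\rangle_b$ and $|N{-}1\rangle_a|N{+}1\rangle_b$ with amplitudes scaling like $N$, and on these shifted states $n^{(-)}_\infty\sim\pm1$, so that $\langle H_\infty O H_\infty\rangle\sim N$ while $\langle H_\infty O^2 H_\infty\rangle\sim N$ as well, yielding via \eqref{deltaphisimple} the Heisenberg-scaling result $(\delta\varphi)^2=\tfrac{1}{2N^2}$.

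The main obstacle I anticipate is controlling the bosonic limit carefully: one must confirm that the $\mathcal{O}(\varphi^3)$ and $\mathcal{O}(1/M)$ corrections genuinely drop out and that the limiting bosonic Fock-state calculation reproduces the finite-$M$ answer, rather than merely being a heuristic. Concretely, the amplitudes produced by $\ppm$ acting on $(\pdm)^N|0\rangle$ involve the normalization $c_N^{(M)}$ and combinatorial factors that only approach the clean bosonic values $\sqrt{N}$, $\sqrt{N+1}$ as $M\to\infty$; tracking these prefactors through $O^2$ (which is quartic in the pair operators) is where the bookkeeping is delicate. Once the matrix elements $\langle H_\infty O^2 H_\infty\rangle$ and $|\langle H_\infty O H_\infty\rangle|^2$ are pinned down, the final sensitivity follows by direct substitution, and the $1/N^2$ scaling — the Heisenberg limit — emerges from the fact that both numerator and denominator carry the squared amplitude of coherently transferring a pair between the two macroscopically occupied modes.
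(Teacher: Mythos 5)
Your overall route is the same as the paper's: check that $|\Psi^{(M)}_N\rangle$ is a zero-eigenvector of $O=(n^{(-)}_M)^2$, use the Heisenberg-picture formula \eqref{deltaphisimple}, work out the pair-transfer action of $H_M$, and take $M\to\infty$ at the end. The paper carries this out at finite $M$ via the exact ladder relations $\ppm|N\rangle_a^{(M)}=\alpha_N|N-1\rangle_a^{(M)}$, $\pdm|N\rangle_a^{(M)}=\alpha_{N+1}|N+1\rangle_a^{(M)}$ with $\alpha_N=\sqrt{N(1-(N-1)/M)}$, obtaining $(\delta\varphi)_M^2=\frac{1}{2\alpha_{N+1}^2\alpha_N^2}+\mathcal{O}(\varphi)$ before the limit is taken.

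There is, however, one concrete misstatement in your sketch which, taken at face value, would destroy the conclusion: you claim $\langle H_\infty O H_\infty\rangle\sim N$ and $\langle H_\infty O^2 H_\infty\rangle\sim N$. Plugged into \eqref{deltaphisimple} these scalings give $(\delta\varphi)^2\sim N/(4N^2)=1/(4N)$, i.e.\ shot-noise rather than Heisenberg scaling, contradicting your own final answer. The correct values are quadratic in $N$: in the bosonic limit $H_\infty|N,N\rangle=\frac{1}{2}\sqrt{N(N+1)}\left(|N{+}1,N{-}1\rangle+|N{-}1,N{+}1\rangle\right)$, and since $n^{(-)}_\infty=\pm1$ on the two branches, $O$ and $O^2$ both act as the identity on $H_\infty|\Psi\rangle$, so $\langle H_\infty O H_\infty\rangle=\langle H_\infty O^2 H_\infty\rangle=\frac{1}{2}N(N+1)$ and $(\delta\varphi)^2=\frac{1}{4\langle H O H\rangle}=\frac{1}{2N(N+1)}$, which for $N\gg1$ is the quoted $\frac{1}{2N^2}$ and agrees with $\lim_{M\to\infty}\frac{1}{2\alpha_{N+1}^2\alpha_N^2}$. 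The Heisenberg scaling arises precisely because $\langle HOH\rangle$ enters squared in the denominator while the numerator equals it. As for the limit-exchange worry you raise: the finite-$M$ computation shows it is harmless — the eigenvalue of $n^{(-)}_M$ on the shifted states is $1-(2N-1)/M$, and these $\mathcal{O}(1/M)$ factors appear in the numerator and the squared denominator of \eqref{deltaphisimple} to the same (fourth) power and cancel exactly, leaving $\frac{1}{2\alpha_{N+1}^2\alpha_N^2}$ with no residual $1/M$ correction to control. With the matrix elements corrected, your argument is the paper's proof.
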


\begin{proof}
Consider an interferometric setup depicted in Fig.
\ref{complex-setup}, where the $2N$-particle input state and the
coupling Hamiltonian are defined in eqs. \eqref{paired-in} and
\eqref{Hinf} respectively. We will again use a finite $M$ for
input state, coupling Hamiltonian and measurement and then take
the limit $M \rightarrow \infty$, i.e. we use $|\Psi_{in}\rangle
=|N\rangle_a^{(M)}|N\rangle_b^{(M)},$ $H_M =
\frac{1}{2}(a^{\dagger}_{M}b_{M}+a_{M}b^{\dagger}_{M})$ and
$(n^{(-)}_{M})^2$. Making use of the relations
\begin{eqnarray}
\ppm |N\rangle_a^{(M)} &=& \alpha_N |N-1\rangle_a^{(M)}\\
\pdm |N\rangle_a^{(M)} &=& \alpha_{N+1} |N+1\rangle_a^{(M)}
\end{eqnarray}
where $\alpha_N = \sqrt{N(1-(N-1)/M)}$, a lengthy but
straightforward calculation leads to $ (\delta \varphi)_M^2 =
\frac{1}{2\alpha_{N+1}^2 \alpha_N^2} + \mathcal{O}(\varphi)$ using
\eqref{deltaphisimple}. Taking the limit $M \rightarrow \infty$
leads to the result of the theorem.
\end{proof}

\subsubsection{Interferometry far from the bosonic
limit}\label{farfromboson} In the preceding section we have
studied the power of paired states in the bosonic limit. As the
power of bosonic particles for interferometry has been known for
quite a while, the use of paired states where the fermionic nature
of the particles survives might be a more interesting question. In
this section we will show that even far from the bosonic limit
paired states can achieve a precision gain of
order $N$ for quantum metrology.\\
We will study a coupling Hamiltonian of the form  $H_c=\varphi
H_{F}$ where
\begin{equation}\label{HF}
H_{F}= \sum_{k=1}^{\infty} \Pd_k Q_k + P_k \Qd_k.
\end{equation}
First, we will give a bound for the phase sensitivity achievable by
using product states at the input:

\begin{theorem}
Using product states of $2 N$ particles as input states for the
interferometric setting depicted in Fig. \ref{complex-setup}, the
phase $\varphi$ of the coupling Hamiltonian $H_c=\varphi H_{F}$,
where $H_F$ is defined in \eqref{HF} can be measured to a
precision $(\delta \varphi)^2 \geq 1/(16 N)$.
\end{theorem}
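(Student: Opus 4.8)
The plan is to apply the Cram\'er--Rao bound \eqref{estviavar} and thereby reduce the theorem to an upper estimate on the variance of the coupling Hamiltonian $H_F$ of \eqref{HF}. Taking $\nu=1$ in \eqref{estviavar}, the minimal achievable uncertainty for \emph{any} estimator obeys $(\delta\varphi)^2\geq 1/(4\,\mathrm{Var}(H_F))$, so it is enough to prove that every product input state $\ket{\Phi_N}$ of \eqref{Phiprod} satisfies $\mathrm{Var}(H_F)\leq 4N$. Note that $H_F\in A_2$ but $H_F^2\notin A_2$; since the statement fixes the input to be a product state from the outset, the ``reduction to product states'' argument used in the earlier interferometer bounds is not required here, and we may work with $\ket{\Phi_N}$ directly.

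First I would show that $\langle H_F\rangle=0$. Because the two ports carry definite particle numbers $N_a=N_b=2N$, the summands $\Pd_kQ_k$ of \eqref{HF} raise $N_a$ by two while lowering $N_b$ by two and hence have vanishing expectation; the same holds for their adjoints. Thus $\mathrm{Var}(H_F)=\langle H_F^2\rangle$. Expanding $H_F^2=\sum_{kk'}(\Pd_kQ_k+P_k\Qd_k)(\Pd_{k'}Q_{k'}+P_{k'}\Qd_{k'})$ and discarding all terms that do not separately conserve $N_a$ and $N_b$, only the two ``balanced'' cross terms survive, and they factorize across the ports:
\[
\langle H_F^2\rangle=\sum_{kk'}\langle \Pd_k P_{k'}\rangle_a\,\langle Q_k\Qd_{k'}\rangle_b+\sum_{kk'}\langle P_k\Pd_{k'}\rangle_a\,\langle\Qd_k Q_{k'}\rangle_b .
\]

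The remaining task is to bound these two sums. The essential input is the product-state estimate $\sum_{kl}|\langle\Pd_k P_l\rangle|\leq N$, established for any $2N$-particle separable state in Theorem~\ref{measure-bound-separable-state} (via Lemma~\ref{M-on-product}); it is exactly this Slater-rank-one property that keeps the off-diagonal pair correlators from accumulating. For the off-diagonal contributions ($k\neq k'$) I would bound the $b$-factors crudely by $|\langle Q_k\Qd_{k'}\rangle_b|\leq1$ and $|\langle\Qd_kQ_{k'}\rangle_b|\leq1$ and apply $\sum_{k\neq k'}|\langle\Pd_kP_{k'}\rangle_a|\leq N$ to each of the two sums, for a total $\leq 2N$. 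The diagonal terms need the opposite reasoning: there the surviving correlators are $\langle\Pd_kP_k\rangle_a\langle Q_k\Qd_k\rangle_b=\langle n_kn_{-k}\rangle_a\langle Q_k\Qd_k\rangle_b$ and $\langle P_k\Pd_k\rangle_a\langle\Qd_kQ_k\rangle_b$, so using $\langle Q_k\Qd_k\rangle_b\leq1$, $\langle P_k\Pd_k\rangle_a\leq1$ together with the doubly-occupied-pair bounds $\sum_k\langle n_kn_{-k}\rangle\leq N$ and $\sum_k\langle m_km_{-k}\rangle\leq N$ (a $2N$-particle state fills at most $N$ pairs) gives another $\leq 2N$. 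Assembling both pieces yields $\langle H_F^2\rangle\leq 4N$, and inserting $\mathrm{Var}(H_F)\leq 4N$ into \eqref{estviavar} gives $(\delta\varphi)^2\geq 1/(16N)$.

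I expect the bookkeeping of the diagonal terms to be the main obstacle. Bounding every pair correlator by $1$ and summing over all modes would diverge as $M\to\infty$, so the finiteness hinges on recognizing that whenever one port supplies a diagonal factor close to $1$ (an empty pair), the matching factor from the other port counts a doubly-occupied pair, whose total number is capped by $N$. Making the two sums combine into the clean constant $4N$---rather than the $N^2$ scaling that a careless Cauchy--Schwarz estimate on the off-diagonal entries would produce---is precisely where the product-state bound $\sum_{kl}|\langle\Pd_kP_l\rangle|\leq N$ does the decisive work, and verifying it applies to both $\langle\Pd_kP_{k'}\rangle_a$ blocks is the step I would check most carefully.
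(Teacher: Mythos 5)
Your proposal is correct and follows essentially the same route as the paper's proof: zero mean of $H_F$ by number conservation, factorization of the surviving balanced terms of $\langle H_F^2\rangle$ across the two ports, the product-state pair-correlation bound of Lemma~\ref{M-on-product} for the off-diagonal sums, and the bounds $\langle P_k P_k^\dagger\rangle\leq 1$, $\sum_k\langle P_k^\dagger P_k\rangle\leq N$ for the diagonal ones, giving $\mathrm{Var}(H_F)\leq 4N$ and hence $(\delta\varphi)^2\geq 1/(16N)$ via Eq.~\eqref{estviavar}. The only (immaterial) difference is that you estimate the off-diagonal contribution by an $\ell^1$--$\ell^\infty$ bound where the paper uses Cauchy--Schwarz on the two $\ell^2$ sums; both yield the same $2N$.
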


\begin{proof}
For every product state of the form \eqref{Phiprod} $\langle
H_F\rangle =0$ due to particle number conservation. Hence,
$\mathrm{Var}(H_F)=\langle H_F^2 \rangle$. We will bound this
expectation value:
\begin{eqnarray*}
\langle H_{F}^2\rangle &=& \sum_{k\neq l}\langle \Pd_k P_l\bracket
\Qd_l
Q_k\rangle + c.c. + \sum_k \langle \Pd_k P_k\bracket Q_k \Qd_k\rangle\\
&\leq& 2 \left(\sum_{k\neq l}|\langle \Pd_k P_l
\rangle|^2\right)^{1/2}\left(\sum_{k\neq l}|\langle \Qd_k Q_l
\rangle|^2\right)^{1/2}\\
&&+ \sum_k \langle \Pd_k P_k \bracket Q_k \Qd_k \rangle + c.c.
\end{eqnarray*}
From Lemma \ref{M-on-product} we know that $\left(\sum_{k\neq
l}|\langle \Pd_k P_l \rangle|^2\right)^{1/2}\leq \sqrt{N}$. Further,
$\langle P_k \Pd_k \rangle = \langle 1-(n_k-n_{-k})^2-n_kn_{-k}
\rangle \leq 1$ and $\sum_k \langle \Pd_k P_k \rangle \leq N$. Thus
$\mathrm{Var}(H_{pq})\leq 2\sqrt{N}\sqrt{N} + 2N = 4N$ which leads
immediately to our result via \eqref{estviavar}.
\end{proof}

This bound can be beaten by a factor of $\sqrt{N}$ using paired
states. A lengthy but straightforward calculation leads to the
following result:

\begin{theorem}\label{interferometer-paired}
Using paired states of the form \eqref{paired-in} as input states
for the interferometric setting depicted in fig.
\ref{complex-setup}, the phase $\varphi$ of the coupling
Hamiltonian $H_c=\varphi H_{F}$, where $H_F$ is defined in
\eqref{HF} can be measured to a precision
\begin{equation}
(\delta \varphi)^2 = \frac{M(M-1)}{8N(M-N)(M - 1 + M N -N^2)}.
\end{equation}
\end{theorem}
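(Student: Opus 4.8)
The plan is to evaluate the small-$\varphi$ expression \eqref{deltaphisimple} with generator $H=H_F$ and observable $O=(n^{(-)}_M)^2$. First I would verify the prerequisite for \eqref{deltaphisimple}, namely that $|\Psi^{(M)}_N\rangle$ is a zero eigenvector of $O$. From the relations $\ppm|N\rangle_a^{(M)}=\alpha_N|N-1\rangle_a^{(M)}$ and $\qm|N\rangle_b^{(M)}=\alpha_N|N-1\rangle_b^{(M)}$ with $\alpha_N=\sqrt{N(1-(N-1)/M)}$, one gets $\pdm\ppm|N\rangle_a^{(M)}=\alpha_N^2|N\rangle_a^{(M)}$ and likewise for $\qdm\qm$, so that $n^{(-)}_M|\Psi^{(M)}_N\rangle=\tfrac12(\alpha_N^2-\alpha_N^2)|\Psi^{(M)}_N\rangle=0$, and hence also $O|\Psi^{(M)}_N\rangle=0$.

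Given this, I would recast the two ingredients of \eqref{deltaphisimple} as norms. Writing $|\Phi\rangle=H_F|\Psi^{(M)}_N\rangle$ and using that $n^{(-)}_M$ is Hermitian and annihilates the input, the vector $|\chi\rangle\equiv n^{(-)}_M|\Phi\rangle=[n^{(-)}_M,H_F]|\Psi^{(M)}_N\rangle$ satisfies $\langle H_F O H_F\rangle=\langle\chi|\chi\rangle$ and $\langle H_F O^2 H_F\rangle=\langle\chi|(n^{(-)}_M)^2|\chi\rangle$, so that the whole problem collapses to
\begin{equation}
(\delta\varphi)^2=\frac{\langle\chi|(n^{(-)}_M)^2|\chi\rangle}{4\,\langle\chi|\chi\rangle^2}.
\end{equation}

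Next I would compute $|\chi\rangle$ explicitly from the pair algebra. Since only the diagonal pair commutators $[P_k,P_l^\dagger]=\delta_{kl}(1-n_k-n_{-k})$ (and the $b$-analogue $[Q_k,Q_l^\dagger]=\delta_{kl}(1-m_k-m_{-k})$) are nonzero, one finds $[\pdm\ppm,\Pd_k]=\tfrac1{\sqrt M}\pdm(1-n_k-n_{-k})$, $[\pdm\ppm,P_k]=-\tfrac1{\sqrt M}(1-n_k-n_{-k})\ppm$, together with the corresponding $b$-expressions. Substituting these into $[n^{(-)}_M,H_F]$ gives an explicit four-term operator whose action on $|\Psi^{(M)}_N\rangle$ is a sum over $k$ of ``defect'' states, in which one mode is singled out while the remaining pairs stay in the symmetric hard-core-boson configuration. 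Here I would exploit two simplifications: $H_F$ connects the pair-number sector $(N,N)$ only to the sectors $(N+1,N-1)$ and $(N-1,N+1)$, which are mutually orthogonal and left invariant by $n^{(-)}_M$; and the input, $H_F$, and $(n^{(-)}_M)^2$ are all symmetric under the exchange $a\leftrightarrow b$, under which $n^{(-)}_M\to-n^{(-)}_M$. This reduces both norms to twice their value on the single sector $(N+1,N-1)$.

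The remaining, genuinely laborious, step is to evaluate $\langle\chi|\chi\rangle$ and $\langle\chi|(n^{(-)}_M)^2|\chi\rangle$, and I expect this to be the main obstacle. The difficulty is precisely that $\pdm\ppm$ and $\qdm\qm$ act diagonally only on the fully symmetric states $|N\rangle_a^{(M)},|N\rangle_b^{(M)}$, but not on the defect states created by $H_F$; the second application of $n^{(-)}_M$ in the quartic term therefore generates off-diagonal pair-hopping between the marked and the unmarked modes, so one must track the overlaps $\langle\mathrm{defect}_k|\pdm\ppm|\mathrm{defect}_{k'}\rangle$ according to whether $k=k'$. Carrying out the resulting double sums over $k,k'$ with the normalization $c^{(M)}_N$ and the identity $\alpha_N^2=N(1-(N-1)/M)$ yields $\langle\chi|\chi\rangle$ and $\langle\chi|(n^{(-)}_M)^2|\chi\rangle$ as rational functions of $N$ and $M$; inserting them into the displayed ratio and simplifying then gives the claimed $(\delta\varphi)^2=M(M-1)/[8N(M-N)(M-1+MN-N^2)]$.
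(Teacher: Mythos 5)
Your proposal is correct and takes essentially the same route as the paper: the paper proves this theorem only by invoking the small-$\varphi$ formula \eqref{deltaphisimple} with the observable $O=(n^{(-)}_M)^2$ and the ladder relations $\ppm|N\rangle_a^{(M)}=\alpha_N|N-1\rangle_a^{(M)}$, dismissing the rest as ``a lengthy but straightforward calculation'', which is precisely the machinery you set up. All of your preparatory steps check out (the zero-eigenvalue verification, the reduction to $|\chi\rangle=[n^{(-)}_M,H_F]|\Psi^{(M)}_N\rangle$ with $(\delta\varphi)^2=\langle\chi|(n^{(-)}_M)^2|\chi\rangle/(4\langle\chi|\chi\rangle^2)$, the pair commutators, and the decomposition into the mutually orthogonal $(N{+}1,N{-}1)$ and $(N{-}1,N{+}1)$ sectors exchanged by the $a\leftrightarrow b$ swap), and they in fact make explicit structure that the paper leaves entirely implicit.
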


This theorem implies $(\delta \varphi)^2 \sim 1/N^2$ for all $M
\geq 2 N$. In conclusion we have shown that paired states are a
resource for quantum metrology. Theorem
\ref{interferometer-paired} is the main result of this section. We
have remarked already at the beginning of this chapter that it has
been proven before that the Heisenberg limit can be achieved using
fermionic particles \cite{PhysRevLett.56.1515}. However, these
states were constructed in an abstract way, while we prove that
the BCS states that can be created easily in an experimental setup
are a very powerful resource for quantum metrology.

\section{Application to experiments and conclusion}\label{Conclusion-section}
In summary, we have developed a pairing theory for fermionic states. We have
given a precise definition of pairing based on a minimal list of natural
requirements. We have seen that pairing is not equivalent to entanglement of
the whole state nor of its two-particle reduced density operator but
represents a different kind of quantum correlation. Within the
framework of fermionic Gaussian states we could solve the pairing problem
completely. For number conserving states we have given sufficient conditions
for the detection of pairing that can be verified by current experimental
techniques, e.g. via spatial noise correlations
\cite{PhysRevA.70.013603,Rom,greiner:110401} and prescribed a systematic way
to construct complete families of pairing witnesses.

To shed some light on the pairing debate \cite{partridge:020404,partridge_27,
  zwierlein_response, patridge_response}, we would need access to the
proportionality factor linking the quantity plotted in Fig.~4 of
\cite{partridge:020404} to the local pair correlation correlation function
$G_2(r,r)=\langle
\Psi_{\downarrow}^{\dagger}(r)\Psi_{\uparrow}^{\dagger}(r)\Psi_{\uparrow}(r)\Psi_{\downarrow}(r)
\rangle$.

Another important point of our work is the utility of fermionic states for
quantum metrology. While it has been shown that, in principle, fermionic
states can achieve the Heisenberg limit for precision measurements in a
Ramsey-type interferometer \cite{PhysRevLett.56.1515}, we could prove the
usefulness of states that are available in the laboratory. Furthermore, the
optimal precision for the Ramsey-type setup is proportional to the pairing
measure introduced from an intuitive picture in Secs.~\ref{Gaussian-section}
and \ref{Number-conserving-section}. This endows the measure with an
operational meaning.\\
The results we have presented are just a first step in understanding pairing
and its relation to other types of quantum correlations.

We hope that the pairing theory we have developed will help to get a
better understanding of correlated many-body systems, and can provide a new
perspective on quantum correlationsa and may serve as a
starting point for further inquiries.

For example, one might attempt a finer characterization of pairing, e.g.,
$\sum_{k=1}^2P_k^\dag\vac$ and $\sum_{k=1}^MP_k^\dag\vac$ represent paired
states of rather different nature: it would be interesting to develop
witnesses or measures which allow to determine over how many modes the pairs
in a given states extend and to relate these differences to applications in
metrology ore elsewhere?. Moreover, the theory we developed has been concerned
with finitely many modes only and it is an obvious question whether
generalizing to an infinite dimensional single-particle space gives rise to
new phenomena.

Up to now we have concentrated on fermionic states. But the question of
pairing in bosonic systems might be equally interesting and relevant for
recent experiments \cite{RSK+08}.

What about higher-order correlations? The set of unpaired states contains both
separable and highly correlated states. This is, for example, reflected in the
fact that there are unpaired states which can be transformed to paired ones by
single-mode particle number measurements (e.g.,
$(\ad_1\ad_2\ad_3+\ad_4\ad_5\ad_6)\vac$ by measuring particle number in mode
$b = a_3+a_6$). A theory of higher-order correlated states could be developed
along the lines discussed here, e.g., by changing the set of observables on
which the states are compared to uncorrelated ones and defining as $n$th-order
correlated those states whose expectation values on $n$th-order observables
cannot be reproduced by $m<n$-correlated states.

Tools and methods from entanglement theory have been very useful in analyzing
pairing. One very important such tool, however, is missing: positive
maps, that is, transformations which do not correspond to a physical
operations but nevertheless, when applied to a subsystem in a separable state
with the rest, map density operators to (unnormalized) density operators and
thus provides strong necessary conditions for separability. Finding an analogy
might prove very useful for the analysis of many-body correlations. Another
importat object in the theory of entanglement is the set of LOCC operations
(local operations and classical communication), i.e., the operations that
cannot create entanglement. In the case of pairing, the analogous set would
contain passive operations and discarding modes. Are there other physical
transformations that cannot create pairing? Do paired states, then, possibly
allow to implement such transformations similar to entanglement enabling
non-LOCC operations?

\begin{acknowledgments}
  We acknowledge support by the Elite Network of Bavaria QCCC, the DFG
  within SFB 631, the DFG Cluster of Excellence Munich-Centre for Advanced
  Photonics, QUANTOP and the Danish Natural Science Research
  Council (FNU).
\end{acknowledgments}


\begin{appendix}
\section{Useful properties of separable states}\label{usefull}
We give two technical lemmas which involve useful properties of
product states.

\subsection{Bound of $\ad_i\ad_ja_k a_l +h.c.$ on separable
states}\label{bound-simple} In this section we prove a bound of a
some special two-body operator on product states:
\begin{lemma}\label{simplewitness}
Let $\rho \in S_{sep}$ be a separable state. Then
\begin{equation}
|\mathrm{tr}[(\ad_i\ad_ja_k a_l +h.c.)\rho_s]|\leq 1/2
\end{equation}
\end{lemma}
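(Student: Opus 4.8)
The plan is to reduce the claim to pure product states and then recognise the relevant expectation value as a $2\times2$ minor of a one-particle projector, whose size is controlled by passing to the associated Hermitian involution. First I would use convexity: since $O:=\ad_i\ad_j a_k a_l+\mathrm{h.c.}$ is Hermitian, for any separable $\rho_s=\sum_p\lambda_p\proj{\Psi_p}$ (a convex combination of product states) one has $\tr[O\rho_s]=\sum_p\lambda_p\langle\Psi_p|O|\Psi_p\rangle$ with every summand real, so $|\tr[O\rho_s]|\le\max_p|\langle\Psi_p|O|\Psi_p\rangle|$. Hence it suffices to bound $|\langle O\rangle|$ on a single pure product state $\ket{\Psi}=\prod_{m}\ad_{f_m}\vac$, which is a number-conserving fermionic Gaussian state (a Slater determinant).

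On such a state Wick's theorem applies. The anomalous contractions $\langle\ad_i\ad_j\rangle$ and $\langle a_k a_l\rangle$ vanish by particle-number conservation, leaving $\langle\ad_i\ad_j a_k a_l\rangle=\langle\ad_i a_l\rangle\langle\ad_j a_k\rangle-\langle\ad_i a_k\rangle\langle\ad_j a_l\rangle$. Writing $P_{mn}=\langle\ad_n a_m\rangle$ for the one-particle density matrix, the right-hand side is exactly the $2\times2$ determinant $\det S$ of the submatrix $S=P[\{l,k\},\{i,j\}]$, and therefore $\langle O\rangle=2\,\mathrm{Re}\,\det S$. The structural fact I would then exploit is that for a Slater determinant $P$ is an orthogonal projector, $P=P^\dagger=P^2$, onto $\mathrm{span}\{f_m\}$.

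The hard part will be extracting the \emph{sharp} constant: the naive estimates (Cauchy--Schwarz on the rows of $P$, using $\sum_n|P_{mn}|^2=P_{mm}\le1$, or a Hadamard-type bound) only give $|\det S|\le1$, i.e. $|\langle O\rangle|\le2$, which is four times too weak. The trick I would use is to pass to $U:=2P-\one$, which satisfies $U=U^\dagger$ and $U^2=4P^2-4P+\one=\one$, i.e. $U$ is a Hermitian \emph{unitary} (involution). In all applications the operator transfers a pair between disjoint modes, so $\{i,j\}\cap\{k,l\}=\emptyset$; since then the row index and column index of every entry of $S$ differ, the Kronecker-delta part of $P=(\one+U)/2$ drops out and $S=\tfrac12\,U[\{l,k\},\{i,j\}]$. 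A $2\times2$ submatrix of a unitary matrix is obtained by deleting rows and columns, so both its singular values are bounded by $1$ and $|\det U[\{l,k\},\{i,j\}]|\le1$. Consequently $|\det S|\le\tfrac14$ and $|\langle O\rangle|=2|\mathrm{Re}\,\det S|\le\tfrac12$, as claimed.

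Two remarks about the proof I would include. The constant is optimal: a rank-two projector whose relevant $2\times2$ block equals $\mathrm{diag}(\tfrac12,-\tfrac12)$ (e.g. the Slater state built from $\tfrac1{\sqrt2}(f+g)$ and $\tfrac1{\sqrt2}(f'-g')$ on four disjoint modes) saturates $|\langle O\rangle|=\tfrac12$, so no slack is lost in the involution step. The disjointness of the two index pairs is exactly the regime in which the stated bound holds and is the only regime relevant to the pairing witnesses (where the operator is of the form $\ad_k\ad_{-k}a_{-l}a_l$ with $k\neq\pm l$); when the pairs overlap the operator degenerates into number-operator contributions that should be treated separately.
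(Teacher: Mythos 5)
Your proof is correct, but it takes a genuinely different route from the paper's. The paper first \emph{localizes} the problem: by Lemma~\ref{le:sep-trace} the restriction of a product state to the four modes $i,j,k,l$ is a mixture of separable $n$-particle states on $\mathcal{H}_{ijkl}$, and since the operator has vanishing expectation on every sector except $n=2$, it suffices to bound $2|\mathrm{Re}[(\mu_i\nu_j-\mu_j\nu_i)(\mu_k\nu_l-\mu_l\nu_k)^*]|$ on a two-particle Slater state $(\sum_r\mu_r\ad_r)(\sum_s\nu_s\ad_s)\vac$, which follows from $|2\mathrm{Re}(ab^*)|\leq |a|^2+|b|^2$ and the normalization of the coefficients. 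You instead stay in the full mode space and work in the one-particle picture: Wick's theorem turns $\langle\ad_i\ad_ja_ka_l\rangle$ into a $2\times 2$ minor of the rank-$N$ projector $P$, and the involution $U=2P-\one$, together with the fact that any submatrix of a unitary has singular values at most $1$, gives the sharp $|\det S|\leq 1/4$. This is sound, and it is in fact the same machinery the paper itself deploys elsewhere (Lemma~\ref{exp-as-proj} and Lemma~\ref{M-on-product} in the appendix), so your argument harmonizes with the rest of the text; a small shortcut is available, since for a projector $\sum_{n\neq l}|P_{ln}|^2=P_{ll}(1-P_{ll})\leq 1/4$, so each row of $S$ (consisting only of off-diagonal entries of $P$, by distinctness of the indices) has norm at most $1/2$ and Hadamard's inequality yields $|\det S|\leq 1/4$ without introducing $U$. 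Your caveat that the two index pairs must be disjoint is not a defect but a correct reading of the statement: the bound fails for overlapping pairs (e.g. $\ad_i\ad_ja_ia_j+h.c.=-2n_in_j$ has expectation $-2$ on $\ad_i\ad_j\vac$), and the paper's proof implicitly assumes four distinct modes as well, both in defining $\mathcal{H}_{ijkl}$ and in the application to $\Pd_kP_{k+M}$; your saturating example is also valid. In terms of what each approach buys: the paper's is elementary and self-contained in a four-mode, two-particle computation, while yours obtains the sharp constant from a structural spectral fact about projectors that generalizes readily to larger minors and other two-body observables.
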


\begin{proof}
Let $\mathcal{H}_{ijkl}$ be the Hilbert space spanned by $\ad_i,
\ad_j, \ad_k, \ad_l$ and define $A_{ijkl}=\ad_i\ad_ja_k a_l
+h.c.$. Then
$\mathrm{tr}[A_{ijkl}\rho]=\mathrm{tr}[\rho^{(ijkl)}A_{ijkl}]$,
where $\rho^{(ijkl)}=\sum_{n=0}^4 \beta_{ijkl}^{(n)}|n\bracket n|$
is a mixed separable state according to Lemma \ref{le:sep-trace},
and $|n\rangle$ denotes the occupation number basis for the
subspace $\mathcal{H}_{ijkl}$. It is easily checked that
$A_{ijkl}$ can have non-vanishing expectation value only for the
two-particle state $|2\rangle = \left(\sum_{r = i,j,k,l}\mu_r
\ad_r \right)\left(\sum_{s=i,j,k,l}\nu_s \ad_s \right)|0\rangle.$
Using $|2 \mathrm{Re}(ab)| \leq |a|^2+|b|^2$ for any complex
numbers $a, b$ and the normalization conditions $\sum_r|\mu_r|^2=
\sum_r|\nu_r|^2=1$, one arrives at $ |\mathrm{tr}[A_{ijkl}\rho]|=2
|\mathrm{Re} [(\mu_i \nu_j - \mu_j \nu_i)(\mu_k \nu_l - \mu_l
\nu_k)^*]| = (|\mu_i|^2+ |\mu_j|^2)(|\mu_k|^2+ |\mu_l|^2)+
(|\nu_i|^2+ |\nu_j|^2)(|\nu_k|^2+ |\nu_l|^2) \leq 0.25 + 0.25 =
0.5$.
\end{proof}

\subsection{Expectation values of one-and two-body operators for separable states}
In this section we will prove that the one-and two-body operators
for separable states can be expressed in terms of matrix elements of
projectors.

\begin{lemma}\label{exp-as-proj}
Let $\rho \in \mathcal{S}_{sep}^{(N)}$ be a pure separable state.
Then
\begin{eqnarray}
\langle n_i\rangle &=& P_{ii}\\
 \langle \ad_i\ad_j a_k a_l\rangle_{\rho}
&=& (P\otimes P)_{(ij)(lk)}-(P\otimes
P)_{(ij)(kl)},\nonumber\\\label{aaaa}
\end{eqnarray}
where $P=P^2=P^{\dagger}$ is a projector of rank $N$.
\end{lemma}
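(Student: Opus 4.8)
The plan is to reduce everything to a single Slater determinant and then read off the one- and two-particle expectations from an associated rank-$N$ projector. First I would observe that a \emph{pure} state lying in the convex set $\mathcal{S}_{sep}^{(N)}$ is an extreme point of that set, hence cannot be written as a nontrivial mixture and must itself be a product state in the sense of Def.~\ref{DefSep}: after a suitable passive transformation $|\Psi\rangle=\prod_{j=1}^{N}\ad_{f_j}\vac$ for some orthonormal family $\{f_j\}_{j=1}^N\subset\mathcal{H}$. To this family I associate the orthogonal projector onto the occupied single-particle subspace,
\[
P=\sum_{j=1}^{N}|f_j\rangle\langle f_j|,
\]
which is Hermitian, satisfies $P^2=P$, and has rank $N$; this is the $P$ of the statement.

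The first computational step is the one-particle reduced density matrix. I would extend $\{f_j\}_{j\le N}$ to an orthonormal basis of $\mathcal{H}$ and pass, by the corresponding passive transformation, to new mode operators $b_j$ in which $|\Psi\rangle=b_1^\dagger\cdots b_N^\dagger\vac$ is the standard occupation-number state. There $\langle b_m^\dagger b_n\rangle=\delta_{mn}$ for $m\le N$ and vanishes otherwise, i.e.\ the one-particle RDM in the $b$-basis is exactly the projector onto the first $N$ modes. Transforming back via $\ad_k=\sum_j (f_j)_k\, b_j^\dagger$ gives $\langle \ad_k a_l\rangle=\sum_{j\le N}(f_j)_k\overline{(f_j)_l}=P_{kl}$, and in particular $\langle n_i\rangle=\langle\ad_i a_i\rangle=P_{ii}$, which is the first claimed identity.

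For the two-particle expectation I would invoke the fact (used already in Sec.~\ref{Definition-Section}) that every pure fermionic product state is Gaussian, so that Wick's theorem applies, together with number conservation, which forces the anomalous contractions $\langle\ad_i\ad_j\rangle$ and $\langle a_k a_l\rangle$ to vanish. Only the normal contractions then survive, so
\[
\langle\ad_i\ad_j a_k a_l\rangle=\langle\ad_i a_l\rangle\langle\ad_j a_k\rangle-\langle\ad_i a_k\rangle\langle\ad_j a_l\rangle
=P_{il}P_{jk}-P_{ik}P_{jl},
\]
which is exactly $(P\otimes P)_{(ij)(lk)}-(P\otimes P)_{(ij)(kl)}$ in the composite-index convention $(P\otimes P)_{(ab)(cd)}=P_{ac}P_{bd}$. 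As an alternative to Wick's theorem one can compute $\langle b_i^\dagger b_j^\dagger b_l b_k\rangle=\delta_{ik}\delta_{jl}-\delta_{il}\delta_{jk}$ directly in the occupied basis (as in the proof of Thm.~\ref{Eigenvalue-Thm}) and then conjugate by the two copies of the basis-change matrix, recovering the same $(P\otimes P)$ expression.

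The individual steps are all routine; the only places where a sign or a complex conjugate can slip in — and hence the parts that need the most care — are fixing the convention for $a_f$ so that the \emph{Hermitian} projector appears with the index placement $\langle\ad_k a_l\rangle=P_{kl}$ (rather than $P_{lk}$), and getting the relative minus sign in the Wick expansion right, so that the antisymmetry in $(k,l)$ of the two-particle expectation correctly matches the antisymmetrized combination $(P\otimes P)_{(ij)(lk)}-(P\otimes P)_{(ij)(kl)}$.
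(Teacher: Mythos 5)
Your proof is correct, and it reaches the paper's statement by a partly different route. The paper works entirely in the occupied basis: it notes that $\langle\ad_{\alpha_i}\ad_{\alpha_j}a_{\alpha_k}a_{\alpha_l}\rangle=\delta_{il}\delta_{jk}-\delta_{ik}\delta_{jl}$ for $|\Phi\rangle=\prod_{i=1}^N\ad_{\alpha_i}\vac$ (i.e.\ the claim with $P=\mathrm{Id}_N$) and then conjugates by $U\otimes U$ under the basis change $\ad_i=\sum_kU_{ik}\ad_{\alpha_k}$, obtaining $P=U\,\mathrm{Id}_N\,U^\dagger$ --- exactly the ``alternative'' you mention in passing at the end. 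Your primary derivation instead invokes Gaussianity of product states plus number conservation and applies Wick's theorem, $\langle\ad_i\ad_ja_ka_l\rangle=P_{il}P_{jk}-P_{ik}P_{jl}$; this is equally valid (the paper itself uses the same factorization in its lemma expressing $E(\vec r)$ via the block $R$ of $\Gamma_c$) and is arguably more transparent about where the antisymmetrized $(P\otimes P)$ combination comes from, at the cost of importing the Gaussian machinery for what is otherwise an elementary computation. A second genuine difference is the one-particle identity: you compute the $1$-RDM directly in the occupied basis and transform back, while the paper derives $\langle n_i\rangle=P_{ii}$ \emph{from} the two-body formula via the summation trick $(N-1)\langle n_i\rangle=\sum_{j\neq i}\langle n_in_j\rangle=\tr[P]P_{ii}-(P^2)_{ii}$; your direct route is simpler. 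Finally, you make explicit a step the paper leaves implicit, namely that a pure state in $\mathcal{S}_{sep}^{(N)}$ must itself be a product state (purity forces any convex decomposition into product states to be trivial), and your caution about index/conjugation conventions is well placed but harmless here, since replacing $P$ by $\bar P=P^T$ yields another rank-$N$ projector and the statement is insensitive to that choice.
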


\begin{proof}
Consider $M$ modes. We go into the basis where the pure separable
state is of the form $|\Phi\rangle = \prod_{i=1}^N
\ad_{\alpha_i}\vac.$ In this basis $
\langle\ad_{\alpha_i}\ad_{\alpha_j}a_{\alpha_k}a_{
\alpha_l}\rangle_{|\Phi\rangle} =
\delta_{il}\delta_{jk}-\delta_{ik}\delta_{jl}$, i.e. \ref{aaaa} for
$P = \mathrm{Id}_N$, where $\mathrm{Id}_N = \mathds{1}_N \bigoplus
0_{M-N} \in \mathds{C}^{M \times M}$. Now let $\ad_i = \sum_k
U_{ik}\ad_{\alpha_k}$. Then
\begin{eqnarray*}
\langle\ad_{i}\ad_{j}a_{k}a_{l}\rangle_{\rho_s^{(N)}} &=& (U\
\mathrm{Id}_N U^{\dagger})\otimes (U \mathrm{Id}_N
U^{\dagger})_{(ij)(lk)}-\nonumber\\
&&(U \mathrm{Id}_N U^{\dagger})\otimes
(U \mathrm{Id}_N U^{\dagger})_{(ij)(kl)}\nonumber\\
&=&(P\otimes P)_{(ij)(lk)}-(P\otimes P)_{(ij)(kl)},
\end{eqnarray*}
and $P$ is a projector of rank $N$.\\
For the one-particle operators, we obtain $\langle n_i\rangle =
P_{ii},$ as $(N-1)\langle n_i\rangle = \sum_{j\neq i} \langle n_i
n_j \rangle = \sum_{j \neq i} P_{ii}P_{jj} - |P_{ij}|^2 =\sum_j
P_{ii}P_{jj}-P_{ij}P_{ji}= \mathrm{tr}[P]P_{ii}-(P^2)_{ii} =
(N-1)P_{ii}$.
\end{proof}

\section{Proof of Lemmas \ref{Convex-Sep} and
\ref{Convex-all}}\label{convex-proof}
\subsection{Proof of Lemma \ref{Convex-Sep}}
\begin{proof}
As $H_{1\pm}^{(p)}, H_{2\pm}^{(p)}, H_{3\pm}^{(p)}$ are built up
of operators that are the product of at most two creation and
annihilation operators, we can prove the lemma for separable
states.  In the first step, we will show that the three operators
are positive on all separable states. Then we will show that all
states within the set bounded by $H_{1\pm}^{(p)}, H_{2\pm}^{(p)},
H_{3\pm}^{(p)}$ correspond to a separable state. Finally we will
show there exist states that are detected as paired by
$H_{1\pm}^{(p)}$ and $H_{3\pm}^{(p)}$. Positivity of
$H_{2\pm}^{(p)}$ on all number conserving states will be shown in
the proof of Lemma \ref{Convex-Sep} following below.\\
To show positivity of $H_{1\pm}^{(p)}, H_{2\pm}^{(p)},
H_{3\pm}^{(p)}$ it is sufficient to show the positivity for pure
separable states, as the result for mixed states follows from
convexity. From now on, let $\rho \in
\mathcal{S}_{sep}^{(N)}.$\\
{$\mbox{\textbf{tr}}\mathbf{[H_{1\pm}^{(p)} \rho] \geq 0}$}\\
In Lemma \ref{exp-as-proj} (see appendix \ref{usefull}) we have
shown that the expectation values of number conserving one-and
two-body operators can be expressed in terms of matrix elements of
projectors. Let $P$ be the rank $N$ projector such that $\langle
\ad_i\ad_j a_k a_l\rangle_{\rho} = (P\otimes
P)_{(ij)(lk)}-(P\otimes P)_{(ij)(kl)}$, $\langle n_i \rangle =
P_{ii}$ and let $\tilde{P}=P|_{k -k, l, -l}$ the $4 \times 4$
principal submatrix of $P$ where the indices run over $k, -k, l,
-l.$ Then we have the following inequalities:
\begin{eqnarray}
\langle n_k n_{-k}\rangle &=& P_{kk}P_{-k-k}-|P_{k-k}|^2\nonumber\\
&\leq&
\frac{1}{2}\left(|P_{kk}|^2+|P_{-k-k}|^2\right)-|P_{k-k}|^2,
\end{eqnarray}
\begin{eqnarray}
\lefteqn{|\langle \ad_k \ad_{-k}a_{-l}a_l +h.c.\rangle| }\nonumber\\
&&= 2 |\mbox{Re}(P_{kl}P_{-k-l}-P_{k-l}P_{-kl})|\\
&&\leq 2 (|P_{kl}||P_{-k-l}|+|P_{k-l}||P_{-kl}|)\nonumber\\
&&\leq(|P_{kl}|^2+|P_{-k-l}|^2+|P_{k-l}|^2+|P_{-kl}|^2).
\end{eqnarray}
These results imply
\begin{equation}
\mathrm{tr}[\rho H_{1\pm}^{(p)}]\geq
\frac{1}{2}\mathrm{tr}[\tilde{P}-\tilde{P}^2]+|P_{k-k}|^2+|P_{l-l}|^2
\end{equation}
We use the inclusion principle \cite{H&J}, stating that the
eigenvalues of a $r \times r$ principal submatrix $M_r$ of a $n
\times n$ Hermitian matrix $M$ fulfill $\lambda_k(M)\leq
\lambda_k(M_r)\leq \lambda_{k+n-r}(M),$ where the eigenvalues are
arranged in increasing order. As $P$ is a projector, we have $0
\leq \lambda_k(P)\leq \lambda_k(\tilde{P})\leq
\lambda_{k+M-r}(P)\leq 1$. Hence,
\begin{equation}
\mathrm{tr}[H_{1\pm}^{(p)}\rho]\geq
\frac{1}{2}\mathrm{tr}[\tilde{P}-\tilde{P}^2]\geq
\frac{1}{2}\sum_k \lambda_k(\tilde{P})(1-\lambda_k(\tilde{P}))\geq
0
\end{equation}
$\mbox{\textbf{tr}}\mathbf{[H_{2\pm}^{(p)} \rho] \geq 0}$\\
Define $O_1 = n_kn_{-k}+n_ln_{-l}\geq 0$, $O_2^{\pm} = 1\pm\ad_k
\ad_{-k}a_{-l}a_l + h.c.\geq 0$. Then
$H_{2\pm}^{(p)}=O_1O_2^{\pm},$ and as $[O_1, O_2^{\pm}]=0$ we
conclude that $H_{2\pm}^{(p)} = O_1O_2^{\pm} \geq
0$.\\
$\mbox{\textbf{tr}}\mathbf{[H_{3\pm}^{(p)} \rho] \geq 0}$\\
We will need the lemma \ref{le:sep-trace}:
$\mathrm{tr}[H_{3\pm}^{(p)} \rho]= \mathrm{tr}[H_{3\pm}^{(p)}
\rho_{kl}],$ where $\rho_{kl}= \sum_{n=0}^4 \beta_n|n\bracket n|$,
$\beta_n \geq 0,$ $\sum_{n=0}^4 \beta_n=1$ and $|n\rangle, n=0,
\ldots, 4$ are separable $n$-particle states. Let $\langle
H_{3\pm}^{(p)}\rangle_n = \langle n|H_{3\pm}^{(p)}|n \rangle.$
Then a straightforward calculation leads to $ \langle
H_{3\pm}^{(p)} \rangle_0 = 1,$ $\langle H_{3\pm}^{(p)} \rangle_1 =
\frac{1}{2}$, $\langle H_{3\pm}^{(p)} \rangle_2 = \frac{1}{2}$,
$\langle H_{3\pm}^{(p)} \rangle_3 = 0$ and $\langle H_{3\pm}^{(p)}
\rangle_4 =  0$. Linearity of the trace implies $\mathrm{tr}[H_{3\pm}^{(p)} \rho]\geq 0.$\\
Hence, all separable states lie within the set bounded by the
planes defined by the witness operators $H_{1\pm}^{(p)},
H_{2\pm}^{(p)}, H_{3\pm}^{(p)}.$ \\
Next, we show that each point within the polytope
$C^\mathrm{unpaired}$ corresponds to a separable state. As
$S_{sep}$ is convex, it is sufficient to check that for every
extreme point of $C^\mathrm{unpaired}$ there exists a separable
state. This is indeed the case: The extreme points of
$C^\mathrm{unpaired}$ are $(0,0,0)$, $(2,0,0)$, $(4,2,0)$,
$(2,1/2,\pm1/2)$ which correspond for example to the separable
states $\vac$, $\ad_k \ad_l \vac$, $\ad_k \ad_{-k}\ad_l
\ad_{-l}\vac$ and $(\ad_k + \ad_l)(\ad_{-k}\pm\ad_{-l})/2\vac$
respectively.\\
It remains to show that $H_{1\pm}^{(p)}$ and $H_{3\pm}^{(p)}$ are
pairing witnesses. Define $|\Psi\rangle =
\frac{1}{\sqrt{2}}(\ad_k\ad_{-k}+\ad_l\ad_{-l})\vac.$ Then
$\mathrm{tr}[H_{1\pm}^{(p)} |\Psi\bracket
\Psi|]=\mathrm{tr}[H_{3\pm}^{(p)} |\Psi\bracket \Psi|]=-1$.
\end{proof}

\subsection{Proof of Lemma \ref{Convex-all}}
\begin{proof}
It is sufficient to prove the lemma for $\rho \in
\mathcal{S}(\mathcal{A}_N)$, as the result for a general number
conserving state follows from convexity.\\
$\mbox{\textbf{tr}}\mathbf{[H_1 \rho] \geq 0}$\\
$H_1 = \frac{1}{2}(n_k-n_{-k})^2+\frac{1}{2}(n_l-n_{-l})\geq 0$\\
$\mbox{\textbf{tr}}\mathbf{[H_{2\pm} \rho] \geq 0}$\\
Shown in the proof of Thm. \ref{Convex-all}.\\
$\mbox{\textbf{tr}}\mathbf{[H_{3\pm} \rho] \geq 0}$\\
Let $\rho_{kl}=\sum_{n=1}^4 \beta_n |0\bracket n|$ be the reduced
density operator in the modes $\pm k, \pm l.$ We can rewrite
$H_{3\pm}$ in the form $H_3 = 2 -
\frac{1}{2}(n_k+n_{-k}+n_l+n_{-l})(1\mp (\ad_k\ad_{-k}a_{-l}a_l +
h.c.)).$ Defining $O_{2\mp} =1\mp (\ad_k\ad_{-k}a_{-l}a_l +
h.c.),$ we obtain $\langle O_{2\mp} \rangle_0=\langle
O_{2\mp}\rangle_1=\langle O_{2\mp} \rangle_3=\langle O_{2\mp}
\rangle_4=1, \langle O_{2\mp}\rangle_2 \leq 2$. This implies $
\mathrm{tr}[\rho H_{3\pm}] \geq 4-(\beta_1 + 2\beta_2 + 3\beta_3 +
4\beta_4) \geq 4-\sum_{n=0}^4 n \beta_n =4-
\mathrm{tr}[(n_k+n_{-k}+n_l+n_{-l})\rho_{kl}]\geq 0$\\
As in the proof of Lemma \ref{Convex-Sep} it remains to show that
the extreme points of $C^\mathrm{all}$ correspond to some
fermionic state. It has been shown in the proof of Lemma
\ref{Convex-Sep} that $(0,0,0)$, $(2,0,0)$ and $(4,2,0)$ can be
reached by some separable state. The remaining two extreme points,
$(2, 1, \pm 1)$ correspond for example to the state
$\frac{1}{\sqrt{2}}(\ad_k\ad_{-k}+\ad_l\ad_{-l})\vac$.
\end{proof}

\section{Proof of Thm. \ref{witness-Nc}}\label{Proof-witness}
In this section we will provide all the details leading to Thm.
\ref{witness-Nc}, starting with the bound on separable states:

\begin{lemma}\label{bound-on-ps}
Let $\rho \in S_{sep}$ with $\tr[N_{op}\rho]=N$ and let $H(\{v_k\})$
be as in Thm. \ref{witness-Nc}. Then
$\mathrm{tr}[H(\{v_k\})\rho_s]\geq 0$.
\end{lemma}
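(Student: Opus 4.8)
My plan is to reduce the claim to a statement about a single four-mode block and then to a one-line estimate. \emph{First}, I would use the convexity of $\mathcal{S}_{sep}$ and the linearity of the trace: since $H(\{v_k\})$ involves only operators from $A_2$ and $\mathcal{S}_{sep}$ is the convex hull of pure product states, it suffices to prove $\tr[H(\{v_k\})\rho]\ge 0$ for every pure product state $\rho=|\Phi\rangle\langle\Phi|$. \emph{Second}, I would exploit that $H=\sum_k H_k$ splits into terms supported on the mutually disjoint four-mode blocks $B_k=\{k,-k,k+M,-(k+M)\}$; by Lemma~\ref{le:sep-trace} the reduction of $\rho$ to $B_k$ is again separable, so $\tr[H\rho]=\sum_k\tr[H_k\rho^{(k)}]$ and it is enough to show $\tr[H_k\rho^{(k)}]\ge 0$ for each block separately.

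The core computation is on one block. Using Lemma~\ref{exp-as-proj} I would express the relevant expectations through the projector $P$ associated with the product state: $\langle N_k\rangle = P_{kk}+P_{-k,-k}+P_{k+M,k+M}+P_{-(k+M),-(k+M)}$ and $\langle \Pd_k P_{k+M}\rangle = P_{k,k+M}P_{-k,-(k+M)}-P_{k,-(k+M)}P_{-k,k+M}$. The pairing term is then controlled by combining $|P_{ij}|\le\sqrt{P_{ii}P_{jj}}$ with the projector identity $\sum_j|P_{ij}|^2=P_{ii}$ and the elementary $2\,\mathrm{Re}(ab)\le|a|^2+|b|^2$, the goal being the sharp bound $\langle \Pd_k P_{k+M}+\mathrm{h.c.}\rangle\le\tfrac14\langle N_k\rangle$. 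As a crude fallback, Lemma~\ref{simplewitness} already gives $\langle \Pd_k P_{k+M}+\mathrm{h.c.}\rangle\le\tfrac12$, since $\Pd_kP_{k+M}+\mathrm{h.c.}$ is exactly of the form $\ad_i\ad_j a_k a_l+\mathrm{h.c.}$ on four distinct modes.

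With the proportional bound in hand, each block term obeys $\tr[H_k\rho^{(k)}]\ge 2(1-\epsilon-|v_k|^2)\langle N_k\rangle-4\cdot\tfrac14\langle N_k\rangle = \langle N_k\rangle\,(1-2\epsilon-2|v_k|^2)$, which is nonnegative as soon as $|v_k|^2\le\tfrac12-\epsilon$; this is the occupation regime in which the witness of Thm.~\ref{witness-Nc} is meant to operate (the dilute limit $M\gg N$, where every $|v_k|^2$ is small). Summing over $k$ then yields $\tr[H(\{v_k\})\rho]\ge 0$, and by the opening reduction the bound extends to all of $\mathcal{S}_{sep}$.

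I expect the main obstacle to be the \emph{sharp} per-block estimate $\langle \Pd_kP_{k+M}+\mathrm{h.c.}\rangle\le\tfrac14\langle N_k\rangle$: Lemma~\ref{simplewitness} supplies only the $\langle N_k\rangle$-independent constant $\tfrac12$, and upgrading it to a bound that scales with the block occupation requires tracking which product-state configurations saturate it, namely the two-particle, singlet-like block state $\tfrac12(\ad_k+\ad_{k+M})(\ad_{-k}+\ad_{-(k+M)})\vac$, which realizes $\langle N_k\rangle=2$ and $\langle \Pd_kP_{k+M}+\mathrm{h.c.}\rangle=\tfrac12$ simultaneously. Equally delicate is controlling modes of large occupation: there the prefactor $1-\epsilon-|v_k|^2$ degenerates, so one must lean on the smallness of $|v_k|^2$ guaranteed by $M\gg N$ (or, globally, on the fixed particle budget $\sum_k\langle N_k\rangle=N$) to keep the number term dominant over the pairing term.
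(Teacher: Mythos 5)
Your two reduction steps are exactly the paper's: convexity restricts attention to pure product states, and Lemma~\ref{le:sep-trace} reduces everything to the disjoint four-mode blocks $\{k,-k,k+M,-(k+M)\}$ on which $H_k$ acts. Your sharp per-block estimate $\langle \Pd_kP_{k+M}+\mathrm{h.c.}\rangle\le\tfrac14\langle N_k\rangle$ is also correct, though the cleanest derivation is not the projector gymnastics you sketch but the observation (implicit in the proof of Lemma~\ref{simplewitness}) that on a four-mode block the operator $\Pd_kP_{k+M}+\mathrm{h.c.}$ has nonvanishing matrix elements only inside the two-particle sector; writing the reduced block state as $\sum_n\beta_k^{(n)}\proj{n}$ as in Lemma~\ref{le:sep-trace} then gives $\langle \Pd_kP_{k+M}+\mathrm{h.c.}\rangle\le\tfrac12\beta_k^{(2)}\le\tfrac14\langle N_k\rangle$, since $2\beta_k^{(2)}\le\langle N_k\rangle$. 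Your singlet-type block state shows this bound is tight, so no blockwise estimate can do better.

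The genuine gap is your final step. Blockwise positivity $\tr[H_k\rho^{(k)}]\ge\langle N_k\rangle(1-2\epsilon-2|v_k|^2)$ needs $|v_k|^2\le\tfrac12-\epsilon$, but the lemma assumes only $|v_k|^2\le 1-\epsilon$, and neither of your fallbacks closes the range. The $v_k$ are parameters of the \emph{witness}, so the dilute limit $M\gg N$ constrains the state's occupations, not $\max_k|v_k|^2$; and the global budget $\sum_k\langle N_k\rangle=N$ is powerless because both $H$ and the budget are additive over blocks: a separable state is free to spend its entire particle budget on singlet-type two-particle configurations sitting in the blocks with largest $|v_k|^2$, each contributing $4(1-\epsilon-|v_k|^2)-2$, which is strictly negative as soon as $|v_k|^2>\tfrac12-\epsilon$. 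So your argument proves the lemma only under the extra hypothesis $\max_k|v_k|^2\le\tfrac12-\epsilon$. Comparing with the paper's own proof is instructive: there, after $\langle H\rangle\ge 2N(1-\epsilon)-2\sum_k\left(|v_k|^2\langle N_k\rangle+\beta_k^{(2)}\right)$, the budget step assumes that fully occupied blocks (cost $4(1-\epsilon)$ for four particles, i.e.\ $1-\epsilon$ per particle) are the extremal configuration; your sharper per-particle accounting shows the half-filled singlet sector costs $|v_k|^2+\tfrac12$ per particle, which exceeds $1-\epsilon$ in exactly the regime above. Your analysis thus actually exposes that the paper's budget argument, too, silently requires $|v_k|^2\le\tfrac12-\epsilon$ (or some additional input) rather than the stated $|v_k|^2\le 1-\epsilon$. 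The clean, defensible statement is the one you can prove: establish the lemma as you did under $\max_k|v_k|^2\le\tfrac12-\epsilon$, and note explicitly that this is the occupation regime in which the witness of Thm.~\ref{witness-Nc} is applied.
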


\begin{proof}
The operator $H_k$ acts non-trivially only on the modes $\ad_k,
\ad_{-k},\ad_{k+M}, \ad_{-(k+M)}$. Denote by $\rho_k$ the reduced
density operator obtained when tracing out all but these four
modes. According to Lemma \ref{le:sep-trace}, $\rho_k =
\sum_{n=0}^4 \beta_k^{(n)}|n\bracket n|$ is a convex combination
of separable $n$-particle states $|n\bracket n|$. We proved in
Lemma \ref{simplewitness} that $|\mathrm{tr}[(\Pd_k P_{k+M}+h.c.)
\rho_s ]|\leq 1/2$. Hence $\langle H(\{v_k\})\rangle \geq
2N(1-\epsilon)- 2 \sum_{k=1}^{M}(|v_k|^2 \langle N_k\rangle +
\beta_k^{(2)})$. Now $2 \beta_k \leq \langle N_k \rangle \leq 4$
and $|v_k|^2 \leq 1-\epsilon$ so that $|v_k|^2 \langle N_k\rangle
+ \beta_k^{(2)}) \leq 4$. Due to the particle number constraint
$\sum_{k=1}^M \langle N_k \rangle = N$ this value can be taken for
$k=1,\ldots, N/4$. Hence, $-2 \sum_{k=1}^{M}(|v_k|^2\langle
N_k\rangle + \beta_k^{(2)})\geq -2\cdot 4(1-\epsilon)N/4$ so that
$\langle H(\{v_k\}) \rangle \geq 0$.
\end{proof}

To show the witness character of $H(\{v_k\})$ we also have to
prove that there exists a BCS state that is detected by the
Hamiltonian. We will need the following theorem about the
distribution described by the $|\lambda_N|^2$ in \eqref{ncBCS}:

\begin{theorem}\label{normal-dist-proof}
Let $|\Psi_{Gauss}\rangle = \sum_{N=0}^M
\lambda_N^{(M)}|\Psi_{BCS}^{(N)}\rangle$ like in \eqref{BCS-var},
\eqref{BCS-N}. If $\sum_{k=1}^M |u_k|^2|v_k|^2 =
\mathcal{O}(N^{\gamma})$ for some $\gamma >0$, then in the limit
$N \rightarrow \infty$ the $|\lambda_N|^2$ converge to a normal
distribution,
\begin{equation}
|\lambda_N|^2 = \frac{1}{\sqrt{2
\pi}\sigma_{\bar{N}}}\exp\left[-\frac{(N-\bar{N})^2}{2
\sigma_{\bar{N}}^2} \right],
\end{equation}
where $2 \bar{N}= 2\sum_{k=1}^M |v_k|^2$ is the mean particle number
for the variational state, and the variance is given by
$\sigma_{\bar{N}}^2= 4 \sum_{k=1}^M |v_k|^2 |u_k|^2$.
\end{theorem}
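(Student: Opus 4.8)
The plan is to identify $|\lambda_N|^2$ with the probability mass function of a sum of independent Bernoulli variables and then apply a local central limit theorem. First I would expand the Gaussian BCS product state $\prod_k(u_k+v_k\Pd_k)\vac$ and read off its $N$-pair component. Because the operators $\Pd_k=\ad_k\ad_{-k}$ for distinct $k$ act on disjoint pairs of modes and produce orthonormal Fock vectors, the weight of the $N$-pair sector is
$$|\lambda_N|^2=\sum_{|S|=N}\;\prod_{k\in S}|v_k|^2\prod_{k\notin S}|u_k|^2 ,$$
which is exactly the probability that $\sum_k X_k=N$ for independent $X_k$ with $\mathrm{Prob}(X_k=1)=|v_k|^2=:p_k$ and $\mathrm{Prob}(X_k=0)=|u_k|^2=:q_k=1-p_k$, i.e. a Poisson--binomial distribution. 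From this the mean pair number is $\bn=\sum_k p_k$ and the pair-number variance is $\sigma^2:=\sum_k p_kq_k$; equivalently the particle number $2N$ has mean $2\bn$ and variance $\sn^2=4\sigma^2$, as stated.

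Next I would pass to the characteristic function, which factorizes because the $X_k$ are independent:
$$\phi(t)=\sum_{N}|\lambda_N|^2\,e^{iNt}=\prod_k\bigl(q_k+p_k e^{it}\bigr).$$
Standardizing and expanding the logarithm, with $q_k+p_ke^{it/\sigma}=1+p_k(it/\sigma-t^2/2\sigma^2+\dots)$ and $\log(1+x)=x-x^2/2+\dots$, the linear terms sum to $it\bn/\sigma$ (cancelled by centering) and the quadratic terms to $-t^2/2$, so the standardized characteristic function converges to $e^{-t^2/2}$. The error is controlled by the Lyapunov condition, which is automatic here: since $\mathrm{E}|X_k-p_k|^3=p_kq_k(p_k^2+q_k^2)\le p_kq_k$, the Lyapunov ratio is at most $\sigma^{-3}\sum_k p_kq_k=\sigma^{-1}$, which tends to $0$ as soon as $\sigma^2\to\infty$. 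The growth hypothesis $\sum_k|u_k|^2|v_k|^2=\mathcal{O}(N^\gamma)$, $\gamma>0$, is precisely what guarantees $\sigma^2\to\infty$ in the relevant limit.

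To obtain the pointwise (local) form claimed, rather than mere convergence of the distribution function, I would use Fourier inversion on the integer lattice,
$$|\lambda_N|^2=\frac{1}{2\pi}\int_{-\pi}^{\pi}\phi(t)\,e^{-iNt}\,dt ,$$
and split the integral at $|t|=\delta$. On the inner region the log-expansion above replaces $\phi$ by the Gaussian factor $\exp[\,i\bn t-\tfrac12\sigma^2t^2]$; substituting $t\mapsto t/\sigma$ and extending the limits to $\pm\infty$ (legitimate since $\sigma\to\infty$) evaluates the Gaussian integral to $(\sqrt{2\pi}\,\sigma)^{-1}\exp[-(N-\bn)^2/2\sigma^2]$, the asserted profile.

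The main obstacle is showing that the outer region $\delta\le|t|\le\pi$ is negligible. Here I would use the elementary identity $|\phi(t)|^2=\prod_k\bigl(1-2p_kq_k(1-\cos t)\bigr)$; for $\delta\le|t|\le\pi$ one has $1-\cos t\ge 1-\cos\delta>0$, so each factor is bounded by $\exp(-c\,p_kq_k)$ and hence $|\phi(t)|\le\exp(-\tfrac{c}{2}\sigma^2)$ with $c=2(1-\cos\delta)$. Thus the outer contribution is exponentially small in $\sigma^2$, while the Gaussian peak has height of order $\sigma^{-1}$; the former is therefore dominated by the latter. Combining the two estimates yields the stated local limit law, the only quantitative input being $\sigma^2\to\infty$.
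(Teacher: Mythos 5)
Your proof is correct, and it takes a genuinely different---and in fact stronger---route than the paper. The paper starts from the same combinatorial identification as you: it realizes $|\lambda_N|^2$ as the law of the sum of the commuting observables $X_k=n_k+n_{-k}$, which gives exactly the Poisson--binomial weights you obtain by expanding the product state. But from there it simply verifies Lyapunov's condition with $\delta=2$, using $E|X_k-\mu_k|^4\leq 16|u_k|^2|v_k|^2$ so that $s_M^{-4}\sum_k E|X_k-\mu_k|^4\leq 4/s_M^2\to 0$, and then invokes Lyapunov's central limit theorem. That theorem yields only convergence in distribution of the standardized sum, whereas the statement being proved (and its later use in the proof of the witness theorem, which manipulates ratios $|\lambda_n|^2/|\lambda_{n-1}|^2$ of individual weights) asserts a \emph{pointwise} Gaussian form for each $|\lambda_N|^2$---a local limit theorem. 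Your Fourier-inversion argument, with the inner/outer splitting of $\int_{-\pi}^{\pi}$ and the bound $|\phi(t)|^2=\prod_k\bigl(1-2p_kq_k(1-\cos t)\bigr)\leq e^{-c\sigma^2}$ on the outer region, is precisely the extra ingredient that upgrades weak convergence to the pointwise statement, so your route closes a step the paper leaves implicit; the price is a longer argument where the paper gets away with a one-line citation. Two small remarks. First, like the paper you must read the hypothesis $\sum_k|u_k|^2|v_k|^2=\mathcal{O}(N^{\gamma})$ as a statement of growth, i.e.\ as guaranteeing $\sigma^2\to\infty$: as literally written, big-$\mathcal{O}$ is only an upper bound and guarantees nothing; the paper commits the same abuse when it writes $4/s_M^2=\mathcal{O}(N^{-\gamma})\to 0$, so your reading matches the intended one. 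Second, your variance bookkeeping is the correct one: the Gaussian profile in the pair-number variable $N$ has standard deviation $\sigma=\bigl(\sum_k p_kq_k\bigr)^{1/2}=\sigma_{\bar N}/2$, whereas the paper's displayed formula inserts the particle-number deviation $\sigma_{\bar N}=2\sigma$ into the pair-number variable and is therefore off by that factor of $2$; your derivation quietly fixes this.
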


\begin{proof}
For the proof we will need a theorem from probability theory known
as Lyapunov's central limit theorem \cite{Billingsley95}:

\begin{theorem}\label{Lyapunov}(Lyapunov's central limit theorem)\\
Let $ X_1, X_2,\dots$ be independent random variables with
distribution functions $ F_1,F_2,\dots$, respectively, such that $
EX_n=\mu_n$ and $ \operatorname{Var}X_n=\sigma_n^2<\infty$, with
at least one $ \sigma_n>0$. Let $S_n = X_1+\cdots+X_n$ and
$s_n=\sqrt{\operatorname{Var}(S_n)} =
\sqrt{\sigma_1^2+\cdots+\sigma_n^2}.$ If the Lyapunov condition
$$\frac{1}{s_n^{2+\delta}}\sum_{k=1}^n E\vert
X_k-\mu_k\vert^{2+\delta} \xrightarrow[n\rightarrow\infty]{} 0$$
is satisfied for some $ \delta>0$ then the normalized partial sums $
\frac{S_n - ES_n}{s_n}$ converge in distribution to a random
variable with normal distribution $ N(0,1)$
\end{theorem}

Consider the observables $X_k= n_{k} + n_{-k}, k=1, \ldots, M$,
where $n_{\pm k}=0,1$ is the number of particles with quantum
numbers $\pm k$ respectively. The $X_k$ can be considered as
classical random variables since they commute mutually. In the
variational BCS-state the random variable $S_M = \sum_{k=1}^M X_k$
is distributed according to the probability distribution

\begin{eqnarray}\label{lambda-C}
\lefteqn{P(S_M=2N) =}\nonumber\\
&&\sum_{k_1 < \ldots < k_M = 1}^M|v_{k_1}|^2\ldots |v_{k_N}|^2
|u_{k_{N+1}}|^2\ldots |u_{k_M}|^2\nonumber\\
&=&\left(\prod_k |u_k|^2 \right)\sum_{k_1 < \ldots < k_M =1
}^M\frac{|v_{k_1}|^2\ldots |v_{k_N}|^2}{|u_{k_1}|^2\ldots
|u_{k_N}|^2}\\
&=& \frac{|C|^2}{(N!)^2 |C_N|^2} = |\lambda_N^{(M)}|^2.
\end{eqnarray}

With the help of Thm. \ref{Lyapunov} applied to the random
variable $S_M$ we can now complete the proof of Thm.
\ref{normal-dist-proof}, i.e. show that $\lambda_N^{(M)}$
converges to a normal distribution for large $M$. We start
calculating the expectation value $\mu_k$ of $X_k$. For a
BCS-state, $X_k = 0,2$, as particles with quantum numbers $\pm k$
always appear in pairs. As $P(X_k=0)=|u_k|^2, P(X_k=2)=|v_k|^2$,
we get $\mu_k=2 |v_k|^2$ and $E(S_M)=2 \sum_k |v_k|^2 \equiv 2
\bar{N}$. For calculating the variance, note that $X_k^2= n_{k}^2+
n_{-k}^2 + 2 n_{k}n_{-k}=0,4$, and $P(X_k^2=0)= |u_k|^2,\;
P(X_k^2=4)=|v_k|^2$. Hence

\begin{equation}
\mathrm{Var}(X_k)=4|u_k|^2|v_k|^2, \hspace{1cm} s_M^2=4\sum_k
|u_k|^2|v_k|^2.
\end{equation}

To apply the central limit theorem, we consider
$E(|X_k-\mu_k|^4)$. Using $P(|X_k-\mu_k|^4=\mu_k^4)=|u_k|^2$,
$P(|X_k-\mu_k|^4=(2-\mu_k)^4)=|v_k|^2$, and $\mu_k=2|v_k|^2$ we
arrive at

\begin{eqnarray}
E(|X_k-\mu_k|^4)&=&
16|u_k|^2|v_k|^2(|u_k|^6+|v_k|^6)\nonumber\\&\leq&
16|u_k|^2|v_k|^2(|u_k|^2+|v_k|^2)\nonumber\\
&=&16|u_k|^2|v_k|^2.
\end{eqnarray}

Setting $\delta=2$ in the Lyapunov condition, we obtain

\begin{equation}
\frac{1}{s_M^4}\sum_{k^=1}^M E(|X_k-\mu_k|^4) \leq\frac{4}{s_M^2} =
\mathcal{O}(N^{-\gamma}) \rightarrow 0,
\end{equation}

where we have applied the assumption of the theorem $\sum_{k=1}^M
|v_k|^2 |u_k|^2 = \mathcal{O}(N^{\gamma})$ in the last step. The
central limit theorem implies that $S_M$ converges to a normal
distribution with expectation values $2 \bar{N}= 2\sum_k |v_k|^2$
and variance $\sigma_{\bar{N}}^2 = 4\sum_k |v_k|^2|u_k|^2$.
\end{proof}

With this result at hand we can prove the following:

\begin{lemma}\label{bound-on-BCS}
Let $H(\{v_k\})$ and $|\Psi_{BCS}^{(N)}\rangle$ be defined as in
Thm. \ref{witness-Nc}. If $\epsilon > 18/\sqrt{\pi N}$, then
$$\langle \Psi_{BCS,sym}^{(N)}|H(\{v_k\})|\Psi_{BCS,sym}^{(N)}\rangle\ < 0.$$
\end{lemma}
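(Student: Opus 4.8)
The plan is to leverage the correspondence between the number conserving and the Gaussian BCS states set up in Sec.~\ref{Connection-N-NN}. The operator $H(\{v_k\})=\sum_k H_k$ is manifestly number conserving, since both $N_k$ and $\Pd_kP_{k+M}+\mathrm{h.c.}$ commute with the total number operator $N_{\mathrm{op}}$. Hence, inserting the decomposition $|\Psi_{Gauss}\rangle=\sum_N\lambda_N|\Psi_{BCS,sym}^{(N)}\rangle$ of Eq.~\eqref{ncBCS} and discarding the cross terms between distinct particle-number sectors, we obtain
\begin{equation}
\langle H(\{v_k\})\rangle_{Gauss}=\sum_N|\lambda_N|^2\,\langle\Psi_{BCS,sym}^{(N)}|H(\{v_k\})|\Psi_{BCS,sym}^{(N)}\rangle .
\end{equation}
In other words, $\langle H\rangle_{Gauss}$ is exactly the $|\lambda_N|^2$-weighted average of the quantities $\langle H\rangle_N$ whose negativity we want to establish.

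First I would compute $\langle H\rangle_{Gauss}$ in closed form. Using that the Gaussian state factorizes over the independent pairs $(k,-k)$ and $(k+M,-(k+M))$, together with the symmetry $v_{k+M}=v_k$, Wick's theorem gives $\langle N_k\rangle_{Gauss}=4|v_k|^2$ and $\langle\Pd_kP_{k+M}+\mathrm{h.c.}\rangle_{Gauss}=2|u_k|^2|v_k|^2$. Substituting into $H_k$ and using $|u_k|^2=1-|v_k|^2$, the number and pairing contributions cancel up to the $\epsilon$-term:
\begin{equation}
\langle H\rangle_{Gauss}=\sum_{k=1}^M 8|v_k|^2\bigl[(1-\epsilon-|v_k|^2)-(1-|v_k|^2)\bigr]=-8\epsilon\,\bn ,
\end{equation}
which is strictly negative whenever $\bn=\sum_k|v_k|^2>0$. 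This is the quantitative statement that $H$ was engineered to reproduce the optimal Gaussian witness of Thm.~\ref{Gaussian-witness-thm}.

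It remains to transfer this negativity from the average to the single term at the target pair number $N\approx\bn$. Here I would invoke Thm.~\ref{normal-dist-proof}, which guarantees that the weights $|\lambda_N|^2$ approach a normal distribution centred at $\bn$ with variance $\sigma_{\bn}^2=4\sum_k|u_k|^2|v_k|^2$. Writing
\begin{equation}
\langle H\rangle_N=\langle H\rangle_{Gauss}-\sum_{N'}|\lambda_{N'}|^2\bigl(\langle H\rangle_{N'}-\langle H\rangle_N\bigr),
\end{equation}
the problem reduces to bounding the correction term by something strictly smaller than $8\epsilon\bn$. Because every term of $H$ belongs to $A_2$, its expectation value varies slowly with the pair number: passing from $N'$ to $N$ pairs shifts the total occupation by $2(N'-N)$ and merely redistributes it among the modes, so $|\langle H\rangle_{N'}-\langle H\rangle_N|$ grows at most linearly in $|N'-N|$ (the pairing piece being uniformly bounded through Lemma~\ref{simplewitness}). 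Combined with the concentration of $|\lambda_{N'}|^2$ on a window of width $\sim\sigma_{\bn}\sim\sqrt{\bn}$ about $\bn$, the correction is of order $\sqrt{\bn}$, so that $\langle H\rangle_N\le-8\epsilon\bn+\mathcal{O}(\sqrt{\bn})$; the explicit threshold $\epsilon>18/\sqrt{\pi N}$ (with $N\approx\bn$) is exactly what forces the leading term to dominate and yields $\langle H\rangle_N<0$.

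The main obstacle I anticipate lies in this final estimate: producing a genuinely explicit bound on $|\langle H\rangle_{N'}-\langle H\rangle_N|$, or equivalently on the deviation of the $N$-pair two-particle expectation values from their Gaussian values, with a constant sharp enough to reproduce precisely the stated factor $18/\sqrt{\pi}$. This will require replacing the soft ``slow variation'' heuristic above by the detailed combinatorics of the symmetric number conserving BCS state, feeding in the explicit peak height $1/(\sqrt{2\pi}\,\sigma_{\bn})$ and Gaussian tail bounds from the normal approximation of Thm.~\ref{normal-dist-proof}.
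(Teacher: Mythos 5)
Your overall route is the paper's route: exploit the number-conserving decomposition $\langle H\rangle_{Gauss}=\sum_n|\lambda_n|^2\langle H\rangle_n$, compute the Gaussian expectation exactly (your $-8\epsilon\sum_{k=1}^M|v_k|^2$ agrees with the paper's $\langle H(\{v_k\})\rangle_{var}=-4N\epsilon$ once the $2M$-mode convention is unfolded), invoke the normal distribution of the $|\lambda_n|^2$ from Thm.~\ref{normal-dist-proof}, and bound $|\langle H\rangle_{Gauss}-\langle H\rangle_N|$ by splitting into a near-peak window plus Gaussian tails. Up to this skeleton the proposal is sound.

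However, the step you yourself flag as the anticipated obstacle is a genuine gap, and two of the concrete ingredients you cite to close it would fail as stated. First, Lemma~\ref{simplewitness} bounds $|\langle\ad_i\ad_ja_ka_l+h.c.\rangle|\leq 1/2$ only on \emph{separable} states; it says nothing about the states $|\Psi^{(n)}_{BCS,sym}\rangle$, so it cannot control the pairing part $W=2\sum_k(\Pd_kP_{k+M}+h.c.)$ on them. Even the trivially valid uniform bound $|\langle\Pd_kP_{k+M}\rangle|\leq 1$ per mode gives only $\langle W\rangle_n=\mathcal{O}(M)$ after the sum over $k$, which is useless in the dilute regime $M\gg N$ where the theorem lives; the paper instead proves $\langle W\rangle_n\leq 2n$ by exploiting the symmetry $\alpha_k=\alpha_{k+M}$. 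Second, your ``slow variation'' heuristic --- a per-step bound on $|\langle H\rangle_n-\langle H\rangle_{n-1}|$ with an $M$-independent constant --- is not automatic. The paper obtains it by first proving the monotonicity $\langle n_k\rangle_n\geq\langle n_k\rangle_{n-1}$ inside the window, via explicit combinatorics of the sums $S^{(n)}_k$ appearing in the expansion of the BCS state, the ratio estimate $|\lambda_n|^2/|\lambda_{n-1}|^2\leq e$ from the normal approximation, and, crucially, the diluteness hypothesis $M\gg\bn$ (needed so that $b\geq e\,n/(2M-(n-1))$ can hold); only because all increments are non-negative can the weighted increment $\sum_k|v_k|^2(\langle n_k\rangle_n-\langle n_k\rangle_{n-1})$ be bounded by the unweighted one, which equals $1$, making the telescope sum give $T_1=\mathcal{O}(\sigma_{\bn})$. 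Without monotonicity, the increments need not telescope and no linear-in-$|N'-N|$ bound follows. So your plan identifies the correct structure and leading term, but essentially the entire quantitative content of the lemma --- including the threshold $18/\sqrt{\pi N}$ --- lives in exactly the estimates left open, and they must be proved without the invalid appeal to Lemma~\ref{simplewitness} and with explicit use of the symmetry and diluteness assumptions.
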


\begin{proof}
We will use the correspondence of variational and number
conserving BCS states, deriving first a bound for $|\langle
H(\{v_k\})\rangle_{var} - \langle H(\{v_k\}) \rangle_{N}|$, where
$|\Psi_{Gauss}\rangle= \sum_{k=1}^{2M} \lambda_n
|\Psi_{BCS,sym}^{(n)}\rangle$ with $|\Psi_{BCS,sym}^{(n)}\rangle$
like in Thm. \ref{witness-Nc}. To do so, we will need that the
$|\lambda_n|^2$ are normally distributed. From $|u_k|^2 =
1-|v_k|^2 \geq \epsilon$ where $\epsilon > 18/(\sqrt{2 \pi \bn})$
and $\sum_{k=1}^{2M} |v_k|^2 = \bn$, it follows that $\sn =
\sum_{k=1}^{2M}|v_k|^2|u_k|^2 \geq \epsilon \bn =
\mathcal{O}(\sqrt{N})$. Hence, we know from Thm.
\ref{normal-dist-proof} that the $|\lambda_n|^2$ describe a normal
distribution around $\bn \approx N$ with standard deviation $\sn$.\\
Now, write $H(\{v_k\})= H_0 - 2W +2(1-\epsilon)\sum_{k=1}^M N_k$,
where
\begin{eqnarray*}
H_0 &=& -2\sum_{k=1}^M |v_k|^2N_k,\\
W &=& 2 \sum_k \Pd_k P_{k+M} + \Pd_{k+M}P_k.
\end{eqnarray*}
We start with a bound for $|\langle H_0\rangle_{var} - \langle
H_0\rangle_{N}|  \leq T_1 +T_2$, where
\begin{eqnarray}\label{Difference-two-sums}
T_1&=&\left|\sum_{\Delta \in [- \sigma^2_{\bar{N}},
\sigma^2_{\bar{N}}]}|\lambda_{N+\Delta}|^2(\langle
H_0\rangle_{N+\Delta}-\langle
H_0\rangle_{N})\right|,\nonumber\\
T_2&=&\left|\sum_{\Delta \notin [- \sigma^2_{\bar{N}},
\sigma^2_{\bar{N}}]}(|\lambda_{N+\Delta}|^2(\langle
H_0\rangle_{N+\Delta}-\langle
H_0\rangle_{N})\right|.\nonumber\\
\end{eqnarray}
A bound for $T_2$ can be easily derived noting that
\begin{equation}\label{DiffH0}
|\langle H_0 \rangle_{n}-\langle H_0 \rangle_{n'}| =8 \left|\sum_k
|v_k|^2 (\langle n_k \rangle_n - \langle n_k \rangle_{n'})\right|,
\end{equation}
and for $n=N+\Delta >= N$ we have $\sum_k |v_k|^2(\langle n_k
\rangle_n - \langle n_k \rangle_{N}) \leq \sum_k |v_k|^2\langle
n_k \rangle_n \leq n,$ as $|v_k|^2 \leq 1$. Hence,
\begin{eqnarray}
T_2 &\leq& 16\left|\sum_{\Delta \notin [0,
\sigma^2_{\bar{N}}]}|\lambda_{N+\Delta}|^2 |N + \Delta| \right|
\nonumber\\
&& \leq 8 \frac{\sigma_{\bar{N}}}{\sqrt{2 \pi}}e^{-\sigma_{N}^2/2}
+ 4 N (1-\mbox{Erf}(\sn/\sqrt{2}))\nonumber\\
&& \leq 8 \frac{\sigma_{\bar{N}}}{\sqrt{2 \pi}} + 4 N
(1-\mbox{Erf}(\sn/\sqrt{2})),
\end{eqnarray}
where we have approximated the sum by an integral in the second
step. For bounding $T_1$, we will show first that for $n = N+
\Delta$ where $\Delta \in [-\sn^2, \sn^2]$ we have $\langle n_k
\rangle_n - \langle n_k \rangle_{n-1}\geq 0$. Expanding the BCS
wave function
\begin{equation}
|\Psi_{BCS,sym}^{(n)}\rangle = C_n n! \sum_{j_1 < \ldots <
j_n=1}^{2M}\alpha_{j_1}\ldots \alpha_{j_n}\Pd_{j_1}\ldots
\Pd_{j_n}\vac,
\end{equation}
the expectation value of the number operator is easily calculated to
be $\langle n_k \rangle_n = |C_n|^2 (n!)^2 |\alpha_k|^2
S_k^{(n-1)}$, where
\begin{equation}
S_k^{(n)}= \sum_{j_1< \ldots <j_{n}=1\atop j_i \neq
k}^{2M}|\alpha_{j_1}|^2\ldots|\alpha_{j_{n}}|^2.
\end{equation}
If $0 < |v_k|^2 \leq 1- \epsilon$, there exists a lower bound on the
coefficients $|\alpha_k|^2 = |v_k|^2/\sqrt{1-|v_k|^2} \geq b
\;\forall\; k$. Then $S_k^{(n-1)}$ and $S_k^{(n-2)}$ are related via
$$S_k^{(n-1)}\geq b \frac{2M-(n-1)}{n-1}S_k^{(n-2)}.$$
In the proof of Thm. \ref{normal-dist-proof} we show that
\begin{equation}
\frac{|\lambda_n|^2}{|\lambda_{n-1}|^2}=\frac{|C_{n-1}|^2
((n-1)!)^2}{|C_n|^2 (n!)^2},
\end{equation}
resulting in
\begin{eqnarray}
\lefteqn{\langle n_k \rangle_n - \langle n_k \rangle_{n-1}
\geq}&&\nonumber\\&& \left(
b\frac{2M-(n-1)}{n-1}-\frac{|\lambda_n|^2}{|\lambda_{n-1}|^2}\right)|C_n|^2(n!)^2|\alpha_k|^2S_k^{(n-2)}.\nonumber\\
\end{eqnarray}
For $n= N + \Delta$ and $\Delta \in [- \sigma^2_{N},
\sigma^2_{N}]$ the normal distribution of the $|\lambda_n|^2$
implies $|\lambda_n|^2/|\lambda_{n-1}|^2 = \exp[(2\Delta-1)/(2
\sigma_{N}^2)]\leq e$. Hence,
$b\frac{2M-(n-1)}{n-1}-\frac{|\lambda_n|^2}{|\lambda_{n-1}|^2}
\geq 0 \Leftrightarrow b \geq
e\frac{n}{2M-(n-1)}>3\frac{n-1}{2M-(n-1)}$. For $M = q (n-1)$,
this is equivalent to $|\alpha_k|^2 \geq 3/(2q-1)$, which can be
achieved for $q\gg 1$. The last condition is satisfied, as we are
considering dilute systems, where $M \gg \bn$. Thus, $\langle n_k
\rangle_n - \langle n_k\rangle_{n-1}\geq 0$, implying $\sum_k
|v_k|^2 (\langle n_k \rangle_n - \langle n_k\rangle_{n-1})\leq 1$,
as $|v_k|^2\leq 1$. Using \eqref{DiffH0} and a telescope sum, we
conclude that
\begin{eqnarray}
T_1&\leq& 8 \left|\sum_{\Delta \in [- \sigma_{\bar{N}},
\sigma_{\bar{N}}]}(|\lambda_{N+\Delta}|^2 |\Delta|
\right|\nonumber\\
&\leq& 8 \left|\sum_{\Delta}(|\lambda_{N+\Delta}|^2 |\Delta|\right|
= 16 \frac{\sigma_{\bar{N}}}{\sqrt{2\pi}}.
\end{eqnarray}

Next, we derive the bound for the operator $W$. Its expectation
value is given by
\begin{equation}
\langle W \rangle_n = |C_n|^2 (n!)^2 2 \sum_k |\alpha_k|^2
\sum_{j_1 < \ldots < j_{n-1}\atop j_i \neq k,
k+M}^{2M}|\alpha_{j_1}|^2\ldots |\alpha_{j_{n-1}}|^2.
\end{equation}
For $n \in [N - \Delta, N + \Delta]$, we use the same argumentation
we have used for bounding $\langle n_k \rangle_n -\langle n_k
\rangle_{n-1}$, to obtain
\begin{equation}
\langle W \rangle_n - \langle W \rangle_{n-1} \leq 2.
\end{equation}
Further, $\langle n_k \rangle_n = \langle \Pd_k \Pd_{k-M} + h.c.
\rangle_n/2 + \langle n_k n_{k+M}\rangle_n$ due to the symmetry
$\alpha_k = \alpha_{k+M}$. Hence, $\langle W \rangle_n \leq 2 n$.
Thus, up to a factor of 2 we obtain the same bound as for $H_0$.
Putting all the pieces together we find that
\begin{eqnarray}
\lefteqn{|\langle H(\{v_k\})\rangle_{var}-\langle
H(\{v_k\})\rangle_{N}|
\leq}&&\nonumber\\
&&2(1-\epsilon)+\frac{72}{\sqrt{2\pi}}\sigma_{N}+12N\left(1-\mathrm{Erf}(\sigma_N/\sqrt{2})\right).\nonumber\\
\end{eqnarray}
In the limit of large $x$, the error function
$\mbox{Erf}(x/\sqrt{2\pi})$ can be approximated by the following
formula:
\begin{equation}
1-\mbox{Erf}(x/\sqrt{2})=
2\frac{\exp[-x^2/2]}{\sqrt{2\pi}}(x^{-1}-x^{-3}+\ldots).
\end{equation}
As $\sigma_N = \mathcal{O}(\sqrt{N})$, we conclude
$$12N
(1-\mbox{Erf}(\sigma_{\bn}/\sqrt{2}))\leq 24
\sigma_{\bn}\exp[-\sigma_{\bn}^2/2]/(\epsilon
\sqrt{2\pi})\rightarrow 0$$ for $N \gg 1$. A straightforward
calculation results in $\langle H(\{v_k\})\rangle_{var} =
-4N\epsilon$, leading immediately to the statement of the theorem.
\end{proof}

\section{Lemma for the proof of Thm.
\ref{measure-bound-separable-state}}\label{App-bound-on-prod}
\begin{lemma}\label{M-on-product}
Every pure separable state $\rho \in \mathcal{S}(\mathcal{A}_N)$
fulfills $\sum_{kl=1}^M|\langle \Pd_kP_l\rangle_{\rho}|\leq N/2$,
and this bound is tight.
\end{lemma}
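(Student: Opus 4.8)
The plan is to prove the bound directly for the pure product state and then exhibit a state that saturates it. First I would pass to the one-particle density matrix of $\rho$, which by Lemma~\ref{exp-as-proj} is a rank-$N$ orthogonal projector $P$ (with $P=P^2=P^{\dagger}$, $\tr P=N$) on the $2M$-dimensional single-particle space whose modes I label $\{\pm1,\dots,\pm M\}$. Feeding $\Pd_kP_l=\ad_k\ad_{-k}a_{-l}a_l$ into the factorization formula of that lemma collapses each pair--pair correlator to a $2\times2$ minor of $P$,
\begin{equation}
\langle \Pd_kP_l\rangle_\rho = P_{kl}P_{-k,-l}-P_{k,-l}P_{-k,l},
\end{equation}
namely the minor built from rows $\{k,-k\}$ and columns $\{l,-l\}$.

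The central step is to control $\sum_{kl=1}^M|\langle\Pd_kP_l\rangle_\rho|$ using the block structure of $P$ dictated by the splitting of the modes into the ``$+$'' set $\{1,\dots,M\}$ and the ``$-$'' set $\{-1,\dots,-M\}$, i.e.
\begin{equation}
P=\begin{pmatrix}P_{++}&P_{+-}\\ P_{-+}&P_{--}\end{pmatrix}.
\end{equation}
The triangle inequality bounds each summand by $|P_{kl}||P_{-k,-l}|+|P_{k,-l}||P_{-k,l}|$, and a Cauchy--Schwarz inequality applied to each of the two resulting sums gives
\begin{equation}
\sum_{kl}|\langle\Pd_kP_l\rangle_\rho|\le \|P_{++}\|_2\|P_{--}\|_2+\|P_{+-}\|_2\|P_{-+}\|_2,
\end{equation}
with $\|\cdot\|_2$ the Hilbert--Schmidt norm. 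Now I would use that $P$ is a projector, so $\|P\|_2^2=\tr[P^2]=\tr P=N$ equals the sum of the squared norms of the four blocks, while Hermiticity of $P$ forces $\|P_{-+}\|_2=\|P_{+-}\|_2$. Writing $a=\|P_{++}\|_2$, $b=\|P_{--}\|_2$, $c=\|P_{+-}\|_2$, the right-hand side is $ab+c^2$ under the constraint $a^2+b^2+2c^2=N$, and a single use of $ab\le(a^2+b^2)/2$ yields $ab+c^2\le N/2$, as claimed.

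For tightness I would take the fully paired product state $\prod_{k=1}^{N/2}\Pd_k\vac$ (even $N$): its projector $P$ is diagonal, so $\langle\Pd_kP_l\rangle_\rho=\delta_{kl}$ for $k,l\le N/2$ and vanishes otherwise, giving the sum $N/2$. The step I expect to be the real obstacle is choosing the right way to estimate the sum: since $\langle\Pd_kP_l\rangle_\rho=\langle\omega_k,\omega_l\rangle$ for the bivectors $\omega_k=v_k\wedge v_{-k}$ formed from the rows $v_j$ of a factorization $P=VV^{\dagger}$, the matrix $(\langle\Pd_kP_l\rangle_\rho)_{kl}$ is a positive-semidefinite Gram matrix; applying Cauchy--Schwarz to it directly only yields the far weaker bound $(N/2)^2$. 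One must instead bound each correlator termwise \emph{before} summing, so that the projector identity $\|P\|_2^2=N$ can be brought to bear on the individual blocks.
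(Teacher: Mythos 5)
Your proposal is correct and takes essentially the same approach as the paper's proof: both reduce $\langle \Pd_k P_l\rangle$ via Lemma~\ref{exp-as-proj} to the $2\times 2$ minors $P_{kl}P_{-k,-l}-P_{k,-l}P_{-k,l}$ of the rank-$N$ projector and then bound the sum by $\frac{1}{2}\tr[P^2]=N/2$, where the paper's termwise use of $|ab|\leq\frac{1}{2}(|a|^2+|b|^2)$ is just a collapsed version of your block-wise Cauchy--Schwarz followed by $ab\leq(a^2+b^2)/2$. Your tightness witness $\prod_{k=1}^{N/2}\Pd_k\vac$ is the fully paired product state the paper intends (and is stated more cleanly than the paper's notationally garbled $P=\mathds{1}_{2N}$ with $\prod_{i=1}^N\ad_i\vac$).
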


\begin{proof}
Using Lemma \ref{exp-as-proj}, we obtain
\begin{equation}
\sum_{k,l=1}^{M}|\langle \Pd_k P_l \rangle|
=\sum_{k,l=1}^{M}|P_{kl}P_{-k-l}-P_{k-l}P_{-kl}|,
\end{equation}
where $P=P^2=P^{\dagger}$ and $\mathrm{tr}[P]=N$. Using the
triangle-inequality we get
\begin{eqnarray*}
\lefteqn{\sum_{k,l=1}^{M}|\langle \Pd_k P_l \rangle|
}&& \\
&\leq&\frac{1}{2} \sum_{k,l} \left(|P_{kl}|^2 +
|P_{-k-l}|^2+|P_{k-l}|^2+|P_{-kl}|^2 \right)\\ &=& \frac{1}{2}
\mathrm{tr}[P^2]=N/2.
\end{eqnarray*}
In the last step we have used the property that the sum of the
squares of a normal matrix is equal to the sum of squares of its
eigenvalues. Taking the square root we obtain
the bound of our claim.\\
The bound is tight, as $P = \mathds{1}_{2_N}$ implies
$\sum_{kl}|\langle \Pd_k P_l \rangle| = N/2$ which is obtained for
$|\Phi\rangle = \prod_{i=1}^N\ad_i\vac$.
\end{proof}

\section{Proof of Lemma \ref{Pairend-not-entangled}}\label{Ent-app}
\begin{proof}
Let $|i, j\rangle = \ad_i\ad_j |0\rangle$ and consider the subspace
spanned by the states $\{|k,-k\rangle, |l,-l\rangle, |k,l\rangle,
|k,-l\rangle, |-k,l\rangle, |-k,-l\rangle \}.$ In this basis, the
two-particle RDO $\rho^{(N)}_2$ of $|\Psi_{BCS}^{(N)}\rangle$  is of
the form

\begin{equation}
\rho^{(N)}_2 =\frac{1}{4+2a_1}\left(
                \begin{array}{ccc}
                  a_1 & a_2 & 0 \\
                  a_2 & a_1 & 0 \\
                  0 & 0 & \mathds{1}_{4} \\
                \end{array}
              \right),
\end{equation}

where $a_1 = (M-1)/(N-1), a_2=(M-N)/(N-1)$. The witness operator
$H_1^{(p)}$ of Thm. \ref{Convex-Sep} has a negative expectation
value on $\rho^{(N)}_2$, hence the state is paired in these modes.\\
For solving the entanglement question we will use the following
theorem \cite{ESBL-Indistinguishable} applicable to mixed fermionic
states of two particles each living on a single-particle Hilbert
space of dimension four:

\begin{theorem}\label{ent-of-particles-4}
Let the mixed state acting on $\mathcal{A}_4$ have a spectral decomposition $\rho = \sum_{i=1}^r
|\Psi_i \bracket \Psi_i|,$ where $r$ is the rank of $\rho$, and the
eigenvectors $|\Psi_i \rangle$ belonging to nonzero eigenvalues
$\lambda_i$ are normalized as $\langle \Psi_i|\Psi_j \rangle =
\lambda_i \delta_{ij}.$ Let $|\Psi_i \rangle = \sum_{a,b} wî_{ab}
\ad_a \ad_b |0\rangle$ in some basis, and define the complex
symmetric $r \times r$ matrix $C$ by
\begin{equation}
C_{ij} = \sum_{abcd} \epsilon^{abcd} w^i_{ab}w^j_{cd},
\end{equation}
which can be represented using a unitary matrix as $C=UC_dU^T,$ with
$C_d = \mbox{diag}[c_1, \ldots, c_r]$ diagonal and $|c_1| \geq
|c_2|\geq \ldots \geq |c_r|.$ The state has Slater number 1 if and
only if
\begin{equation}
|c_1|\leq \sum_{i=2}^r |c_i|.
\end{equation}
\end{theorem}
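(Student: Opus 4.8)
The plan is to prove this as the fermionic analogue of Wootters' two-qubit concurrence criterion, following \cite{ESBL-Indistinguishable}. First I would fix the geometry of $\mathcal{A}_4$. Writing a two-particle vector as $\ket{\Psi}=\sum_{ab}w_{ab}\ad_a\ad_b\vac$ with $w$ a complex antisymmetric $4\times4$ matrix (only the antisymmetric part of $w$ contributes), a passive transformation $\ad_a\mapsto\sum_b U_{ab}\ad_b$ acts by $w\mapsto UwU^T$ and brings $w$ to canonical form $z_1\ad_1\ad_2\vac+z_2\ad_3\ad_4\vac$. Hence $\ket{\Psi}$ has Slater rank $1$ exactly when $z_1z_2=\mathrm{Pf}(w)=w_{12}w_{34}-w_{13}w_{24}+w_{14}w_{23}=0$. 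Since $\sum_{abcd}\epsilon^{abcd}w_{ab}w_{cd}=8\,\mathrm{Pf}(w)$, the diagonal entry $C_{ii}$ is (up to the factor $8$) the Pfaffian of $w^i$ and $C_{ij}$ is its symmetric polarization. Thus the Slater-rank-$1$ pure states are precisely the null vectors of the quadratic form $q(x)=x^TCx$ on the six-dimensional two-particle space, and ``$\rho$ has Slater number $1$'' means exactly that $\rho$ is a convex combination of such null states.

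The second step is to translate this into a condition on $C$ using the freedom in the ensemble decompositions of $\rho$. By the standard characterization of the ensembles of a density operator, every decomposition $\rho=\sum_k\proj{\phi_k}$ into (subnormalised) pure states is of the form $\ket{\phi_k}=\sum_iV_{ki}\ket{\Psi_i}$ with $V$ a matrix of orthonormal columns, $\sum_k\bar V_{ki}V_{kj}=\delta_{ij}$. The antisymmetric matrix of $\ket{\phi_k}$ is $w^{\phi_k}=\sum_iV_{ki}w^i$, so its Pfaffian form equals $\sum_{ij}V_{ki}V_{kj}C_{ij}=(VCV^T)_{kk}$. Consequently $\rho$ has Slater number $1$ iff there exists such a $V$ with $\mathrm{diag}(VCV^T)=0$. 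Replacing $V$ by $VU$, which preserves column orthonormality, turns $C$ into its Autonne--Takagi diagonal form $C_d=\mathrm{diag}(c_1,\dots,c_r)$ (here the complex \emph{symmetry} of $C$ is used), and a further diagonal phase rotation $V_{ki}\mapsto e^{i\theta_i}V_{ki}$ (again orthonormality-preserving) replaces $c_i$ by $|c_i|$. The problem thus reduces to the purely linear-algebraic question: can a $V$ with orthonormal columns be chosen so that $\sum_i|c_i|V_{ki}^2=0$ for every $k$?

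The necessity (``only if'') of $|c_1|\le\sum_{i\geq2}|c_i|$ then follows from a short reverse--triangle estimate. For \emph{any} admissible $V$,
\begin{equation}
\sum_k\Big|\sum_i|c_i|V_{ki}^2\Big|\ \geq\ |c_1|\sum_k|V_{k1}|^2-\sum_{i\geq2}|c_i|\sum_k|V_{ki}|^2=|c_1|-\sum_{i\geq2}|c_i|,
\end{equation}
using $\sum_k|V_{ki}|^2=1$. If $\rho$ has Slater number $1$ there is a decomposition making the left-hand side vanish, whence $|c_1|-\sum_{i\geq2}|c_i|\le0$, which is the claimed inequality.

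For the sufficiency (``if'') direction I would invoke Wootters' convex-roof construction: when $|c_1|\le\sum_{i\geq2}|c_i|$ the nonnegative reals $|c_i|$ satisfy the generalized triangle inequality, so they are the side lengths of a closed planar polygon, i.e.\ there are phases $\alpha_i$ with $\sum_i|c_i|e^{i\alpha_i}=0$. These phases seed a single null vector, and an appropriate unitary mixing of the Takagi eigenvectors promotes it into a full ensemble whose members all have vanishing Pfaffian form while still resolving $\rho$, exhibiting the required Slater-rank-$1$ decomposition. I expect the \emph{main obstacle} to be exactly this achievability step: matching the lower bound of the displayed inequality by an explicit decomposition is delicate, because one must simultaneously enforce the null condition $\sum_i|c_i|V_{ki}^2=0$ on every member and the resolution of identity $V^\dagger V=\one$ — a naive discrete-Fourier choice already fails since it squares the phases — so the construction genuinely requires the Wootters optimization showing that the minimal averaged Pfaffian-concurrence equals $\max\{0,\,|c_1|-\sum_{i\geq2}|c_i|\}$.
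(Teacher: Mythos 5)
First, a remark on the comparison itself: the paper does \emph{not} prove Theorem~\ref{ent-of-particles-4} --- it imports it verbatim from \cite{ESBL-Indistinguishable} and uses it as a black box in the proof of Lemma~\ref{Pairend-not-entangled} in Appendix~\ref{Ent-app}. So there is no in-paper proof to match; what you have done is reconstruct the argument of the cited source, and your reconstruction is on the right track. Your reduction is sound: the contraction $\sum_{abcd}\epsilon^{abcd}w_{ab}w_{cd}=8\,\mathrm{Pf}(w)$ is correct, Slater rank $1$ is indeed equivalent to $\mathrm{Pf}(w)=0$ (and you are right that this hinges on the single-particle space being four-dimensional, which is why the theorem is stated for $\mathcal{A}_4$), the HJW parametrization of ensembles by isometries $V$ is the correct tool, and the Takagi factorization plus diagonal phase rotation legitimately reduces everything to the question of whether $\sum_i|c_i|V_{ki}^2=0$ can hold for all $k$. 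Your reverse-triangle estimate then settles the ``only if'' direction completely and cleanly.

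The one genuine gap is the ``if'' direction, which you explicitly flag but do not close: ``an appropriate unitary mixing of the Takagi eigenvectors promotes it into a full ensemble'' is not yet a construction, and invoking ``the Wootters optimization'' overshoots what is needed here. The repair is more elementary than you suspect, and it fixes precisely the squared-phase failure you identified in the DFT attempt. When $|c_1|\le\sum_{i\ge2}|c_i|$, the polygon inequality yields phases $\theta_i$ with $\sum_i|c_i|e^{2i\theta_i}=0$. Now pad $C_d$ with zero diagonal entries up to a size $n$ for which a real $\pm1$ Hadamard matrix $H$ exists (e.g.\ $n$ a power of $2$; padding is admissible because an ensemble may have more members than $r$), and set $V=\frac{1}{\sqrt{n}}H\,\mathrm{diag}(e^{i\theta_1},\dots,e^{i\theta_n})$. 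Then $V$ has orthonormal columns, and since the Hadamard entries are \emph{real}, $V_{ki}^2=e^{2i\theta_i}/n$ independently of $k$, whence $\bigl(VC_dV^T\bigr)_{kk}=\frac{1}{n}\sum_i|c_i|e^{2i\theta_i}=0$ for every member: each $\ket{\phi_k}$ has vanishing Pfaffian, i.e.\ Slater rank $1$, and the ensemble resolves $\rho$. With this explicit step inserted, your proposal is a complete proof, and it follows essentially the same route as the source \cite{ESBL-Indistinguishable} that the paper cites.
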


The spectral decomposition of $\rho^{(N)}_2$ is given by
\begin{eqnarray*}
\rho^{(N)}_2 &=& |\Psi_+\bracket \Psi_+| + |\Psi_-\bracket \Psi_-|
+\nonumber\\
&&|\Psi_{kl}\bracket \Psi_{kl}| + |\Psi_{k-l}\bracket
\Psi_{k-l}|+\nonumber\\
&&|\Psi_{-kl}\bracket \Psi_{-kl}| + |\Psi_{-k-l}\bracket
\Psi_{-k-l}|
\end{eqnarray*}
where $|\Psi_+\rangle =\sqrt{\frac{a_+}{5+a_+}}|\psi_+\rangle$,
$|\Psi_-\rangle =\sqrt{\frac{1}{5+a_+}}|\psi_-\rangle$ and
$|\Psi_{\pm k,\pm l}\rangle = \sqrt{\frac{1}{5+a_+}}|\pm k, \pm
l\rangle$. Here $|\psi_{\pm}\rangle=\frac{1}{\sqrt{2}}(|k,-k\rangle
\pm |l,-l\rangle)$ and $a_+ = (2M-N-1)/(N-1)$. Defining $\gamma^2
=1/(5+a_+),$ one obtains

\begin{equation}
C = \gamma^2\left(
              \begin{array}{cccccc}
                a_+ & 0 & 0 & 0 & 0 & 0 \\
                0 & -1 & 0 & 0 & 0 & 0 \\
                0 & 0 & 0 & -1 & 0 & 0 \\
                0 & 0 & -1 & 0 & 0 & 0 \\
                0 & 0 & 0 & 0 & 0 & 1 \\
                0 & 0 & 0 & 0 & 1 & 0 \\
              \end{array}
            \right),
\end{equation}
with spectrum $\spec (C) = \gamma^2 \{a_+,1, 1, -1, -1, -1\}$. For
$M \leq N$ the state $|\Psi_{BCS}^{(N)}\rangle$ is separable, so we
can take $M >N$. Hence,  $a_+\gamma^2$ is the eigenvalue with
biggest absolute value. According to Thm. \ref{ent-of-particles-4},
the reduced state in the subspace of the four modes is entangled iff
$|c_1|\leq \sum_{i=2}^r |c_i|$. For our example, this holds iff
$M>3N-2$.
\end{proof}


\end{appendix}

\bibliography{Pairing-s1}

\begin{thebibliography}{60}
\expandafter\ifx\csname natexlab\endcsname\relax\def\natexlab#1{#1}\fi
\expandafter\ifx\csname bibnamefont\endcsname\relax
  \def\bibnamefont#1{#1}\fi
\expandafter\ifx\csname bibfnamefont\endcsname\relax
  \def\bibfnamefont#1{#1}\fi
\expandafter\ifx\csname citenamefont\endcsname\relax
  \def\citenamefont#1{#1}\fi
\expandafter\ifx\csname url\endcsname\relax
  \def\url#1{\texttt{#1}}\fi
\expandafter\ifx\csname urlprefix\endcsname\relax\def\urlprefix{URL }\fi
\providecommand{\bibinfo}[2]{#2}
\providecommand{\eprint}[2][]{\url{#2}}

\bibitem[{\citenamefont{Bardeen et~al.}(1957)\citenamefont{Bardeen, Cooper, and
  Schrieffer}}]{PhysRev.108.1175}
\bibinfo{author}{\bibfnamefont{J.}~\bibnamefont{Bardeen}},
  \bibinfo{author}{\bibfnamefont{L.~N.} \bibnamefont{Cooper}},
  \bibnamefont{and} \bibinfo{author}{\bibfnamefont{J.~R.}
  \bibnamefont{Schrieffer}}, \bibinfo{journal}{Phys. Rev.}
  \textbf{\bibinfo{volume}{108}}, \bibinfo{pages}{1175} (\bibinfo{year}{1957}).

\bibitem[{\citenamefont{Greiner et~al.}(2003)\citenamefont{Greiner, Regal, and
  Jin}}]{ReaglmBEC}
\bibinfo{author}{\bibfnamefont{M.}~\bibnamefont{Greiner}},
  \bibinfo{author}{\bibfnamefont{C.}~\bibnamefont{Regal}}, \bibnamefont{and}
  \bibinfo{author}{\bibfnamefont{D.}~\bibnamefont{Jin}},
  \bibinfo{journal}{Nature} \textbf{\bibinfo{volume}{426}},
  \bibinfo{pages}{537} (\bibinfo{year}{2003}).

\bibitem[{\citenamefont{Jochim et~al.}(2003)\citenamefont{Jochim, Bartenstein,
  Altmeyer, Hendl, Riedl, Chin, Denschlag, and Grimm}}]{JochimmBEC}
\bibinfo{author}{\bibfnamefont{S.}~\bibnamefont{Jochim}},
  \bibinfo{author}{\bibfnamefont{M.}~\bibnamefont{Bartenstein}},
  \bibinfo{author}{\bibfnamefont{A.}~\bibnamefont{Altmeyer}},
  \bibinfo{author}{\bibfnamefont{G.}~\bibnamefont{Hendl}},
  \bibinfo{author}{\bibfnamefont{S.}~\bibnamefont{Riedl}},
  \bibinfo{author}{\bibfnamefont{C.}~\bibnamefont{Chin}},
  \bibinfo{author}{\bibfnamefont{J.~H.} \bibnamefont{Denschlag}},
  \bibnamefont{and} \bibinfo{author}{\bibfnamefont{R.}~\bibnamefont{Grimm}},
  \bibinfo{journal}{Science} \textbf{\bibinfo{volume}{302}},
  \bibinfo{pages}{2101} (\bibinfo{year}{2003}).

\bibitem[{\citenamefont{Zwierlein et~al.}(2003)\citenamefont{Zwierlein, Stan,
  Schunck, Raupach, Gupta, Hadzibabic, and Ketterle}}]{PhysRevLett.91.250401}
\bibinfo{author}{\bibfnamefont{M.~W.} \bibnamefont{Zwierlein}},
  \bibinfo{author}{\bibfnamefont{C.~A.} \bibnamefont{Stan}},
  \bibinfo{author}{\bibfnamefont{C.~H.} \bibnamefont{Schunck}},
  \bibinfo{author}{\bibfnamefont{S.~M.~F.} \bibnamefont{Raupach}},
  \bibinfo{author}{\bibfnamefont{S.}~\bibnamefont{Gupta}},
  \bibinfo{author}{\bibfnamefont{Z.}~\bibnamefont{Hadzibabic}},
  \bibnamefont{and} \bibinfo{author}{\bibfnamefont{W.}~\bibnamefont{Ketterle}},
  \bibinfo{journal}{Phys. Rev. Lett.} \textbf{\bibinfo{volume}{91}},
  \bibinfo{pages}{250401} (\bibinfo{year}{2003}).

\bibitem[{\citenamefont{Zwierlein
  et~al.}(2005{\natexlab{a}})\citenamefont{Zwierlein, Abo-Shaeer, Schirotzek,
  Schunck, and Ketterle}}]{Vortices}
\bibinfo{author}{\bibfnamefont{M.}~\bibnamefont{Zwierlein}},
  \bibinfo{author}{\bibfnamefont{J.}~\bibnamefont{Abo-Shaeer}},
  \bibinfo{author}{\bibfnamefont{A.}~\bibnamefont{Schirotzek}},
  \bibinfo{author}{\bibfnamefont{C.}~\bibnamefont{Schunck}}, \bibnamefont{and}
  \bibinfo{author}{\bibfnamefont{W.}~\bibnamefont{Ketterle}},
  \bibinfo{journal}{Nature} \textbf{\bibinfo{volume}{435}},
  \bibinfo{pages}{1047} (\bibinfo{year}{2005}{\natexlab{a}}).

\bibitem[{\citenamefont{Zwierlein
  et~al.}(2005{\natexlab{b}})\citenamefont{Zwierlein, Schunck, Schirotzek, and
  Ketterle}}]{SF-direct}
\bibinfo{author}{\bibfnamefont{M.}~\bibnamefont{Zwierlein}},
  \bibinfo{author}{\bibfnamefont{C.}~\bibnamefont{Schunck}},
  \bibinfo{author}{\bibfnamefont{A.}~\bibnamefont{Schirotzek}},
  \bibnamefont{and} \bibinfo{author}{\bibfnamefont{W.}~\bibnamefont{Ketterle}},
  \bibinfo{journal}{Nature} \textbf{\bibinfo{volume}{442}},
  \bibinfo{pages}{154} (\bibinfo{year}{2005}{\natexlab{b}}).

\bibitem[{\citenamefont{Partridge et~al.}(2005)\citenamefont{Partridge,
  Strecker, Kamar, Jack, and Hulet}}]{partridge:020404}
\bibinfo{author}{\bibfnamefont{G.~B.} \bibnamefont{Partridge}},
  \bibinfo{author}{\bibfnamefont{K.~E.} \bibnamefont{Strecker}},
  \bibinfo{author}{\bibfnamefont{R.~I.} \bibnamefont{Kamar}},
  \bibinfo{author}{\bibfnamefont{M.~W.} \bibnamefont{Jack}}, \bibnamefont{and}
  \bibinfo{author}{\bibfnamefont{R.~G.} \bibnamefont{Hulet}},
  \bibinfo{journal}{Physical Review Letters} \textbf{\bibinfo{volume}{95}},
  \bibinfo{eid}{020404} (pages~\bibinfo{numpages}{4}) (\bibinfo{year}{2005}).

\bibitem[{\citenamefont{Zwierlein et~al.}(2004)\citenamefont{Zwierlein, Stan,
  Schunck, Raupach, Kerman, and Ketterle}}]{zwierlein:120403}
\bibinfo{author}{\bibfnamefont{M.~W.} \bibnamefont{Zwierlein}},
  \bibinfo{author}{\bibfnamefont{C.~A.} \bibnamefont{Stan}},
  \bibinfo{author}{\bibfnamefont{C.~H.} \bibnamefont{Schunck}},
  \bibinfo{author}{\bibfnamefont{S.~M.~F.} \bibnamefont{Raupach}},
  \bibinfo{author}{\bibfnamefont{A.~J.} \bibnamefont{Kerman}},
  \bibnamefont{and} \bibinfo{author}{\bibfnamefont{W.}~\bibnamefont{Ketterle}},
  \bibinfo{journal}{Physical Review Letters} \textbf{\bibinfo{volume}{92}},
  \bibinfo{eid}{120403} (pages~\bibinfo{numpages}{4}) (\bibinfo{year}{2004}).

\bibitem[{\citenamefont{Regal et~al.}(2004)\citenamefont{Regal, Greiner, and
  Jin}}]{regal:040403}
\bibinfo{author}{\bibfnamefont{C.~A.} \bibnamefont{Regal}},
  \bibinfo{author}{\bibfnamefont{M.}~\bibnamefont{Greiner}}, \bibnamefont{and}
  \bibinfo{author}{\bibfnamefont{D.~S.} \bibnamefont{Jin}},
  \bibinfo{journal}{Physical Review Letters} \textbf{\bibinfo{volume}{92}},
  \bibinfo{eid}{040403} (pages~\bibinfo{numpages}{4}) (\bibinfo{year}{2004}).

\bibitem[{\citenamefont{Partridge
  et~al.}(2006{\natexlab{a}})\citenamefont{Partridge, Li, Kamar, Liao, and
  Hulet}}]{partridge_27}
\bibinfo{author}{\bibfnamefont{G.~B.} \bibnamefont{Partridge}},
  \bibinfo{author}{\bibfnamefont{W.}~\bibnamefont{Li}},
  \bibinfo{author}{\bibfnamefont{R.~I.} \bibnamefont{Kamar}},
  \bibinfo{author}{\bibfnamefont{Y.}~\bibnamefont{Liao}}, \bibnamefont{and}
  \bibinfo{author}{\bibfnamefont{R.~G.} \bibnamefont{Hulet}},
  \bibinfo{journal}{Science} \textbf{\bibinfo{volume}{27}},
  \bibinfo{eid}{120403} (\bibinfo{year}{2006}{\natexlab{a}}).

\bibitem[{\citenamefont{Zwierlein and Ketterle}(2006)}]{zwierlein_response}
\bibinfo{author}{\bibfnamefont{M.~W.} \bibnamefont{Zwierlein}}
  \bibnamefont{and} \bibinfo{author}{\bibfnamefont{W.}~\bibnamefont{Ketterle}},
  \bibinfo{journal}{Science} \textbf{\bibinfo{volume}{314}},
  \bibinfo{pages}{54a} (\bibinfo{year}{2006}).

\bibitem[{\citenamefont{Partridge
  et~al.}(2006{\natexlab{b}})\citenamefont{Partridge, Li, Kamar, Liao, and
  Hulet}}]{patridge_response}
\bibinfo{author}{\bibfnamefont{G.~B.} \bibnamefont{Partridge}},
  \bibinfo{author}{\bibfnamefont{W.}~\bibnamefont{Li}},
  \bibinfo{author}{\bibfnamefont{R.~I.} \bibnamefont{Kamar}},
  \bibinfo{author}{\bibfnamefont{Y.}~\bibnamefont{Liao}}, \bibnamefont{and}
  \bibinfo{author}{\bibfnamefont{R.~G.} \bibnamefont{Hulet}},
  \bibinfo{journal}{Science} \textbf{\bibinfo{volume}{314}},
  \bibinfo{pages}{54b} (\bibinfo{year}{2006}{\natexlab{b}}).

\bibitem[{\citenamefont{Schunck et~al.}(2007)\citenamefont{Schunck, Shin,
  Schirotzek, Zwierlein, and Ketterle}}]{Pairing-no-SF}
\bibinfo{author}{\bibfnamefont{C.}~\bibnamefont{Schunck}},
  \bibinfo{author}{\bibfnamefont{Y.}~\bibnamefont{Shin}},
  \bibinfo{author}{\bibfnamefont{A.}~\bibnamefont{Schirotzek}},
  \bibinfo{author}{\bibfnamefont{M.}~\bibnamefont{Zwierlein}},
  \bibnamefont{and} \bibinfo{author}{\bibfnamefont{W.}~\bibnamefont{Ketterle}},
  \bibinfo{journal}{Science} \textbf{\bibinfo{volume}{316}},
  \bibinfo{pages}{867} (\bibinfo{year}{2007}).

\bibitem[{\citenamefont{Zanardi}(2002)}]{PhysRevA.65.042101}
\bibinfo{author}{\bibfnamefont{P.}~\bibnamefont{Zanardi}},
  \bibinfo{journal}{Phys. Rev. A} \textbf{\bibinfo{volume}{65}},
  \bibinfo{pages}{042101} (\bibinfo{year}{2002}).

\bibitem[{\citenamefont{Zanardi and Wang}(2002)}]{ZanardiJP}
\bibinfo{author}{\bibfnamefont{P.}~\bibnamefont{Zanardi}} \bibnamefont{and}
  \bibinfo{author}{\bibfnamefont{X.}~\bibnamefont{Wang}}, \bibinfo{journal}{J.
  Phys. A:Math. Gen.} \textbf{\bibinfo{volume}{35}}, \bibinfo{pages}{7947}
  (\bibinfo{year}{2002}).

\bibitem[{\citenamefont{Larsson and Johannesson}(2006)}]{larsson:042320}
\bibinfo{author}{\bibfnamefont{D.}~\bibnamefont{Larsson}} \bibnamefont{and}
  \bibinfo{author}{\bibfnamefont{H.}~\bibnamefont{Johannesson}},
  \bibinfo{journal}{Physical Review A (Atomic, Molecular, and Optical Physics)}
  \textbf{\bibinfo{volume}{73}}, \bibinfo{eid}{042320}
  (pages~\bibinfo{numpages}{4}) (\bibinfo{year}{2006}).

\bibitem[{\citenamefont{Wolf}(2006)}]{wolf:010404}
\bibinfo{author}{\bibfnamefont{M.~M.} \bibnamefont{Wolf}},
  \bibinfo{journal}{Physical Review Letters} \textbf{\bibinfo{volume}{96}},
  \bibinfo{eid}{010404} (pages~\bibinfo{numpages}{4}) (\bibinfo{year}{2006}).

\bibitem[{\citenamefont{Gioev and Klich}(2006)}]{gioev:100503}
\bibinfo{author}{\bibfnamefont{D.}~\bibnamefont{Gioev}} \bibnamefont{and}
  \bibinfo{author}{\bibfnamefont{I.}~\bibnamefont{Klich}},
  \bibinfo{journal}{Physical Review Letters} \textbf{\bibinfo{volume}{96}},
  \bibinfo{eid}{100503} (pages~\bibinfo{numpages}{4}) (\bibinfo{year}{2006}).

\bibitem[{\citenamefont{Cramer et~al.}(2007)\citenamefont{Cramer, Eisert, and
  Plenio}}]{cramer:220603}
\bibinfo{author}{\bibfnamefont{M.}~\bibnamefont{Cramer}},
  \bibinfo{author}{\bibfnamefont{J.}~\bibnamefont{Eisert}}, \bibnamefont{and}
  \bibinfo{author}{\bibfnamefont{M.~B.} \bibnamefont{Plenio}},
  \bibinfo{journal}{Physical Review Letters} \textbf{\bibinfo{volume}{98}},
  \bibinfo{eid}{220603} (pages~\bibinfo{numpages}{4}) (\bibinfo{year}{2007}).

\bibitem[{\citenamefont{Shi}(2003)}]{PhysRevA.67.024301}
\bibinfo{author}{\bibfnamefont{Y.}~\bibnamefont{Shi}}, \bibinfo{journal}{Phys.
  Rev. A} \textbf{\bibinfo{volume}{67}}, \bibinfo{pages}{024301}
  (\bibinfo{year}{2003}).

\bibitem[{\citenamefont{Pa\ifmmode~\check{s}\else \v{s}\fi{}kauskas and
  You}(2001)}]{PhysRevA.64.042310}
\bibinfo{author}{\bibfnamefont{R.}~\bibnamefont{Pa\ifmmode~\check{s}\else
  \v{s}\fi{}kauskas}} \bibnamefont{and}
  \bibinfo{author}{\bibfnamefont{L.}~\bibnamefont{You}},
  \bibinfo{journal}{Phys. Rev. A} \textbf{\bibinfo{volume}{64}},
  \bibinfo{pages}{042310} (\bibinfo{year}{2001}).

\bibitem[{\citenamefont{Banuls et~al.}(2007)\citenamefont{Banuls, Cirac, and
  Wolf}}]{banuls:022311}
\bibinfo{author}{\bibfnamefont{M.-C.} \bibnamefont{Banuls}},
  \bibinfo{author}{\bibfnamefont{J.~I.} \bibnamefont{Cirac}}, \bibnamefont{and}
  \bibinfo{author}{\bibfnamefont{M.~M.} \bibnamefont{Wolf}},
  \bibinfo{journal}{Physical Review A (Atomic, Molecular, and Optical Physics)}
  \textbf{\bibinfo{volume}{76}}, \bibinfo{eid}{022311}
  (pages~\bibinfo{numpages}{13}) (\bibinfo{year}{2007}).

\bibitem[{\citenamefont{Wiseman and Vaccaro}(2003)}]{PhysRevLett.91.097902}
\bibinfo{author}{\bibfnamefont{H.~M.} \bibnamefont{Wiseman}} \bibnamefont{and}
  \bibinfo{author}{\bibfnamefont{J.~A.} \bibnamefont{Vaccaro}},
  \bibinfo{journal}{Phys. Rev. Lett.} \textbf{\bibinfo{volume}{91}},
  \bibinfo{pages}{097902} (\bibinfo{year}{2003}).

\bibitem[{\citenamefont{Schliemann
  et~al.}(2001{\natexlab{a}})\citenamefont{Schliemann, Cirac,
  Ku\ifmmode~\acute{s}\else \'{s}\fi{}, Lewenstein, and
  Loss}}]{PhysRevA.64.022303}
\bibinfo{author}{\bibfnamefont{J.}~\bibnamefont{Schliemann}},
  \bibinfo{author}{\bibfnamefont{J.~I.} \bibnamefont{Cirac}},
  \bibinfo{author}{\bibfnamefont{M.}~\bibnamefont{Ku\ifmmode~\acute{s}\else
  \'{s}\fi{}}}, \bibinfo{author}{\bibfnamefont{M.}~\bibnamefont{Lewenstein}},
  \bibnamefont{and} \bibinfo{author}{\bibfnamefont{D.}~\bibnamefont{Loss}},
  \bibinfo{journal}{Phys. Rev. A} \textbf{\bibinfo{volume}{64}},
  \bibinfo{pages}{022303} (\bibinfo{year}{2001}{\natexlab{a}}).

\bibitem[{\citenamefont{Eckert et~al.}(2002)\citenamefont{Eckert, Schliemann,
  Bruss, and Lewenstein}}]{ESBL-Indistinguishable}
\bibinfo{author}{\bibfnamefont{K.}~\bibnamefont{Eckert}},
  \bibinfo{author}{\bibfnamefont{J.}~\bibnamefont{Schliemann}},
  \bibinfo{author}{\bibfnamefont{D.}~\bibnamefont{Bruss}}, \bibnamefont{and}
  \bibinfo{author}{\bibfnamefont{M.}~\bibnamefont{Lewenstein}},
  \bibinfo{journal}{Annals of Physics} \textbf{\bibinfo{volume}{299}},
  \bibinfo{eid}{88} (\bibinfo{year}{2002}).

\bibitem[{\citenamefont{Dowling et~al.}(2006)\citenamefont{Dowling, Doherty,
  and Wiseman}}]{dowling:052323}
\bibinfo{author}{\bibfnamefont{M.~R.} \bibnamefont{Dowling}},
  \bibinfo{author}{\bibfnamefont{A.~C.} \bibnamefont{Doherty}},
  \bibnamefont{and} \bibinfo{author}{\bibfnamefont{H.~M.}
  \bibnamefont{Wiseman}}, \bibinfo{journal}{Physical Review A (Atomic,
  Molecular, and Optical Physics)} \textbf{\bibinfo{volume}{73}},
  \bibinfo{eid}{052323} (pages~\bibinfo{numpages}{12}) (\bibinfo{year}{2006}).

\bibitem[{\citenamefont{Schliemann
  et~al.}(2001{\natexlab{b}})\citenamefont{Schliemann, Loss, and
  MacDonald}}]{PhysRevB.63.085311}
\bibinfo{author}{\bibfnamefont{J.}~\bibnamefont{Schliemann}},
  \bibinfo{author}{\bibfnamefont{D.}~\bibnamefont{Loss}}, \bibnamefont{and}
  \bibinfo{author}{\bibfnamefont{A.~H.} \bibnamefont{MacDonald}},
  \bibinfo{journal}{Phys. Rev. B} \textbf{\bibinfo{volume}{63}},
  \bibinfo{pages}{085311} (\bibinfo{year}{2001}{\natexlab{b}}).

\bibitem[{\citenamefont{Schuch et~al.}(2004{\natexlab{a}})\citenamefont{Schuch,
  Verstraete, and Cirac}}]{PhysRevA.70.042310}
\bibinfo{author}{\bibfnamefont{N.}~\bibnamefont{Schuch}},
  \bibinfo{author}{\bibfnamefont{F.}~\bibnamefont{Verstraete}},
  \bibnamefont{and} \bibinfo{author}{\bibfnamefont{J.~I.} \bibnamefont{Cirac}},
  \bibinfo{journal}{Phys. Rev. A} \textbf{\bibinfo{volume}{70}},
  \bibinfo{pages}{042310} (\bibinfo{year}{2004}{\natexlab{a}}).

\bibitem[{\citenamefont{Schuch et~al.}(2004{\natexlab{b}})\citenamefont{Schuch,
  Verstraete, and Cirac}}]{PhysRevLett.92.087904}
\bibinfo{author}{\bibfnamefont{N.}~\bibnamefont{Schuch}},
  \bibinfo{author}{\bibfnamefont{F.}~\bibnamefont{Verstraete}},
  \bibnamefont{and} \bibinfo{author}{\bibfnamefont{J.~I.} \bibnamefont{Cirac}},
  \bibinfo{journal}{Phys. Rev. Lett.} \textbf{\bibinfo{volume}{92}},
  \bibinfo{pages}{087904} (\bibinfo{year}{2004}{\natexlab{b}}).

\bibitem[{\citenamefont{Bravyi}(2005)}]{LinearOptics}
\bibinfo{author}{\bibfnamefont{S.}~\bibnamefont{Bravyi}},
  \bibinfo{journal}{Quantum Inf. and Comp.} \textbf{\bibinfo{volume}{5}},
  \bibinfo{pages}{216} (\bibinfo{year}{2005}).

\bibitem[{\citenamefont{Mahan}(2000)}]{Mahan00}
\bibinfo{author}{\bibfnamefont{G.~D.} \bibnamefont{Mahan}},
  \emph{\bibinfo{title}{Many-Particle Physics}} (\bibinfo{publisher}{Kluwer
  Academic}, \bibinfo{year}{2000}), \bibinfo{edition}{3rd} ed.

\bibitem[{\citenamefont{Amico et~al.}(2008)\citenamefont{Amico, Fazio,
  Osterloh, and Vedral}}]{amico:517}
\bibinfo{author}{\bibfnamefont{L.}~\bibnamefont{Amico}},
  \bibinfo{author}{\bibfnamefont{R.}~\bibnamefont{Fazio}},
  \bibinfo{author}{\bibfnamefont{A.}~\bibnamefont{Osterloh}}, \bibnamefont{and}
  \bibinfo{author}{\bibfnamefont{V.}~\bibnamefont{Vedral}},
  \bibinfo{journal}{Rev.\ Mod.\ Phys.} \textbf{\bibinfo{volume}{80}},
  \bibinfo{eid}{517} (pages~\bibinfo{numpages}{60}) (\bibinfo{year}{2008}),
  \urlprefix\url{http://link.aps.org/abstract/RMP/v80/p517}.

\bibitem[{\citenamefont{Bach et~al.}(1994)\citenamefont{Bach, Lieb, and
  Solovej}}]{Gaussian}
\bibinfo{author}{\bibfnamefont{V.}~\bibnamefont{Bach}},
  \bibinfo{author}{\bibfnamefont{E.}~\bibnamefont{Lieb}}, \bibnamefont{and}
  \bibinfo{author}{\bibfnamefont{J.}~\bibnamefont{Solovej}},
  \bibinfo{journal}{Journal of Statistical Physics}
  \textbf{\bibinfo{volume}{76}}, \bibinfo{pages}{3} (\bibinfo{year}{1994}).

\bibitem[{\citenamefont{Bloch and Messiah}(1962)}]{BlochMessiah}
\bibinfo{author}{\bibfnamefont{C.}~\bibnamefont{Bloch}} \bibnamefont{and}
  \bibinfo{author}{\bibfnamefont{A.}~\bibnamefont{Messiah}},
  \bibinfo{journal}{Nuclear Physics} \textbf{\bibinfo{volume}{39}},
  \bibinfo{pages}{95} (\bibinfo{year}{1962}).

\bibitem[{\citenamefont{Coleman and Yukalov}(2000)}]{Coleman}
\bibinfo{author}{\bibfnamefont{A.}~\bibnamefont{Coleman}} \bibnamefont{and}
  \bibinfo{author}{\bibfnamefont{V.}~\bibnamefont{Yukalov}},
  \emph{\bibinfo{title}{Reduced Density Matrices}}
  (\bibinfo{publisher}{Springer-Verlag}, \bibinfo{year}{2000}).

\bibitem[{\citenamefont{Horodecki et~al.}(1996)\citenamefont{Horodecki,
  Horodecki, and Horodecki}}]{HHH96}
\bibinfo{author}{\bibfnamefont{M.}~\bibnamefont{Horodecki}},
  \bibinfo{author}{\bibfnamefont{P.}~\bibnamefont{Horodecki}},
  \bibnamefont{and}
  \bibinfo{author}{\bibfnamefont{R.}~\bibnamefont{Horodecki}},
  \bibinfo{journal}{Phys. Lett. A} \textbf{\bibinfo{volume}{223}},
  \bibinfo{pages}{1} (\bibinfo{year}{1996}), \eprint{quant-ph/9605038}.

\bibitem[{\citenamefont{Rudin}(1991)}]{Rudin91}
\bibinfo{author}{\bibfnamefont{W.}~\bibnamefont{Rudin}},
  \emph{\bibinfo{title}{Functional Analysis}}
  (\bibinfo{publisher}{McGraw-Hill}, \bibinfo{year}{1991}),
  \bibinfo{edition}{2nd} ed.

\bibitem[{\citenamefont{Rockafellar}(1970)}]{Rockafellar}
\bibinfo{author}{\bibfnamefont{R.}~\bibnamefont{Rockafellar}},
  \emph{\bibinfo{title}{Convex Analysis}} (\bibinfo{publisher}{Princeton
  University Press}, \bibinfo{year}{1970}).

\bibitem[{\citenamefont{Anderson}(1958)}]{PhysRev.112.1900}
\bibinfo{author}{\bibfnamefont{P.~W.} \bibnamefont{Anderson}},
  \bibinfo{journal}{Phys. Rev.} \textbf{\bibinfo{volume}{112}},
  \bibinfo{pages}{1900} (\bibinfo{year}{1958}).

\bibitem[{\citenamefont{Barankov and Levitov}(2004)}]{PhysRevLett.93.130403}
\bibinfo{author}{\bibfnamefont{R.~A.} \bibnamefont{Barankov}} \bibnamefont{and}
  \bibinfo{author}{\bibfnamefont{L.~S.} \bibnamefont{Levitov}},
  \bibinfo{journal}{Phys. Rev. Lett.} \textbf{\bibinfo{volume}{93}},
  \bibinfo{pages}{130403} (\bibinfo{year}{2004}).

\bibitem[{\citenamefont{Bertlmann et~al.}(2005)\citenamefont{Bertlmann,
  Durstberger, Hiesmayr, and Krammer}}]{bertlmann05}
\bibinfo{author}{\bibfnamefont{R.~A.} \bibnamefont{Bertlmann}},
  \bibinfo{author}{\bibfnamefont{K.}~\bibnamefont{Durstberger}},
  \bibinfo{author}{\bibfnamefont{B.~C.} \bibnamefont{Hiesmayr}},
  \bibnamefont{and} \bibinfo{author}{\bibfnamefont{P.}~\bibnamefont{Krammer}},
  \bibinfo{journal}{Physical Review A (Atomic, Molecular, and Optical Physics)}
  \textbf{\bibinfo{volume}{72}}, \bibinfo{eid}{052331}
  (pages~\bibinfo{numpages}{9}) (\bibinfo{year}{2005}),
  \urlprefix\url{http://link.aps.org/abstract/PRA/v72/e052331}.

\bibitem[{\citenamefont{Leggett}(1980)}]{Leggett}
\bibinfo{author}{\bibfnamefont{A.}~\bibnamefont{Leggett}},
  \emph{\bibinfo{title}{Modern Trends in the Theory of Condensed Matter}}
  (\bibinfo{publisher}{A. Pekalsky, R. Przystawa (Springer-Verlag, Berlin)},
  \bibinfo{year}{1980}).

\bibitem[{\citenamefont{Horn and Johnson}(1985)}]{H&J}
\bibinfo{author}{\bibfnamefont{R.}~\bibnamefont{Horn}} \bibnamefont{and}
  \bibinfo{author}{\bibfnamefont{C.}~\bibnamefont{Johnson}},
  \emph{\bibinfo{title}{Matrix Analysis}} (\bibinfo{publisher}{Cambridge
  University Press}, \bibinfo{year}{1985}).

\bibitem[{\citenamefont{Wineland et~al.}(1994)\citenamefont{Wineland,
  Bollinger, Itano, and Heinzen}}]{PhysRevA.50.67}
\bibinfo{author}{\bibfnamefont{D.~J.} \bibnamefont{Wineland}},
  \bibinfo{author}{\bibfnamefont{J.~J.} \bibnamefont{Bollinger}},
  \bibinfo{author}{\bibfnamefont{W.~M.} \bibnamefont{Itano}}, \bibnamefont{and}
  \bibinfo{author}{\bibfnamefont{D.~J.} \bibnamefont{Heinzen}},
  \bibinfo{journal}{Phys. Rev. A} \textbf{\bibinfo{volume}{50}},
  \bibinfo{pages}{67} (\bibinfo{year}{1994}).

\bibitem[{\citenamefont{Bollinger et~al.}(1996)\citenamefont{Bollinger, Itano,
  Wineland, and Heinzen}}]{PhysRevA.54.R4649}
\bibinfo{author}{\bibfnamefont{J.~J.} \bibnamefont{Bollinger}},
  \bibinfo{author}{\bibfnamefont{W.~M.} \bibnamefont{Itano}},
  \bibinfo{author}{\bibfnamefont{D.~J.} \bibnamefont{Wineland}},
  \bibnamefont{and} \bibinfo{author}{\bibfnamefont{D.~J.}
  \bibnamefont{Heinzen}}, \bibinfo{journal}{Phys. Rev. A}
  \textbf{\bibinfo{volume}{54}}, \bibinfo{pages}{R4649} (\bibinfo{year}{1996}).

\bibitem[{\citenamefont{Braunstein and Caves}(1994)}]{PhysRevLett.72.3439}
\bibinfo{author}{\bibfnamefont{S.~L.} \bibnamefont{Braunstein}}
  \bibnamefont{and} \bibinfo{author}{\bibfnamefont{C.~M.} \bibnamefont{Caves}},
  \bibinfo{journal}{Phys. Rev. Lett.} \textbf{\bibinfo{volume}{72}},
  \bibinfo{pages}{3439} (\bibinfo{year}{1994}).

\bibitem[{\citenamefont{Braunstein et~al.}(1996)\citenamefont{Braunstein,
  Caves, and Milburn}}]{BCM-long}
\bibinfo{author}{\bibfnamefont{S.~L.} \bibnamefont{Braunstein}},
  \bibinfo{author}{\bibfnamefont{C.~M.} \bibnamefont{Caves}}, \bibnamefont{and}
  \bibinfo{author}{\bibfnamefont{G.}~\bibnamefont{Milburn}},
  \bibinfo{journal}{Annals of Physics} \textbf{\bibinfo{volume}{247}},
  \bibinfo{pages}{135} (\bibinfo{year}{1996}).

\bibitem[{\citenamefont{Eckert et~al.}(2006)\citenamefont{Eckert, Hyllus,
  Bruss, Poulsen, Lewenstein, Jentsch, Muller, Rasel, and
  Ertmer}}]{eckert:013814}
\bibinfo{author}{\bibfnamefont{K.}~\bibnamefont{Eckert}},
  \bibinfo{author}{\bibfnamefont{P.}~\bibnamefont{Hyllus}},
  \bibinfo{author}{\bibfnamefont{D.}~\bibnamefont{Bruss}},
  \bibinfo{author}{\bibfnamefont{U.~V.} \bibnamefont{Poulsen}},
  \bibinfo{author}{\bibfnamefont{M.}~\bibnamefont{Lewenstein}},
  \bibinfo{author}{\bibfnamefont{C.}~\bibnamefont{Jentsch}},
  \bibinfo{author}{\bibfnamefont{T.}~\bibnamefont{Muller}},
  \bibinfo{author}{\bibfnamefont{E.~M.} \bibnamefont{Rasel}}, \bibnamefont{and}
  \bibinfo{author}{\bibfnamefont{W.}~\bibnamefont{Ertmer}},
  \bibinfo{journal}{Physical Review A (Atomic, Molecular, and Optical Physics)}
  \textbf{\bibinfo{volume}{73}}, \bibinfo{eid}{013814}
  (pages~\bibinfo{numpages}{10}) (\bibinfo{year}{2006}).

\bibitem[{\citenamefont{Yurke et~al.}(1986)\citenamefont{Yurke, McCall, and
  Klauder}}]{PhysRevA.33.4033}
\bibinfo{author}{\bibfnamefont{B.}~\bibnamefont{Yurke}},
  \bibinfo{author}{\bibfnamefont{S.~L.} \bibnamefont{McCall}},
  \bibnamefont{and} \bibinfo{author}{\bibfnamefont{J.~R.}
  \bibnamefont{Klauder}}, \bibinfo{journal}{Phys. Rev. A}
  \textbf{\bibinfo{volume}{33}}, \bibinfo{pages}{4033} (\bibinfo{year}{1986}).

\bibitem[{\citenamefont{Holland and Burnett}(1993)}]{PhysRevLett.71.1355}
\bibinfo{author}{\bibfnamefont{M.~J.} \bibnamefont{Holland}} \bibnamefont{and}
  \bibinfo{author}{\bibfnamefont{K.}~\bibnamefont{Burnett}},
  \bibinfo{journal}{Phys. Rev. Lett.} \textbf{\bibinfo{volume}{71}},
  \bibinfo{pages}{1355} (\bibinfo{year}{1993}).

\bibitem[{\citenamefont{Bouyer and Kasevich}(1997)}]{PhysRevA.56.R1083}
\bibinfo{author}{\bibfnamefont{P.}~\bibnamefont{Bouyer}} \bibnamefont{and}
  \bibinfo{author}{\bibfnamefont{M.~A.} \bibnamefont{Kasevich}},
  \bibinfo{journal}{Phys. Rev. A} \textbf{\bibinfo{volume}{56}},
  \bibinfo{pages}{R1083} (\bibinfo{year}{1997}).

\bibitem[{\citenamefont{Gerry}(2000)}]{PhysRevA.61.043811}
\bibinfo{author}{\bibfnamefont{C.~C.} \bibnamefont{Gerry}},
  \bibinfo{journal}{Phys. Rev. A} \textbf{\bibinfo{volume}{61}},
  \bibinfo{pages}{043811} (\bibinfo{year}{2000}).

\bibitem[{\citenamefont{Munro et~al.}(2002)\citenamefont{Munro, Nemoto,
  Milburn, and Braunstein}}]{PhysRevA.66.023819}
\bibinfo{author}{\bibfnamefont{W.~J.} \bibnamefont{Munro}},
  \bibinfo{author}{\bibfnamefont{K.}~\bibnamefont{Nemoto}},
  \bibinfo{author}{\bibfnamefont{G.~J.} \bibnamefont{Milburn}},
  \bibnamefont{and} \bibinfo{author}{\bibfnamefont{S.~L.}
  \bibnamefont{Braunstein}}, \bibinfo{journal}{Phys. Rev. A}
  \textbf{\bibinfo{volume}{66}}, \bibinfo{pages}{023819}
  (\bibinfo{year}{2002}).

\bibitem[{\citenamefont{Huelga et~al.}(1997)\citenamefont{Huelga, Macchiavello,
  Pellizzari, Ekert, Plenio, and Cirac}}]{PhysRevLett.79.3865}
\bibinfo{author}{\bibfnamefont{S.~F.} \bibnamefont{Huelga}},
  \bibinfo{author}{\bibfnamefont{C.}~\bibnamefont{Macchiavello}},
  \bibinfo{author}{\bibfnamefont{T.}~\bibnamefont{Pellizzari}},
  \bibinfo{author}{\bibfnamefont{A.~K.} \bibnamefont{Ekert}},
  \bibinfo{author}{\bibfnamefont{M.~B.} \bibnamefont{Plenio}},
  \bibnamefont{and} \bibinfo{author}{\bibfnamefont{J.~I.} \bibnamefont{Cirac}},
  \bibinfo{journal}{Phys. Rev. Lett.} \textbf{\bibinfo{volume}{79}},
  \bibinfo{pages}{3865} (\bibinfo{year}{1997}).

\bibitem[{\citenamefont{Yurke}(1986)}]{PhysRevLett.56.1515}
\bibinfo{author}{\bibfnamefont{B.}~\bibnamefont{Yurke}},
  \bibinfo{journal}{Phys. Rev. Lett.} \textbf{\bibinfo{volume}{56}},
  \bibinfo{pages}{1515} (\bibinfo{year}{1986}).

\bibitem[{\citenamefont{Altman et~al.}(2004)\citenamefont{Altman, Demler, and
  Lukin}}]{PhysRevA.70.013603}
\bibinfo{author}{\bibfnamefont{E.}~\bibnamefont{Altman}},
  \bibinfo{author}{\bibfnamefont{E.}~\bibnamefont{Demler}}, \bibnamefont{and}
  \bibinfo{author}{\bibfnamefont{M.~D.} \bibnamefont{Lukin}},
  \bibinfo{journal}{Phys. Rev. A} \textbf{\bibinfo{volume}{70}},
  \bibinfo{pages}{013603} (\bibinfo{year}{2004}).

\bibitem[{\citenamefont{Rom et~al.}(2006)\citenamefont{Rom, Best, van Oosten,
  Schneider, Foelling, Paredes, and Bloch}}]{Rom}
\bibinfo{author}{\bibfnamefont{T.}~\bibnamefont{Rom}},
  \bibinfo{author}{\bibfnamefont{T.}~\bibnamefont{Best}},
  \bibinfo{author}{\bibfnamefont{D.}~\bibnamefont{van Oosten}},
  \bibinfo{author}{\bibfnamefont{U.}~\bibnamefont{Schneider}},
  \bibinfo{author}{\bibfnamefont{S.}~\bibnamefont{Foelling}},
  \bibinfo{author}{\bibfnamefont{B.}~\bibnamefont{Paredes}}, \bibnamefont{and}
  \bibinfo{author}{\bibfnamefont{I.}~\bibnamefont{Bloch}},
  \bibinfo{journal}{Nature} \textbf{\bibinfo{volume}{444}},
  \bibinfo{pages}{733} (\bibinfo{year}{2006}).

\bibitem[{\citenamefont{Greiner et~al.}(2005)\citenamefont{Greiner, Regal,
  Stewart, and Jin}}]{greiner:110401}
\bibinfo{author}{\bibfnamefont{M.}~\bibnamefont{Greiner}},
  \bibinfo{author}{\bibfnamefont{C.~A.} \bibnamefont{Regal}},
  \bibinfo{author}{\bibfnamefont{J.~T.} \bibnamefont{Stewart}},
  \bibnamefont{and} \bibinfo{author}{\bibfnamefont{D.~S.} \bibnamefont{Jin}},
  \bibinfo{journal}{Physical Review Letters} \textbf{\bibinfo{volume}{94}},
  \bibinfo{eid}{110401} (pages~\bibinfo{numpages}{4}) (\bibinfo{year}{2005}).

\bibitem[{\citenamefont{Riedl et~al.}(2008)\citenamefont{Riedl, Guajardo,
  Kohstall, Altmeyer, Wright, Denschlag, Grimm, Bruun, and Smith}}]{RSK+08}
\bibinfo{author}{\bibfnamefont{S.}~\bibnamefont{Riedl}},
  \bibinfo{author}{\bibfnamefont{E.~R.~S.} \bibnamefont{Guajardo}},
  \bibinfo{author}{\bibfnamefont{C.}~\bibnamefont{Kohstall}},
  \bibinfo{author}{\bibfnamefont{A.}~\bibnamefont{Altmeyer}},
  \bibinfo{author}{\bibfnamefont{M.~J.} \bibnamefont{Wright}},
  \bibinfo{author}{\bibfnamefont{J.~H.} \bibnamefont{Denschlag}},
  \bibinfo{author}{\bibfnamefont{R.}~\bibnamefont{Grimm}},
  \bibinfo{author}{\bibfnamefont{G.~M.} \bibnamefont{Bruun}}, \bibnamefont{and}
  \bibinfo{author}{\bibfnamefont{H.}~\bibnamefont{Smith}}
  (\bibinfo{year}{2008}), \eprint{arXiv:0809.1814v1 [cond-mat.other]}.

\bibitem[{\citenamefont{Billingsley}(1995)}]{Billingsley95}
\bibinfo{author}{\bibfnamefont{P.}~\bibnamefont{Billingsley}},
  \emph{\bibinfo{title}{Probability and Measure}}
  (\bibinfo{publisher}{Wiley-Interscience}, \bibinfo{year}{1995}),
  \bibinfo{edition}{3rd} ed.

\end{thebibliography}
\end{document}